\newif\ifREVversion
\newif\ifDIFFversion
\newcommand{\cA}{\mathcal{A}}
\newcommand{\cB}{\mathcal{B}}
\newcommand{\cE}{\mathcal{E}}
\newcommand{\cF}{\mathcal{F}}
\newcommand{\cG}{\mathcal{G}}
\newcommand{\cH}{\mathcal{H}}
\newcommand{\cJ}{\mathcal{J}}
\newcommand{\cN}{\mathcal{N}}
\newcommand{\cS}{\mathcal{S}}
\newcommand{\bD}{\mathbb{D}}
\newcommand{\bE}{\mathbb{E}}
\newcommand{\bF}{\mathbb{F}}
\newcommand{\bG}{\mathbb{G}}
\newcommand{\bS}{\mathbb{S}}
\newcommand{\bOne}{\mathbbm{1}}
\newcommand{\bN}{\mathbb{N}}
\newcommand{\bR}{\mathbb{R}}
\newcommand{\bV}{\mathbb{V}}
\newcommand{\bW}{\mathbb{W}}
\newcommand{\bZ}{\mathbb{Z}}
\newcommand{\PP}{\boldsymbol{P}}
\newcommand{\QQ}{\boldsymbol{Q}}
\newcommand{\EE}{\boldsymbol{E}}
\numberwithin{equation}{section}
\DeclareMathOperator*{\argmax}{arg\,max}
\DeclareMathOperator*{\esssup}{ess\,sup}
\providecommand{\norm}[1]{\lVert#1\rVert}
\providecommand{\abs}[1]{\lvert#1\rvert}
\providecommand{\citep}[1]{(\cite{#1})}
\newcommand{\prT}[1]{(#1_t, \, 0 \leq{t} \leq{T})}
\newcommand{\Ito}{It\^{o}\xspace}
\providecommand{\keywords}[1]
{
  \small	
  \textbf{\textit{Keywords---}} #1
}
\title{Before and after default: information and optimal portfolio via anticipating calculus}
\author{Jos\'{e} A. Salmer\'{o}n
\thanks{Department of Statistics, University Carlos III of Madrid,  Av. de la Universidad, 30, 28911, Legan\'{e}s, Spain.
  \email{joseantonio.salmeron@uc3m.es}.}
\and Giulia Di Nunno
\thanks{Department of Mathematics, University of Oslo, P.O. Box 1053 Blindern, N-0316 Oslo, Norway and Norwegian School of Economics (NHH), Helleveien 30, N-5045 Bergen, Norway.
    \email{giulian@math.uio.no}.}
   \and Bernardo D'Auria 
    \thanks{Department of Mathematics “Tullio Levi Civita”, University of Padova, Via Trieste, 63, 35131 Padova PD, Italy.
   \email{bernardo.dauria@unipd.it}.}
}
\numberwithin{equation}{section}
\newtheorem{Theorem}{Theorem}[section]
\newtheorem{Assumption}[Theorem]{Assumption}
\newtheorem{Example}[Theorem]{Example}
\newtheorem{Proposition}[Theorem]{Proposition}
\newtheorem{Definition}[Theorem]{Definition}
\newtheorem{Corollary}[Theorem]{Corollary}
\newtheorem{Lemma}[Theorem]{Lemma}
\newtheorem{Remark}[Theorem]{Remark}
\numberwithin{Theorem}{section}
\providecommand{\citep}[1]{(\cite{#1})}
\newcommand{\email}[1]{\href{mailto:#1}{#1}}
\pgfplotsset{compat=1.16}
\begin{document}
%\includepdf[pages={1,2}]{cover-page.pdf}
\maketitle
\begin{abstract}
\ifDIFFversion
\input{abstract-diff.tex}
\else\ifREVversion
Default risk calculus plays a crucial role in portfolio optimization when the risky asset is under threat of bankruptcy.
However, traditional stochastic control techniques are not applicable in this scenario, and additional assumptions are required to obtain the optimal solution in a before-and-after default context. 
We propose an alternative approach using forward integration, which allows to avoid one of the restrictive assumptions, the \emph{Jacod density hypothesis}.
We demonstrate that, in the case of logarithmic utility, the weaker \emph{intensity hypothesis} is the appropriate condition for optimality.
Furthermore, we establish the semimartingale decomposition of the risky asset in the filtration that is progressively enlarged to accommodate the default process, under the assumption of the  existence of the optimal portfolio.
This work aims to provide valueable insights for developing effective risk management strategies when facing default risk.

%%%%%%%%%%%%%%%%%%%%%%%%%%%
%%% The previous abstract :
%%%%%%%%%%%%%%%%%%%%%%%%%%%

    %We work within the framework of the classical optimal portfolio problem using the forward integration.
    %with some modifications.
    %We consider an incomplete market with two assets in which the agent could invest in: a bond and a risky defaultable asset. For modelling the evolution of the risky asset, we include noises both from a Brownian motion and a compensated Poisson random measure.
    %Moreover, we add a default process associated to a random time $\tau$.
    %In this paper we have three main goals:
    %the first one is trying to extend the paper~\cite{diNunno2014} until time $T$ also under the set $\{\tau < T\}$.
    %The second one is to expand the approach given by~\cite{OksendalBiagini2005,DiNunnoOksendalProske2006} to a defaultable portfolio problem.
    %The third one is generalizing some previous works in default risk analysis with weaker assumptions, in particular, we refer the papers~\cite{ElKarouiJeanblancJiao2009,JiaoKharroubiPham2013}. The main assumption that we want to avoid is known in the literature as the Jacod's or density hypothesis.
    %Moreover, to achieve the optimization solution, we  use the approach before and after that appears in \cite{JiaoPham2011}.
    
\else
\input{abstract.tex}
\fi\fi
\end{abstract}

\keywords{
Optimal portfolio; 
Default risk;
Progressive enlargement;
Forward integrals; 
Malliavin calculus.
}

\maketitle

\ifDIFFversion
\input{content-diff.tex}
\else\ifREVversion
\section{Introduction}
In this paper, we address the problem of constructing the optimal portfolio consisting of a riskless bond and a defaultable, risky asset. 
The dynamics of the risky asset are modeled by a Brownian motion $W=\prT{W}$, a compensated Poisson random measure $\tilde N =\prT{\tilde N}$ and a default process 
$H = (\bOne_{\{\tau\leq t\}},\,0\leq t \leq T)$, 
where~$\tau$ is a random time.
We establish necessary and sufficient conditions for the existence of a local maximum portfolio in our market and provide a semimartingale decomposition of the processes~$W$ and~$\tilde N$ in the progressive enlarged filtration~$\bG := \bF\vee\sigma(\tau\wedge t)$ 
needed to account for the default time. 
Here $\bF$ denotes the minimum  complete right-continuous filtration generated by~$W$ and~$\tilde N$, which will be introduced in next section.

Previous works have approached the optimal portfolio problem under various assumptions.
In~\cite{merton1969}, standard stochastic control techniques were used under CRRA utility, assuming that the dynamics of the risky asset were modeled by a geometric Brownian motion.
A general solution was presented in \cite{kramkovSchachemayer1999}, for both a complete and incomplete market, with the risky asset assumed to be a semimartingale.
In a defaultable framework, an explicit solution was derived in \cite{Amendinger2003} for different utilities under initial enlargement of filtrations, assuming the so called \emph{Jacod density hypothesis}.
In~\cite{BELLAMY2001}, the optimal portfolio problem was solved under the assumption of discontinuous dynamics, modeled by a Lévy process. 
In~\cite{diNunno2014} a default process was introduced to model bankruptcy, and the trading was assumed to stop after the default. 
The knowledge of the default was not included in the agent's natural filtration, and the random time associated to the default process was assumed to satisfy the {density hypothesis}.
In~\cite{OksendalBiagini2005}, a different approach was taken. By assuming the existence of a local maximum solution, a semimartingale decomposition of the Brownian motion driving the risky asset was provided in the enlarged filtration.
%They managed to drop the {density hypothesis}, which is not trivial to verify.
This work was then extended in \cite{DiNunnoOksendalProske2006} by introducing jumps in the dynamics of the risky asset modeled with a compensated Poisson component.

Over the past twenty-five years, default risk analysis has been a subject of extensive study.
In~\cite{Merton74}, the case where the default time is a stopping time in the natural filtration of the price process is examined, while in~\cite{CooperMartin96}, a comparison is made with the case where the default time is a stopping time only in the larger filtration~$\bG$.
In~\cite{DuffieLando99}, the semimartingale decomposition of the process~$H$ in the  filtration~$\bG$ is explicitly computed, assuming the default time to be a specific hitting time.
%Some additional results about default risk are given in~\cite{ElliotJeanblancYor2000} and~\cite{JeanblancGapeev20}. 
%In the latter, the default is studied with respect to Swaps derivatives.\\
Currently, default risk analysis in the context of optimal control problems is an active area of research. A recent relevant work is, for instance, \cite{DiTella2020PRP}, in which the optimal portfolio problem is analyzed under the immersion hypothesis. This hypothesis assumes that $W$ and~$\tilde N$ are~$\bG$-martingales with respect to the progressive enlargement $\bG\supset\bF$. 
Subsequently, in \cite{diTella2021}, this assumption is relaxed by employing the weaker density hypothesis.
In our work, we aim to further relax the assumptions and investigate what happens when this hypothesis is not satisfied.

This paper aims to provide necessary and sufficient conditions for the existence of an optimal portfolio in a defaultable framework, without assuming the {density hypothesis}.
By incorporating the universal framework of the forward integration, that allows to include different information flows at the same time, we extend existing literature by providing a tool that allows to analyze situations where the default time distribution is singular with respect to the Lebesgue measure. For example, this framework enables us to study cases such as the default time being given by the maximum of the Brownian motion before the time horizon. 
Specifically, we show that under the logarithmic utility assumption, the existence of a local maximum for the before-default strategy is strictly related to the \emph{intensity hypothesis}, as detailed in Theorem~\ref{theo.opt.log.bef} and Examples~\ref{ex.argmax} and~\ref{ex.aksamit.half-final}.
Our proposed solution involves using the anticipating calculus in a non-standard way, where  the domain of the Malliavin derivative $D_t$ is extended to the bigger  Hida-Malliavin distribution space, $(\cS)^*$, allowing for $D_\cdot F\not\in L^2(dt\times\PP)$.
Subsection~\ref{subsec:white.noise} provides details on the anticipating calculus.
In the second part of the paper, we prove that, under a general utility function and
assuming the existence of an optimal portfolio, the Brownian motion and the compensated version of the Poisson random measure driving the risky asset are indeed semimartingales in the enlarged filtration $\bG$. 
%We show our computations for the Hunt processes, obtaining more explicit expressions.

To better frame our results in the context of the existing literature, we consider our framework similar to the one in \cite{diTella2021}, however we do not assume the density hypothesis.
We expand the trading time beyond the occurrence of the bankruptcy, so extending the work in~\cite{diNunno2014}. Thus, we are able to prove  the existence of a measure $\QQ$ under which the processes~$W$ and~$\tilde N$ are semimartingales in the enlarged filtration~$\bG$.
In comparison with~\cite{DiNunnoOksendalProske2006}, our model is more general as it includes the default process and considers a general utility function.
%Moreover, we extend the result in~\cite{OksendalBiagini2005}, which only consider purely continuous models.
%With respect to~\cite{OksendalBiagini2005}, our model is also more general as it includes a Poisson compensated random measure. 
%In particular, we are extending the chapter 16 in \cite{DiNunnoOksendalProske2009} with some of the ideas of chapter 9 in \cite{OksendalSulemBook2019}.
Moreover, unlike previous literature such as~\cite{OksendalBiagini2005,ElKarouiJeanblancJiao2009, JiaoKharroubiPham2013,JiaoPham2011}, %Kharroubi2012}, %CoculescuJeanblanc2012, 
which assumes the density hypothesis associated to the default in the context of stochastic control problems, this paper is the first to introduce a progressive enlargement of filtrations without this assumption.
We extend the relationship between the Malliavin trace, the Skorohod integral and the forward integration to the space~$(\cS)^*$, by allowing a progressive enlargement with respect to a larger class of random times, and so potentially enhancing the applicability of these concepts to a broader range of scenarios.
These results naturally carry mathematical value independently of the present context of application.

The paper is structured as follows:
In Section~\ref{sec:model}, we provide a detailed definition of the set-up and the main assumptions that are used throughout the paper. %In addition, it includes the basic concepts of the forward integral in subsection \ref{subsec:forward.integral}.
In Section~\ref{sec:opt.problem}, we present the optimization problem and the set of admissible strategies.
The problem is divided into two parts, before and after the occurrence of a default, and we rewrite it as a different stochastic control functional. 
In Subsection~\ref{subsec:white.noise}, we introduce key concepts of anticipating calculus in the context of the Hida-Malliavin distribution space.
Additionally, we explicitly solve the problem for the logarithmic utility in Proposition~\ref{prop.opt.portfolio.after} and Theorem~\ref{theo.opt.log.bef}.
In Section~\ref{sec:suff.aproach}, we provide sufficient conditions for the existence of the optimal strategy and the semimartingale decomposition for both the Brownian motion and the compensated Poisson random measure in the enlarged filtration and for a general utility function.
Finally, we conclude the paper with some important remarks.
Appendices~\ref{App.Brownian} and~\ref{App.Poisson} contain further details on the anticipating calculus subsection, covering both the Brownian motion and the compensated Poisson random measure.
\section{A market model with default risk}\label{sec:model}
We work in a complete filtered probability space~$(\Omega, \cG_T,\PP,\bG)$, where the filtration $\bG$ is assumed to be right-continuous. 
We assume that the agent is going to invest in a market composed by two assets in a finite horizon time $T>0$. 
The first one is a risk-less bond and the second one is a risky-defaultable stock. 
The dynamics of both are given by the following SDEs,
\begin{subequations}\label{def.assets}
\begin{align}
    \frac{dD_t}{ D_t} &=\rho_tdt\ ,\quad D_0 = 1\label{def.assets1}\\
    \frac{d^-S_t}{S_{t-}} &= \mu_tdt + \sigma_td^-W_t + \int_{\bR_0}\theta_t(z)\tilde N(d^-t,dz) + \kappa_tdH_t \ ,\quad S_0 = s_0 > 0 \label{def.assets2}
\end{align}
\end{subequations}
where $W=\prT{W}$ is a Brownian motion 
with $\bF^{W}$ its natural filtration.
The random field
$\tilde N(dt,dz) = N(dt,dz)-\nu(dz)dt$ is
the compensated version of a Poisson random measure
$N(dt,dz)$ with $\EE[N(dt,dz)]=\nu(dz)dt$ and $\bF^{N}$ its natural filtration. 
By $d^-W_t$ and $\tilde N(d^-t,dz)$ we denote the forward integration, we refer to Subsection~\ref{subsec:white.noise} for the details.
We assume that~$W$ and~$\tilde N$ are independent under the probability measure $\PP$. 
Moreover, the Borel measure~$\nu(dz)$ on~$\bR_0:=\mathbb{R} \backslash\{0\}$ is $\sigma$-finite and satisfies
$$ \int_{\bR_0} \left(1\wedge z^2\right)\nu(dz) <+\infty\ , $$
where $a\wedge b := \min\{a,b\}$. 
The filtration $\bF = \{\cF_t:0\leq t \leq T\}$ is generated by the Brownian motion and the compensated Poisson measure, and it is augmented by the zero $\PP$-measure sets, $\cN$:
$$\cF_t := \sigma (W_s ,\ N((s, t],B):\ 0\leq s \leq t, B\in\cB(\bR_0))\vee\mathcal{N}\ .$$
By $\cB(\bR_0)$ we denote the Borel $\sigma$-algebra on $\bR_0$.% and\hbox{$\cF^W_t\vee\cF^N_t = \cF_t$} $\forall\,t\in[0,T]$ and $\PP-$a.s.

The process $H=\prT{H}$ represents the default of the risky asset and it is defined by 
$H_t = \bOne_{\{\tau\leq t\}}$
where $\tau$ is assumed to be an \hbox{$\cF_T$-measurable} random time.
%We consider the progressive enlargement of $\bF$ by the knowledge given by if $\tau$ has happened or not. 
In this framework the market is incomplete as the risky asset has discontinuous paths provided by the jumps of the compensated Poisson random measure as well as the jump of the default process~$H$.
The information flow $\bG = \{\cG_t:0\leq t \leq T\}$ 
is the minimum right-continuous and complete filtration such that $\tau$ is a~$\bG$-stopping time,
i.e., $$\cG_t:=\cF_t\vee\sigma(\tau\wedge t)\ .$$
Note that $\bG\supset\bF$ and it is a progressive  enlargement of filtration only if $\tau$ is not an $\bF$-stopping time. 
We define the (right-continuous and complete) filtration of an agent who only observes the price process $S$ as $\bS=\{\cS_t:0\leq t \leq T\}$,
$$ \cS_t := \sigma (S_s:\ 0\leq s \leq t)\vee\mathcal{N}\ .$$
Note that $\cS_t\subset \cG_t$ but, in general, it is not equal. Moreover, $\cS_t\subset \cF_t$ only if $\tau$ is an $\bF$-stopping time, in which case, as we mentioned above, $\cF_t=\cG_t$. 
The different information flows require different assumptions on the nature of the driving noises, i.e., 
their being a semimartingale or not with respect to the various filtrations and consequently different assumptions of predictability on the integrands for an It\^o type calculus.
This justifies our use of the forward integration,
which provides a universal framework where different filtrations can be considered simultaneously and that reduces to the classical It\^o
integration when measurability and semimartingale conditions are matched.
This approach was first introduced in \cite{diNunno2014}.
Else, if one wishes to keep the framework of It\^o integration one has 
to prove the desired property that the default process $H$ is a semimartingale in $\bG$ and the compensator satisfies some continuity conditions, and naturally also $W$ and $\tilde N$ to be semimartingales.
This is the critical point for applying It\^o type calculus.
In this case, the Doob-Meyer decomposition claims that there exists a unique $\bG$-predictable increasing process $A$ such that 
$J:=(H_t-A_t,\ 0\leq t\leq T)$ is a~$\bG$-martingale, 
see for example Section~3.3 in \cite{ElliotJeanblancYor2000}. 
To short the notation, we define the $\bF$-\emph{Azéma} supermartingale as
\begin{equation}
    Z_t:=\PP(\tau > t \vert \cF_t)\ .
\end{equation}
As $\tau$ is $\cF_T$-measurable, note that $Z_T = \bOne_{\{\tau > T\}} = 1 - H_T$.
The following lemma gives an explicit expression for the compensator $A$, to see the details Proposition 2.15 of \cite{Aksamit2017} can be consulted.
\begin{Lemma}\label{H.decomposition}
The process $J = H - A$, with
\begin{equation}%\label{H.decomposition}
    J_t = H_t - \int_0^{t\wedge\tau}\frac{dA^{\tau}_s}{Z_{s-}}\ ,\quad 0\leq t \leq T\ ,
\end{equation}
is a $(\PP,\bG)$-martingale,
where $A^{\tau}$ is the $\bF$-dual predictable projection of $H$.
\end{Lemma}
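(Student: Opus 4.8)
The plan is to verify directly that $J$ is a $(\PP,\bG)$-martingale, following the classical line of argument for progressive enlargements (cf. \cite{Aksamit2017}). Three ingredients enter: the structure of $\cG_s$ (the ``key lemma'' of enlargement theory), by which, after a monotone-class reduction, it is enough to test the martingale property against indicators of $\cF_s$-sets; the defining property of the $\bF$-dual predictable projection, $\EE[\int_0^T\phi_u\,dH_u]=\EE[\int_0^T\phi_u\,dA^\tau_u]$ for every bounded $\bF$-predictable $\phi$; and the standard identity $\EE[\int_0^T\phi_u\,\bOne_{\{u\le\tau\}}\,dA^\tau_u]=\EE[\int_0^T\phi_u\,Z_{u-}\,dA^\tau_u]$, which records that $Z_-$ is the $\bF$-predictable projection of the optional process $u\mapsto\bOne_{\{u\le\tau\}}$.

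First I would settle well-posedness. Setting $C_t:=\int_{(0,t]}dA^\tau_u/Z_{u-}$, I would use the classical facts that $Z_{u-}>0$ for $u\le\tau$ and that $A^\tau$ does not charge $\{Z_-=0\}$, so that $C$ is a well-defined $\bF$-predictable increasing process and $J_t=H_t-C_{t\wedge\tau}$ is $\bG$-adapted and càdlàg. Applying the last identity above with $\phi_u=\bOne_{(0,T]}(u)/Z_{u-}$ gives $\EE[C_{T\wedge\tau}]=\EE[\int_0^T dA^\tau_u]=\EE[A^\tau_T]=\EE[H_T]=\PP(\tau\le T)\le 1$, so $J$ is integrable. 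Next, since $H_t-H_s=\bOne_{\{s<\tau\le t\}}$ and $C_{t\wedge\tau}-C_{s\wedge\tau}=\int_{(s,t]}\bOne_{\{u\le\tau\}}\,dA^\tau_u/Z_{u-}$ both vanish on $\{\tau\le s\}$, the increment $J_t-J_s$ is carried by $\{\tau>s\}$; hence the generators of $\cG_s$ contained in $\{\tau\le s\}$ are orthogonal to it, and a monotone-class argument reduces the statement to showing that $\EE[F_s(J_t-J_s)]=0$ for every bounded $\cF_s$-measurable $F_s$ and all $0\le s\le t\le T$.

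It then remains to treat the jump part and the drift part of this increment. Since $u\mapsto F_s\bOne_{(s,t]}(u)$ is bounded and $\bF$-predictable, the dual predictable projection gives
\[
\EE\left[F_s(H_t-H_s)\right]=\EE\left[\int_{(s,t]}F_s\,dH_u\right]=\EE\left[\int_{(s,t]}F_s\,dA^\tau_u\right]=\EE\left[F_s(A^\tau_t-A^\tau_s)\right].
\]
For the drift part one writes $F_s(C_{t\wedge\tau}-C_{s\wedge\tau})=\int_{(s,t]}F_s\,\bOne_{\{u\le\tau\}}\,Z_{u-}^{-1}\,dA^\tau_u$, and, $u\mapsto F_s\bOne_{(s,t]}(u)Z_{u-}^{-1}$ being $\bF$-predictable, the predictable-projection identity yields
\[
\EE\left[F_s(C_{t\wedge\tau}-C_{s\wedge\tau})\right]=\EE\left[\int_{(s,t]}F_s\,Z_{u-}\,Z_{u-}^{-1}\,dA^\tau_u\right]=\EE\left[F_s(A^\tau_t-A^\tau_s)\right].
\]
Subtracting these two identities gives $\EE[F_s(J_t-J_s)]=0$, which together with the reduction above shows that $J$ is a $(\PP,\bG)$-martingale.

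I expect the main obstacle to be the bookkeeping between optional and predictable objects in the drift term: $H$ and $u\mapsto\bOne_{\{u\le\tau\}}$ are optional but not $\bF$-predictable, whereas $A^\tau$ is predictable, and it is precisely the passage through the predictable projection that produces the left limit $Z_{u-}$ instead of $Z_u$ — and hence the $Z_{s-}$ in the denominator of the statement. The only further point demanding care is the well-definedness of $1/Z_{u-}$ up to $\tau$, i.e. that $A^\tau$ puts no mass where $Z_-$ vanishes, which rests on the standard structural description of the Azéma supermartingale.
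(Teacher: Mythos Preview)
Your proof is correct and follows the classical route for compensators under progressive enlargement: reduce to $\cF_s$-test functions via the key lemma (using that $J_t-J_s$ vanishes on $\{\tau\le s\}$), then match the jump and drift expectations through the dual predictable projection and the identity ${}^p(\bOne_{\{\cdot\le\tau\}})=Z_{-}$. The only technical point that deserves a word more is that $\phi_u=F_s\bOne_{(s,t]}(u)Z_{u-}^{-1}$ is not bounded, so the predictable-projection identity must be invoked in its nonnegative form (split $F_s$ into positive and negative parts) or justified via monotone approximation; your well-posedness paragraph already contains the ingredients for this.

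As for comparison: the paper does not prove this lemma at all. It simply records the statement and refers the reader to Proposition~2.15 of \cite{Aksamit2017} for the details. Your argument is precisely the standard proof one finds in that reference, so there is no methodological divergence to discuss --- you have supplied what the paper chose to outsource.
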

\noindent
Using \cite{Neal95}, we know that the $\bF$-dual predictable projection of $H$ is a bounded variation process.
As it is discussed in \cite{ElliotJeanblancYor2000} (originally proved in \cite{Dellacherie72}), 
if the compensator of $H$ is continuous, then $\tau$ is not an $\bF$-stopping time and it is a totally inaccessible $\bG$-stopping time.
In order to exploit some nice properties of the default time, in the literature there exist two main approaches: 
the first one is called the intensity approach and it is based on assuming that the process $A$ is absolutely continuous with respect to the Lebesgue measure, i.e., 
there exists a $\bG$-predictable process $\lambda=\prT{\lambda}$ such that
\begin{equation}\label{approach.intensity}
    \textit{(intensity hypothesis)}\quad  H_t-\int_0^t\lambda_sds\ ,\quad 0\leq t\leq T\ , 
\end{equation}
is a $\bG$-martingale.
Note that, by Lemma~\ref{H.decomposition}, the compensator always exists but under hypothesis~\eqref{approach.intensity} the singular and the jump part of the compensator are eliminated.
The second one is called the density approach and it is based on assuming that the \emph{Jacod density hypothesis} holds true, i.e.,
there exists a process $q(\eta) = (q_t(\eta),\,0 \leq t \leq T)$ such that,
\begin{equation}\label{approach.density}
    \textit{(density hypothesis)}\quad \PP(\tau\in d\eta|\cF_t) = q_t(\eta)\zeta(d\eta)\ ,%\quad 0\leq t\leq T\ .
\end{equation}
being $\zeta$ the law of $\tau$.
%\BER{[Please define $\zeta(d\eta)$]}
In \cite{ElKarouiJeanblancJiao2009}, it is proved that the density approach is stronger that the intensity one.
In particular, assuming the {density hypothesis}~\eqref{approach.density} holds true, 
the process~$\lambda$ in~\eqref{approach.intensity} can be fully recovered. 
However, assuming~\eqref{approach.intensity}, the process~$q_t(\eta)$ can be recovered only in case~$t\leq \eta$. 
The {density hypothesis} is often assumed in default risk setting, we refer to the papers~\cite{diTella2021,HillairetJiao11,HillairetJiao15,JiaoPham2011} %blanchetHillairetJiao17,
as main examples. 
The intensity hypothesis is also used in the literature, for example in~\cite{LimQuenez2015}.
In some papers, the authors assume the so-called immersion property or ($\mathcal{H}$)-hypothesis, 
in which the processes~$W$ and~$\tilde N$ are~$\bG$-martingales. 
We refer to, e.g,~\cite{KharroubiLim13} in the context of mean-variance hedging or Section~3.2 of \cite{ElKarouiJeanblancJiao2009} to see its relation with the density hypothesis.

In this work, a priori we assume neither the density nor the intensity approach, nor the \hbox{{($\mathcal{H}$)-hypothesis}}, 
as we work in the framework of forward integration. 
In such a way, we can consider cases such as the time in which the Brownian motion reaches its maximum, see Example~\ref{ex.argmax} below for more details.
%\BER{[Does this r.v. satisfy the Jacod hypotesis?]}
In Section~\ref{sec:suff.aproach} we are looking for sufficient optimality conditions for a general utility function, we assume the existence of a local maximum strategy for our problem and we derive, for example, the $\bG$-semimartingality of the processes $W$ and $\tilde N$.
%In order to solve the problem for a general utility function, 
%in Section~\ref{sec:suff.aproach} we assume the existence of a local maximum strategy for our problem and, as a consequence of the optimality, we prove that the intensity approach must hold.

Going back to the market model \eqref{def.assets}, we assume that the market coefficients are càglàd~{$\bG$-adapted} processes satisfying
%The drift process $\mu$ and the default coefficient $\kappa$ are càglàd $\bG$-adapted stochastic processes, the volatility process $\sigma$ is a càglàd stochastic process and the random variable $\sigma_t$ is $\cF^B_T$ measurable, i.e., we assume that is an anticipating random variable in the filtration of the Brownian motion.  The process $\theta$ is a càglàd stochastic process such that $\theta_t(z)$ is measurable in $\cF_T^N$ $\nu$-a.e. 
 the following integrability condition
\begin{equation}\label{market.coef.integr}
     \EE\left[\int_0^T\abs{\rho_t}+\abs{\mu_t}+\abs{\sigma_t}^2+\int_{\bR_0}\abs{\theta_t(z)}^2\nu(dz)\,dt\right] <+\infty\ .
\end{equation}
The default time $\tau$ satisfies the following condition
\begin{equation}\label{assumption.jumps}
    \PP(\exists\,t\in[0,T]\text{ such that }\{\tau = t\}\text{ holds true and }N(\Delta t,B)>0)) = 0
\end{equation}
for any $B\subset\bR_0$ compact and $N(\Delta t,B):=N((0,t],B)-N((0,t),B)$.
This means that the default time does not occur $\PP$-a.s. at the same time of any jump of $N$.
In particular, assumption~\eqref{assumption.jumps} is used in Proposition~\ref{prop.formula.ito.forward} below to apply the It\^o formula under forward integration.

By Lemma 4.4 of \cite{Jeulin1980}, any $\bG$-adapted process $Y=\prT{Y}$ can be expressed as 
$Y_t= Y^{\bF}_t\bOne_{\{\tau > t\}} + Y_t(\tau)\bOne_{\{\tau\leq t\}}$, 
where $Y^{\bF}$ is an $\bF$-adapted process and $Y_t(\tau)$ is $\cF_t\otimes\sigma (\tau)$-measurable. Using this fact, we can express the following processes in a decomposition before and after the occurrence of the default time:
\begin{subequations}\label{def.G.decomp.market.coef}
\begin{align}
    S_t &= S^{\bF}_t\bOne_{\{\tau > t\}} + S_t(\tau)\bOne_{\{\tau\leq t\}}\label{G.process.S}\\
    \rho_t &= \rho^{\bF}_t\bOne_{\{\tau > t\}} + \rho_t(\tau)\bOne_{\{\tau\leq t\}}\\
    \mu_t &= \mu^{\bF}_t\bOne_{\{\tau > t\}} + \mu_t(\tau)\bOne_{\{\tau\leq t\}}\\
    \sigma_t &= \sigma^{\bF}_t\bOne_{\{\tau > t\}} + \sigma_t(\tau)\bOne_{\{\tau\leq t\}}\\
   \theta_t(z) &= \theta^{\bF}_t(z)\bOne_{\{\tau > t\}} + \theta_t(z,\tau)\bOne_{\{\tau\leq t\}}\ .
\end{align}
\end{subequations}
The integral with respect to the default process $H_t = \bOne_{\{\tau\leq t\}}$ is defined as a Riemann-Stieltjes integral, i.e., 
\begin{equation}\label{def.integral.default}
   \int_0^{t}\phi_s dH_s := \phi_\tau H_t = \phi_\tau\bOne_{\{\tau\leq t\}}\ . %< \infty\quad \PP\text{-a.s.}
\end{equation}
Moreover, we assume that $\kappa_t\neq 0$ $\PP$-$a.s.$, 
to guarantee the presence of the default in the model.
The following proposition is an adaptation of
the version of Theorem~2.5 of~\cite{diNunno2014} to the framework of defaultable assets presented in \eqref{def.assets2}.
\begin{Proposition}\label{prop.formula.ito.forward}
Let $Y$ be a process that satisfies the following equation,
\begin{equation}\label{forward.process}
    \frac{d^{-}Y_t}{Y_{t-}} = \mu_tdt +\sigma_td^{-}W_t +\int_{\bR_0}\theta_t(z)\tilde N(d^-t,dz) +\kappa_t dH_t\ ,
\end{equation}
where %$\mu$ and %$\PP$-a.s.,
$\sigma$ is forward integrable with respect to $W$, 
$\theta$ and $\abs{\theta}$ are forward integrable with respect to~$\tilde N$, \eqref{market.coef.integr} and \eqref{assumption.jumps} hold true
and $\kappa < \infty$, $\PP$-$a.s$.
Assume $f\in C^2(\bR)$ then,
\begin{align}\label{eq.forward.ito}
    df(Y_s) =& \left( f'(Y_{s-})Y_{s-}\mu_s + f''(Y_{s-})Y^2_{s-}\sigma^2_s\right)ds\notag\\
    &+ \int_{\bR_0}f(Y_{s-}+Y_{s-}\theta_s(z))-f(Y_{s-}) -f'(Y_{s-})Y_{s-}\theta_s(z) \,\nu(dz)ds\notag\\
    &+  f'(Y_{s-})Y_{s-}\sigma_sd^-W_s + \int_{\bR_0}f(Y_{s-}+Y_{s-}\theta_s(z))-f(Y_{s-})\,\tilde N(d^-s,dz)\notag\\
    &+\left(f(Y_{s-}+Y_{s-}\kappa_s)-f(Y_{s-})\right)\,dH_s\ .
\end{align}
\end{Proposition}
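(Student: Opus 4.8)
The plan is to reduce Proposition~\ref{prop.formula.ito.forward} to the forward It\^o formula \emph{without} a default term, namely Theorem~2.5 of~\cite{diNunno2014}, by exploiting the before/after-default structure of~$Y$ and then inserting by hand the contribution of the single jump of~$H$ at~$\tau$.

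First I would note that $H_t=\bOne_{\{\tau\leq t\}}$ has exactly one jump, located at~$\tau$, so that on $[0,\tau)$ the term $\kappa_t\,dH_t$ in~\eqref{forward.process} vanishes and~$Y$ there solves the forward SDE driven only by $\mu^{\bF},\sigma^{\bF},\theta^{\bF}$, in the notation of the Jeulin decomposition~\eqref{def.G.decomp.market.coef}; write $Y^{\bF}$ for the corresponding $\bF$-version, so that $Y_t=Y^{\bF}_t$ for $t<\tau$. Since $\sigma,\theta,\abs{\theta}$ are forward integrable and~\eqref{market.coef.integr} holds, the before-default coefficients inherit these properties, and applying the no-default forward It\^o formula to $f(Y^{\bF})$ produces precisely the first four lines of~\eqref{eq.forward.ito} on $[0,\tau)$. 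Working conditionally on $\{\tau=\eta\}$, the same argument applies on $(\eta,T]$ to the post-default process $Y(\eta)$, which again solves a forward SDE of the same type --- now with coefficients $\mu_t(\eta),\sigma_t(\eta),\theta_t(z,\eta)$ and initial value $Y_\eta=Y_{\eta-}(1+\kappa_\eta)$ --- and yields the same four lines there.

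It remains to glue the two pieces across~$\tau$. At $t=\tau$ the process jumps from $Y_{\tau-}$ to $Y_\tau=Y_{\tau-}(1+\kappa_\tau)$, so $f(Y)$ jumps by $f(Y_{\tau-}+Y_{\tau-}\kappa_\tau)-f(Y_{\tau-})$; by the definition~\eqref{def.integral.default} of the integral against~$H$ this equals $\int_0^{t}\bigl(f(Y_{s-}+Y_{s-}\kappa_s)-f(Y_{s-})\bigr)\,dH_s$, i.e. the last line of~\eqref{eq.forward.ito}. Adding the contributions over $[0,\tau)$, $\{\tau\}$ and $(\tau,t]$ gives the claimed identity. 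Assumption~\eqref{assumption.jumps} enters exactly here: it guarantees that $\tau$ is $\PP$-a.s. not a jump time of~$N$, so that the jump of~$Y$ at~$\tau$ is purely $Y_{\tau-}\kappa_\tau$ with no superimposed Poisson jump, and the drift, the compensator and the forward integrals split additively across the instant~$\tau$ without any cross term; this is what makes the three-piece decomposition legitimate. The hypotheses $f\in C^2(\bR)$, $\kappa<\infty$ $\PP$-a.s. and the forward integrability of $\theta,\abs{\theta}$ serve only to ensure that each term on the right-hand side of~\eqref{eq.forward.ito} is well defined, the $C^2$ regularity being needed for the second-order Taylor term that produces the $\nu(dz)\,ds$ integral in the second line.

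The step I expect to be the main obstacle is making rigorous the use of the no-default formula on the \emph{random} intervals $[0,\tau)$ and $(\tau,T]$: Theorem~2.5 of~\cite{diNunno2014} is stated on a deterministic horizon, so one must either localise along a sequence of deterministic times (using that $\bG$ is right-continuous and $\tau$ is a $\bG$-stopping time) or freeze $\{\tau=\eta\}$, run the argument on $[0,\eta)$ and $(\eta,T]$, and recombine by integrating in~$\eta$ against the law of~$\tau$; in either route one has to check that the forward integrals, defined as limits in probability of mollified increments, commute with the restriction to these intervals, which again relies on~\eqref{assumption.jumps} and on the càglàd regularity of the coefficients. A more routine task is to verify that the before- and after-default coefficients produced by~\eqref{def.G.decomp.market.coef} retain the integrability~\eqref{market.coef.integr} and the forward integrability needed to invoke the no-default formula.
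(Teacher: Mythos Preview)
The paper does not actually supply a proof of this proposition: it introduces it as ``an adaptation of the version of Theorem~2.5 of~\cite{diNunno2014} to the framework of defaultable assets'' and immediately proceeds to apply it. Your proposal is precisely such an adaptation and is correct: apply the no-default forward It\^o formula on $[0,\tau)$ and on $(\tau,T]$ (where $dH$ vanishes), and insert the jump $f(Y_{\tau-}(1+\kappa_\tau))-f(Y_{\tau-})$ at~$\tau$, with~\eqref{assumption.jumps} guaranteeing that this jump is not contaminated by a simultaneous Poisson jump so that no cross term appears.

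Two small cautions. First, avoid invoking anything like Proposition~\ref{S.SDE.before.after} for the before/after SDE of~$Y$, since in the paper that result is proved \emph{using} Proposition~\ref{prop.formula.ito.forward}; fortunately your argument only needs the trivial observation that $H$ is constant on $[0,\tau)$ and on $(\tau,T]$, not the full decomposition. Second, the ``condition on $\{\tau=\eta\}$'' route is delicate because that event is generically null; the cleaner way around your stated obstacle is to note that the forward It\^o formula of~\cite{diNunno2014} is built from Riemann sums along deterministic partitions, so one can write $f(Y_t)-f(Y_0)=\sum_k\bigl(f(Y_{t_{k+1}})-f(Y_{t_k})\bigr)$, isolate the single subinterval containing~$\tau$, treat it directly via~\eqref{assumption.jumps} and $\kappa<\infty$, and let the remaining intervals be handled by the existing theorem --- no random-interval version and no conditioning is required.
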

Applying Proposition \ref{prop.formula.ito.forward} to the dynamics of the risky asset,
given by \eqref{def.assets2}, %is a forward 
we get an explicit expression as follows, 
\begin{align*}
    \ln \frac{S_t}{s_0} =& \int_0^t \left(\mu_s -\frac{1}{2}\sigma^2_s+ \int_{\bR_0}\ln(1+\theta_s(z)) -\theta_s(z) \nu(dz)\right)ds\\
    &+ \int_0^t \sigma_sd^-W_s + \int_0^t\int_{\bR_0}\ln(1+\theta_s(z)) \tilde N(d^-s,dz) +\int_0^t \ln(1+\kappa_s)dH_s\ .
\end{align*}
To have the process $S$ well-defined, we assume that
\begin{align}
    -1&<\theta_t(z)\ ,\quad dt\times\nu(dz)\times \PP-\text{a.s.}\label{hyp.market.1}\\
    -1&< \kappa_t <  \infty\ ,\quad dt\times \PP-\text{a.s.}\label{hyp.market.2}
\end{align}
In the next section we study the optimal expected utility of the terminal wealth on the time horizon $[0,T]$.
By combining the approaches developed in~\cite{diNunno2014} 
and~\cite{JiaoPham2011} we split the optimization problem~\eqref{optimization.problem} into two new problems: 
one before default in the stochastic interval~$[[0,\tau\wedge T[[$, and the other one after default in the stochastic interval~$[[\tau\wedge T,T]]$, 
where the latter interval can be empty. 
To do this, we need to rewrite the dynamics of the risky asset in terms of before and after default.
The following proof differs from the one in~\cite{JiaoPham2011} for using the forward integration.
\begin{Proposition}\label{S.SDE.before.after}
The component $S^\bF$ of S in \eqref{G.process.S} satisfies the following SDE,
\begin{subequations}
\begin{align}
    \frac{dS_t^{\bF}}{S_{t-}^{\bF}} &= \mu^{\bF}_tdt + \sigma^{\bF}_tdW_t + \int_{\bR_0}\theta^{\bF}_t(z)\tilde N(dt,dz),\ t\in[[0,\tau\wedge T[[\\
    S_0^{\bF} &= s_0 >0\ ,
\end{align}
\end{subequations}
and the component $S(\tau)$ satisfies
\begin{subequations}
\begin{align}
    \frac{dS_t(\tau)}{S_{t-}(\tau)} &= \mu_t(\tau)dt + \sigma_t(\tau)d^-W_t  + \int_{\bR_0}\theta_t(z,\tau) \tilde N(d^-t,dz),\quad  t\in[[\tau\wedge T,T]]\\
    S_\tau(\tau) &= S_{\tau-}^{\bF}(1+\kappa_\tau)\ .
\end{align}
\end{subequations}
%and on the set $\{\tau=\eta\}$ we have 
%\begin{align*}
%    \frac{dS_t(\eta)}{S_{t-}(\eta)} &= \mu_t(\eta)dt + \sigma_t(\eta)dW_t  + \int_{\bR_0}\theta_t(z,\eta) \tilde N(dt,dz)\\
%    S_\eta(\eta) &= S_{\eta-}^{\bF}(1+\kappa_\eta)\ ,\quad  t\in[\eta,T]\ .
%\end{align*}
\end{Proposition}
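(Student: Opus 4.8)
The plan is to reduce the $\bG$-dynamics \eqref{def.assets2} to the two stochastic intervals $[[0,\tau\wedge T[[$ and $[[\tau\wedge T,T]]$ and, on each of them, to recognise which forward integrals collapse to \Ito{} integrals. Writing \eqref{def.assets2} in integral form,
\begin{equation*}
S_t = s_0 + \int_0^t S_{s-}\mu_s\,ds + \int_0^t S_{s-}\sigma_s\,d^-W_s + \int_0^t\int_{\bR_0}S_{s-}\theta_s(z)\,\tilde N(d^-s,dz) + \int_0^t S_{s-}\kappa_s\,dH_s,
\end{equation*}
the last term equals $S_{\tau-}\kappa_\tau\bOne_{\{\tau\le t\}}$ by \eqref{def.integral.default}, so it vanishes on $[[0,\tau\wedge T[[$ and is constant on $[[\tau\wedge T,T]]$. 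On $[[0,\tau\wedge T[[$ the decompositions \eqref{def.G.decomp.market.coef} give $S_{s-}=S^{\bF}_{s-}$, $\mu_s=\mu^{\bF}_s$, $\sigma_s=\sigma^{\bF}_s$ and $\theta_s(z)=\theta^{\bF}_s(z)$, while on $(\tau\wedge T,T]$ one has $S_{s-}=S_{s-}(\tau)$, $\mu_s=\mu_s(\tau)$, $\sigma_s=\sigma_s(\tau)$ and $\theta_s(z)=\theta_s(z,\tau)$. Using that the forward integral up to time $t$ is determined by the integrand on $[[0,t]]$, substitution produces the forward SDEs for $S^{\bF}$ on $[[0,\tau\wedge T[[$ and for $S(\tau)$ on $[[\tau\wedge T,T]]$ with the stated drift, Brownian and jump integrands; the initial value $S^{\bF}_0=s_0$ is inherited from $S_0=s_0$ (since $H_0=0$), and the initial value of $S(\tau)$ at $\tau\wedge T$ will be identified at the end.

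Next I would convert, on the before-default interval, the forward integrals into \Ito{} integrals. The integrands $S^{\bF}_{-}\sigma^{\bF}$ and $S^{\bF}_{-}\theta^{\bF}(\cdot)$ are $\bF$-adapted and forward integrable on $[[0,\tau\wedge T[[$, with the regularity and integrability inherited from \eqref{market.coef.integr} via \eqref{def.G.decomp.market.coef}; since $W$ and $\tilde N$ are $(\PP,\bF)$-martingales, the forward integral of such integrands coincides there with the corresponding \Ito{} integral, by the forward-versus-\Ito{} consistency recalled in Subsection~\ref{subsec:white.noise}. Restricting this identity to $t<\tau\wedge T$ turns the forward SDE for $S^{\bF}$ into the \Ito{} SDE of the statement. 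For the after-default component, by contrast, the coefficients $\mu_\cdot(\tau),\sigma_\cdot(\tau),\theta_\cdot(\cdot,\tau)$ are only $\cF_t\otimes\sigma(\tau)$-measurable, hence not $\bF$-adapted, and — since we assume neither the density, the intensity, nor the immersion hypothesis — $W$ and $\tilde N$ need not be $\bG$-semimartingales on $[[\tau\wedge T,T]]$, so the forward integrals must be kept as they stand; this is precisely why the forward calculus is the natural tool here.

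It then remains to identify $S_{\tau\wedge T}(\tau)$ on $\{\tau\le T\}$. Among the terms of \eqref{def.assets2} the $dt$- and $d^-W$-integrals are continuous, and by \eqref{assumption.jumps} the default time does not coincide with a jump of $N$, so the $\tilde N$-integral has no jump at $\tau$ either; hence the only jump of $S$ at $\tau$ comes from $\int_0^{\cdot}S_{s-}\kappa_s\,dH_s$ and equals $S_{\tau-}\kappa_\tau$. Therefore $S_\tau=S_{\tau-}(1+\kappa_\tau)$, and since $S_t=S^{\bF}_t$ for every $t<\tau$ we get $S_{\tau-}=S^{\bF}_{\tau-}$; as $S_\tau=S_\tau(\tau)$ on $\{\tau\le T\}$, this gives $S_\tau(\tau)=S^{\bF}_{\tau-}(1+\kappa_\tau)$.

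I expect the delicate point to be the forward-to-\Ito{} passage on the stochastic interval $[[0,\tau\wedge T[[$. Because $\tau$ is in general not an $\bF$-stopping time, one cannot localise with $\bF$-stopping times; instead one must combine the locality of the forward integral (its value on $[[0,t]]$ depends only on the integrand on $[[0,t]]$) with the fact that, for $\bF$-adapted forward-integrable integrands, the forward and \Ito{} integrals agree as processes indexed by deterministic $t\in[0,T]$ — a statement that makes no reference to $\tau$ — and only afterwards restrict to $\{t<\tau\}$. The remaining points, namely that the integrability \eqref{market.coef.integr} transfers to the $\bF$-coefficients on $[[0,\tau\wedge T[[$ and that the left limits at $\tau$ of $S$ and $S^{\bF}$ coincide, are routine consequences of \eqref{def.G.decomp.market.coef}.
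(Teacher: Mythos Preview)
Your argument is correct and reaches the same conclusion, but the route differs from the paper's. The paper does not work with the additive integral form of \eqref{def.assets2}; instead it applies the forward It\^o formula (Proposition~\ref{prop.formula.ito.forward}) to obtain the explicit exponential representation of $S_t$, multiplies this by $\bOne_{\{\tau>t\}}$ and by $\bOne_{\{\tau\leq t\}}$, and then simplifies using pathwise identities of the type
\[
\exp\Big\{\int_0^t\mu^{\bF}_s\bOne_{\{\tau>s\}}+\mu_s(\tau)\bOne_{\{\tau\leq s\}}\,ds\Big\}\bOne_{\{\tau>t\}}=\exp\Big\{\int_0^t\mu^{\bF}_s\,ds\Big\}\bOne_{\{\tau>t\}}
\]
together with the splitting $\int_0^t\sigma_s\,d^-W_s=\int_0^t\bOne_{\{\tau>s\}}\sigma^{\bF}_s\,d^-W_s+\int_0^t\bOne_{\{\tau\leq s\}}\sigma_s(\tau)\,d^-W_s$, and only at the end reverts to differential form. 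In that approach the initial condition $S_\tau(\tau)=S^{\bF}_{\tau-}(1+\kappa_\tau)$ drops out automatically from the factor $(1+\kappa_\tau)^{\bOne_{\{\tau\leq t\}}}$, whereas you obtain it by a separate jump analysis invoking assumption~\eqref{assumption.jumps}. Your approach is more elementary in that it avoids the It\^o formula altogether, and you are more explicit than the paper about why the forward integrals become It\^o integrals on $[[0,\tau\wedge T[[$ --- the paper simply writes $dW_t$ in place of $d^-W_t$ once the integrand is $\bF$-adapted, relying implicitly on Lemma~\ref{forward.brownian.semimart}; your discussion of the delicate localisation point (that one first identifies forward and It\^o integrals as processes in deterministic $t$ and only afterwards restricts to $\{t<\tau\}$) is a genuine clarification.
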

\begin{proof}
As in \eqref{G.process.S}, we express the price process as
$S_t = S_t^{\bF}\bOne_{\{\tau > t\}} + S_t(\tau)\bOne_{\{\tau\leq t\}}$. 
%i.e., we can split the solution under $\{\tau\leq t\}$ and $\{\tau> t\}$. 
Under~$\{\tau> t\}$ it is clear that
\hbox{$S^{\bF}_t\bOne_{\{\tau > t\}} = S_t\bOne_{\{\tau > t\}}$}, then by applying Proposition~\ref{prop.formula.ito.forward} we have the following
\begin{align*}
    S^{\bF}_t\bOne_{\{\tau > t\}} =& S_0 \exp \Big\{ \int_0^t \mu_s -\frac{1}{2}\sigma^2_s+ \int_{\bR_0}\ln(1+\theta_s(z)) -\theta_s(z) \,\nu(dz)ds\\
    &+ \int_0^t \sigma_sd^-W_s + \int_0^t\int_{\bR_0}\ln(1+\theta_s(z)) \tilde N(d^-s,dz)\Big\}\bOne_{\{\tau > t\}}\ ,\\
    =& S_0 \exp \Big\{ \int_0^t \mu^{\bF}_s -\frac{1}{2}\left(\sigma^{\bF}_s\right)^2 + \int_{\bR_0}\ln(1+\theta_s^{\bF}(z)) -\theta^{\bF}_s(z) \,\nu(dz)ds\\
    &+ \int_0^t \sigma^{\bF}_sdW_s + \int_0^t\int_{\bR_0}\ln(1+\theta^{\bF}_s(z)) \tilde N(ds,dz)\Big\}\bOne_{\{\tau > t\}}\ ,
\end{align*}
where, according to the decomposition of the market coefficients \eqref{def.G.decomp.market.coef}, 
we have used the following pathwise relation,
$$ \exp\Big\{\int_0^t\mu^\bF_s \bOne_{\{\tau > s\}} + \mu_s(\tau) \bOne_{\{\tau \leq s\}} ds\Big\} \bOne_{\{\tau > t\}} =  \exp\Big\{\int_0^t\mu^\bF_s ds\Big\} \bOne_{\{\tau > t\}}\ .  $$ 
Now, on the set $\{\tau\leq t \leq T\}$, we have
\begin{align*}
    S_t(\tau)\bOne_{\{\tau\leq t\}} =& S_t\bOne_{\{\tau\leq t\}} =  S_0\exp \Big\{ \int_0^t \mu_s -\frac{1}{2}\sigma^2_sds\\
    &+ \int_0^t\int_{\bR_0}\ln(1+\theta_s(z)) -\theta_s(z) \,\nu(dz)ds+ \int_0^t \sigma_sd^-W_s\\ 
    &+ \int_0^t\int_{\bR_0} \ln(1+\theta_s(z)) \tilde N(d^-s,dz) + \ln(1+\kappa_\tau)\bOne_{\{\tau\leq t\}}\Big\}\bOne_{\{\tau\leq t\}}\ ,
\end{align*}
where we used the following pathwise equality 
$$ (1+\kappa_\tau)^{\bOne_{\{\tau\leq t\}}}\bOne_{\{\tau\leq t\}}=(1+\kappa_\tau)\bOne_{\{\tau\leq t\}}\ .$$
As for the forward integrals, we use the following equality
\begin{align*}
    \int_0^t\sigma_sd^-W_s &= \int_0^t\bOne_{\{\tau > s\}}\sigma^{\bF}_sd^-W_s + \int_0^t\bOne_{\{\tau \leq s\}}\sigma_s(\tau)d^-W_s%//
    %&= \int_0^\eta \sigma^{\bF}_sdW_s\Big|_{\eta=\tau} + \int_\eta^t\sigma_s(\eta)dW_s\Big|_{\eta=\tau}\ ,\quad 0\leq\eta\leq t\ ,
\end{align*}
%where we have used Lemma 7.20 in \cite{OksendalDraouil15} in order to substitute $\tau$ by $\eta\in[0,t]$. 
to get  
\begin{align*}
    S_t(\tau)\bOne_{\{\tau\leq t\}} =& S_{\tau-}^{\bF} \left(1+\kappa_\tau\right) \exp \Big\{ \int_{\tau}^t \mu_s(\tau) -\frac{1}{2}\sigma^2_s(\tau)ds\\
    &+ \int_{\tau}^t\int_{\bR_0}\ln(1+\theta_s(z,\tau)) -\theta_s(z,\tau) \,\nu(dz)ds\\
    &+ \int_{\tau}^t \sigma_s(\tau)d^-W_s + \int_{\tau}^t\int_{\bR_0}\ln(1+\theta_s(z,\tau)) \tilde N(d^-s,dz)\Big\}\bOne_{\{\tau\leq t\}}\ .
\end{align*}
%\begin{align*}
%    S_t(\tau)\bOne_{\{\tau\leq t\}} &= S_{\eta-}^{\bF} \left(1+\kappa_\eta\right) \exp \Big\{ \int_{\eta}^t \mu_s(\eta) -\frac{1}{2}\sigma^2_s(\eta)ds\\
%    &+ \int_{\eta}^t\int_{\bR_0}\ln(1+\theta_s(z,\eta)) -\theta_s(z,\eta) \,\nu(dz)ds\\
%    &+ \int_{\eta}^t \sigma_s(\eta)dW_s\Big|_{\tau = \eta} + \int_{\eta}^t\int_{\bR_0}\ln(1+\theta_s(z,\eta)) \tilde N(ds,dz)\Big|_{\tau = \eta}\Big\}\ .
%\end{align*}
The result then holds by expressing the equation above in differential form.
\end{proof}
%\subsection{Additional Notation}\label{subsec:add.notation}
Let $\PP$ and $\QQ$ denote two probability measures.
$\EE$ indicates the expectation operator under measure $\PP$ and $\EE_{\QQ}$ the one under measure $\QQ$.
We denote by $dt$ the Lebesgue measure and by $L^2(B,dt)$ 
the set of deterministic functions which satisfy
$\norm{f}^2_2 = \int_B f^2(t)dt  <+\infty,$ $B\subseteq\bR$.
Sometimes we omit $B$ when it is clear from the context and we write $L^2(dt)$.
We use $L^2(\nu\times dt)$ in the corresponding product space 
$\bR_0\times[0,T]$ with its associated norm.
We define the space $L^2(\PP,\cF_T)$,
or simply $L^2(\PP)$, as
the class of equivalence of random variables $F$ such that $\EE[F^2] <+ \infty$ with
$\cF_T$-measurable representative.
In addition, we let $L^2(dt\times\PP,\bF)$, or simply
$L^2(dt\times\PP)$, to be the space of equivalent class of processes $X$ satisfying $\int_0^T \EE\left[X_s^2\right] ds  <+ \infty$ with $\bF$-adapted representative.

\section{Optimization Problem}\label{sec:opt.problem}
Using the previous set-up, it is assumed that an agent can control her portfolio by a \textit{self-financing} process $\pi=\prT{\pi}$,
with the aim to maximize her expected utility gain at a finite terminal time $T>0$.
If we denote by $X^\pi=\prT{X^\pi}$ the wealth of the portfolio of the investor under $\pi$,
its dynamics are given by the following SDE, for $0 \leq t \leq T$,
\begin{equation}\label{def.X}
    \frac{d^-X_t^{\pi}}{X_{t-}^{\pi}} = (1-\pi_t) \frac{dD_t}{D_t} + \pi_t \frac{d^-S_t}{S_{t-}} \ ,\quad X^{\pi}_0=x_0 > 0\ .
\end{equation}
By using the evolution of both assets given in \eqref{def.assets} we get
\begin{align*}
    \frac{d^-X^{\pi}_t}{X^{\pi}_{t-}} =& (1-\pi_t) \rho_tdt + \pi_t \left( \mu_tdt + \sigma_t d^-W_t+\int_{\bR_0}\theta_t(z)\tilde N(d^-t,dz) + \kappa_tdH_t \right)
\end{align*}
where the SDE is well-defined on the probability space $(\Omega, \cG_T,\PP,\bG)$ within the progressive enlarged filtration.
Before giving a proper definition of the set of processes $\pi$ that we consider, we look for the natural conditions they should satisfy.
If we apply the It\^o formula for forward integration given in Proposition~\ref{prop.formula.ito.forward} to $ \ln X^{\pi}_t$, we get,
\begin{align}\label{solution.X}
\ln \frac{X^{\pi}_t}{x_0} =& \int_0^{t}\left( \rho_s + \pi_s (\mu_s-\rho_s) - \frac{1}{2}\pi^2_s\sigma^2_s +\int_{\bR_0} \ln(1+\pi_s\theta_s(z))-\pi_s\theta_s(z) \nu(dz)\right)ds\notag \\
&+\int_0^{t}\pi_s\sigma_sd^-W_s+ \int_0^{t} \int_{\bR_0} \ln(1+\pi_s\theta_s(z)) \tilde N(d^-s,dz) + \int_0^t\ln (1+\pi_s\kappa_s)dH_s\ ,
\end{align}
provided that these integrals are well-defined. To assure  this, we assume the following integrability conditions.
\begin{equation}\label{hyp1}
        \EE\left[ \int_0^T\abs{\rho_s}+\abs{\pi_s}\abs{\mu_s-\rho_s}+\pi^2_s\sigma^2_s + \int_{\bR_0}\pi^2_s\theta^2_s(z)\nu(dz)\  ds \right] < +\infty
\end{equation}
%\begin{align}
%        &\EE\left[ \int_0^T\abs{\rho_s}+\abs{\pi_s}\abs{\mu_s-\rho_s}+\pi^2_s\sigma^2_s\, ds \right] < +\infty\label{hyp1}\\
%        &\EE\left[ \int_0^T\int_{\bR_0}\pi^2_s\theta^2_s(z)\, \nu(dz) ds \right] < +\infty\ .\label{hyp2}
%\end{align}
To guarantee that \hbox{$X^{\pi}$} is well-defined, 
we assume that for all $\pi$ there exists $\epsilon^\pi>0$ such that,
\begin{equation}\label{hyp3}
\min\{ 1+\pi_t\theta_t(z), 1+\pi_t\kappa_t\} > \epsilon^{\pi}\ ,
\quad dt\times\nu(dz)\times \PP-\text{a.s.}
\end{equation}
In order to carry out the optimization of the portfolio, we consider a \textit{utility function} \hbox{$U:\bR^+\rightarrow \bR$} denoting the utility of the investor. We extend the domain of the function to $\bR$ by assuming that $U(x):=-\infty$ if $x<0$.
In addition we assume that it is continuously differentiable, strictly increasing and strictly concave in its domain and satisfying the Inada conditions,
\begin{subequations}
\begin{align}
    U'(0)       &:= \lim_{x\to 0^+}U'(x) = + \infty \label{Inada.condition1}\\
    U'(+\infty) &:= \lim_{x\to\infty}U'(x) = 0\ .\label{Inada.condition2}
\end{align}
\end{subequations}
Now, we can define properly the optimization problem as the supremum of the expected utility gain of the agent's wealth at the finite horizon time $T$.
\begin{equation}\label{optimization.problem}
    J(x_0,\pi):= \EE\left[U( X^{\pi}_T)|X^\pi_0 = x_0\right]\ ,\quad \bV_T^{\bE}:= \sup_{\pi\in\cA(\bE)} J(x_0,\pi)\ ,\quad \bE\in \{\bF,\bG\}\ .
\end{equation}
%where $\bV_T^{\bE}$ is defined as the optimal value of the portfolio at time $T$ given the information flow $\bE$,
Finally we give the definition of the set $\cA(\bE)$ of admissible strategies for the agent playing with information flow $\bE\in \{\bF,\bG\}$.
\begin{Definition}
In a financial market whose coefficients verify \eqref{market.coef.integr}, \eqref{assumption.jumps}, \eqref{hyp.market.1} and \eqref{hyp.market.2},
we define the set $\cA(\bE)$ of admissible strategies as the set which contains all the portfolios $\pi$ satisfying  \eqref{hyp1} and \eqref{hyp3} such that,
\begin{itemize}
    \item $\pi$ is càglàd and adapted w.r.t. the filtration $\bE$.
    \item $\pi\sigma$ is càglàd and forward integrable w.r.t. $W$.
    \item $\pi\theta$, $\ln(1+\pi\theta)$ and $\frac{\pi\theta}{1+\pi\theta}$ are càglàd and forward integrable w.r.t. $\tilde N$.
\end{itemize}
\end{Definition}
Since in some cases we can characterize the optimal solution for the problem \eqref{optimization.problem} only in a local sense, 
we introduce the definition of local maximum.
\begin{Definition}
We say that $\pi\in\cA(\bE)$ is a local maximum for the problem~\eqref{optimization.problem} if  
\begin{equation} 
\EE[U( X^{\pi+y\beta}_T)]\leq \EE[U(X^{\pi}_T)]\ , 
\end{equation}
for all bounded $\beta\in\cA(\bE)$ and $\abs{y}<\delta$, for some $\delta > 0$.
\end{Definition}
With the same procedure applied in Section \ref{sec:model}, 
we rewrite the dynamics of the process $X^{\pi}$ before and after default, we omit the proof because it is analogous to Proposition~\ref{S.SDE.before.after}.
We recall the decomposition
$X^{\pi}_t =
X^{\pi,\bF}_t\bOne_{\{\tau > t\}} 
+ X_t^{\pi}(\tau)\bOne_{\{\tau\leq t\}}$.
\begin{Proposition}\label{X.SDE.before.after}
The component $X^{\pi,\bF}$ of $X$ satisfies the following SDE,
\begin{subequations}
\begin{align}
    \frac{dX^{\pi,\bF}_t}{X^{\pi,\bF}_{t-}} &= (1-\pi^{\bF}_t) \rho^{\bF}_tdt +\pi^{\bF}_t \frac{dS_t^{\bF}}{S_{t-}^{\bF}},\quad t\in[[0,\tau\wedge T[[ \\
    X^{\pi,\bF}_0 &= x_0\ ,
\end{align}
\end{subequations}
and the component $X^\pi(\tau)$
\begin{subequations}\label{def.SDE.X.after}
\begin{align}
    \frac{dX^{\pi}_t(\tau)}{X^{\pi}_{t-}(\tau)} &= (1-\pi_t(\tau))\rho_t(\tau)dt +\pi_t(\tau) \frac{d^-S_t(\tau)}{S_{t-}(\tau)} ,\quad t\in[[\tau\wedge T,T]] \label{X.SDE.after.tau}\\
    X^{\pi}_\tau(\tau) &= X^{\pi,\bF}_{\tau-}(1+\pi^{\bF}_\tau\kappa_\tau)\ . \label{X.just.after.tau}
\end{align}
\end{subequations}
%for any $\eta\in[0,T]$,
%\begin{align*}
%    \frac{dX^{\pi}_t(\eta)}{X^{\pi}_{t-}(\eta)}\Big|_{\tau = \eta} &= (1-\pi_t(\eta))\rho_t(\eta)dt +\pi_t(\eta) \frac{d^-S_t(\eta)}{S_{t-}(\eta)}\Big|_{\tau = \eta} ,\quad t\in[\eta,T] \\
%    X^{\pi}_\eta(\eta) &= X^{\pi,\bF}_{\eta-}(1+\pi^{\bF}_\eta\kappa_\eta)\ .
%\end{align*}
\end{Proposition}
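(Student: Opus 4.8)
The plan is to mirror the proof of Proposition~\ref{S.SDE.before.after}, now starting from the closed form~\eqref{solution.X} of $\ln(X^\pi_t/x_0)$ instead of that of $\ln(S_t/s_0)$. I would first use the decomposition $X^\pi_t = X^{\pi,\bF}_t\bOne_{\{\tau>t\}} + X^\pi_t(\tau)\bOne_{\{\tau\le t\}}$ and work on the event $\{\tau>t\}$, where $X^{\pi,\bF}_t\bOne_{\{\tau>t\}} = X^\pi_t\bOne_{\{\tau>t\}}$. Multiplying~\eqref{solution.X} by $\bOne_{\{\tau>t\}}$ and inserting the before/after-default decompositions~\eqref{def.G.decomp.market.coef} of the coefficients together with $\pi_t = \pi^\bF_t\bOne_{\{\tau>t\}} + \pi_t(\tau)\bOne_{\{\tau\le t\}}$ (valid by Lemma~4.4 of~\cite{Jeulin1980} since $\pi$ is $\bG$-adapted), one applies the pathwise identities already used in Proposition~\ref{S.SDE.before.after}, namely $\exp\{\int_0^t(\mu^\bF_s\bOne_{\{\tau>s\}}+\mu_s(\tau)\bOne_{\{\tau\le s\}})\,ds\}\bOne_{\{\tau>t\}} = \exp\{\int_0^t\mu^\bF_s\,ds\}\bOne_{\{\tau>t\}}$ and $(1+\pi^\bF_\tau\kappa_\tau)^{\bOne_{\{\tau\le t\}}}\bOne_{\{\tau\le t\}} = (1+\pi^\bF_\tau\kappa_\tau)\bOne_{\{\tau\le t\}}$, together with the splitting $\int_0^t\pi_s\sigma_s\,d^-W_s = \int_0^t\bOne_{\{\tau>s\}}\pi^\bF_s\sigma^\bF_s\,d^-W_s + \int_0^t\bOne_{\{\tau\le s\}}\pi_s(\tau)\sigma_s(\tau)\,d^-W_s$ and its analogue for $\tilde N$. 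On $\{\tau>t\}$ the ``after'' pieces and the default integral $\int_0^t\ln(1+\pi_s\kappa_s)\,dH_s$ vanish, and, by the pathwise nature of forward integration, $\bOne_{\{\tau>t\}}\int_0^t\bOne_{\{\tau>s\}}\pi^\bF_s\sigma^\bF_s\,d^-W_s = \bOne_{\{\tau>t\}}\int_0^t\pi^\bF_s\sigma^\bF_s\,d^-W_s$.

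The key step is then to observe that $\pi^\bF\sigma^\bF$ is $\bF$-adapted and càglàd while $W$ is an $\bF$-semimartingale, so by the properties of the forward integral recalled in Subsection~\ref{subsec:white.noise} (and in~\cite{diNunno2014}) it coincides with the It\^o integral $\int_0^t\pi^\bF_s\sigma^\bF_s\,dW_s$, and likewise $\tilde N(d^-s,dz)$ may be replaced by $\tilde N(ds,dz)$ on $[[0,\tau\wedge T[[$; rewriting the resulting exponential in differential form yields the announced SDE for $X^{\pi,\bF}$ with initial condition $X^{\pi,\bF}_0 = x_0$.

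For the component $X^\pi(\tau)$ I would work on $\{\tau\le t\le T\}$, where $X^\pi_t(\tau)\bOne_{\{\tau\le t\}} = X^\pi_t\bOne_{\{\tau\le t\}}$, and split each integral in~\eqref{solution.X} at the random time $\tau$ as $\int_0^t = \int_0^\tau + \int_\tau^t$. The $\int_0^\tau$ contribution reassembles $\ln(X^{\pi,\bF}_{\tau-}/x_0)$, and adding the jump term $\ln(1+\pi^\bF_\tau\kappa_\tau)$ coming from $\int_0^t\ln(1+\pi_s\kappa_s)\,dH_s$ gives the junction condition $X^\pi_\tau(\tau) = X^{\pi,\bF}_{\tau-}(1+\pi^\bF_\tau\kappa_\tau)$; on $]]\tau,t]]$ only the after-default coefficients $\mu_s(\tau),\sigma_s(\tau),\theta_s(z,\tau),\pi_s(\tau)$ survive and the integrals remain forward integrals, since after $\tau$ one cannot assume $W$ and $\tilde N$ to be $\bG$-semimartingales, so differentiating produces~\eqref{def.SDE.X.after}. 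Throughout, assumption~\eqref{assumption.jumps} is what licenses the It\^o formula for forward integration, Proposition~\ref{prop.formula.ito.forward}, behind~\eqref{solution.X}, and the admissibility conditions defining $\cA(\bE)$ ensure all the integrals involved are well defined.

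I expect the main obstacle to be the careful bookkeeping of the indicators $\bOne_{\{\tau>s\}}$ inside forward integrals and exponentials, and in particular the rigorous justification that forward integration against $W$ and $\tilde N$ collapses to It\^o integration on $[[0,\tau\wedge T[[$ once the indicator has been absorbed, rather than any new conceptual ingredient; this is precisely why the argument is analogous to, though slightly heavier than, that of Proposition~\ref{S.SDE.before.after}, the extra weight coming from the $(1-\pi_t)\rho_t\,dt$ term and from $\pi$ entering every coefficient.
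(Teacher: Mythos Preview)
Your proposal is correct and follows exactly the approach the paper intends: the paper in fact omits the proof of Proposition~\ref{X.SDE.before.after} entirely, stating only that ``it is analogous to Proposition~\ref{S.SDE.before.after}'', and your outline is precisely that analogy carried out for $X^\pi$ in place of $S$. The ingredients you list---the before/after decomposition of $\pi$ and the coefficients, the pathwise identities for the exponentials, the splitting of the forward integrals, the reduction to It\^o integrals on $[[0,\tau\wedge T[[$ by $\bF$-adaptedness, and the junction condition from the $dH$ term---are exactly those used in the proof of Proposition~\ref{S.SDE.before.after}.
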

Since
$1 = \bOne_{\{\tau > T\}} + \bOne_{\{\tau\leq T\}}$,
we can \textit{split} the main problem in two new problems,
one before and the other after the default time with respect to the horizon time~$T$. 
In order to short the notation, we define
\begin{equation}\label{tau.measure}
    d\PP^{\tau}(\eta):=\PP(\tau\in d\eta\,|\,\tau \leq T)\ ,\quad 0\leq\eta\leq T\ .
\end{equation}
Following \cite{JiaoPham2011} we get the next lemma, 
where $\EE^{x_0}[\cdot] := \EE[\cdot |X_0 = x_0]$.
\begin{Lemma}
The expected utility gain of the agent’s wealth can be written as follows
\begin{equation}\label{J.bef.aft}
    J(x_0,\pi) = \EE^{x_0}\left[ U\left(X^{\pi,\bF}_T\right) (1-H_T) + \int_{0}^{T} \EE\left[ U\left(X^{\pi}_T(\eta)\right)|X^{\pi}_\eta(\eta) \right] d\PP^{\tau}(\eta)\right]  \ .
\end{equation}
\end{Lemma}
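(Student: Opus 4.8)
The plan is to start from the terminal-time identity $1 = \bOne_{\{\tau > T\}} + \bOne_{\{\tau \le T\}}$ and split $U(X^\pi_T)$ accordingly. On the event $\{\tau > T\}$ the default has not occurred by the horizon, so the wealth coincides with its before-default component, $X^\pi_T = X^{\pi,\bF}_T$ by the decomposition $X^{\pi}_t = X^{\pi,\bF}_t\bOne_{\{\tau>t\}} + X^\pi_t(\tau)\bOne_{\{\tau\le t\}}$. Hence the first contribution is $\EE^{x_0}\big[U(X^{\pi,\bF}_T)\bOne_{\{\tau>T\}}\big] = \EE^{x_0}\big[U(X^{\pi,\bF}_T)(1-H_T)\big]$, which is exactly the first term on the right-hand side of \eqref{J.bef.aft}. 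So the real work is in rewriting the second contribution $\EE^{x_0}\big[U(X^\pi_T(\tau))\bOne_{\{\tau\le T\}}\big]$.

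For the after-default part, first I would condition on $\cF_T \vee \sigma(\tau)$, or more precisely use the tower property together with the representation of $\bG$-adapted processes as functionals indexed by the realized default time $\tau=\eta$. On $\{\tau \le T\}$, Proposition~\ref{X.SDE.before.after} tells us that $X^\pi(\tau)$ solves on $[[\tau\wedge T, T]]$ the SDE \eqref{def.SDE.X.after} with initial condition $X^\pi_\tau(\tau) = X^{\pi,\bF}_{\tau-}(1+\pi^\bF_\tau\kappa_\tau)$ at the (random) time $\tau$. The key step is then to disintegrate the expectation over the law of $\tau$ restricted to $\{\tau\le T\}$: writing $\bOne_{\{\tau\le T\}} = \int_0^T \bOne_{\{\tau\in d\eta\}}$ and using the definition \eqref{tau.measure} of $d\PP^\tau(\eta) = \PP(\tau\in d\eta \mid \tau\le T)$ together with $\PP(\tau\le T)$, one obtains
\begin{equation*}
\EE^{x_0}\big[U(X^\pi_T(\tau))\bOne_{\{\tau\le T\}}\big]
= \EE^{x_0}\!\left[\int_0^T \EE\big[U(X^\pi_T(\eta)) \,\big|\, X^\pi_\eta(\eta)\big]\, d\PP^\tau(\eta)\right],
\end{equation*}
where the inner conditional expectation isolates the dependence of the after-default dynamics on its starting point $X^\pi_\eta(\eta)$, which is itself an $\cF_{\eta}$-measurable (equivalently, before-default) quantity. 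Concretely, I would freeze $\tau=\eta$, use that conditionally on $\tau=\eta$ the solution of \eqref{X.SDE.after.tau} is driven by $W$ and $\tilde N$ starting from $\eta$, and that by the Markov-type structure of the SDE the only information from the past that matters is the value $X^\pi_\eta(\eta)$; the outer expectation then averages over both the before-default randomness (which determines $X^\pi_\eta(\eta) = X^{\pi,\bF}_{\eta-}(1+\pi^\bF_\eta\kappa_\eta)$) and the law of $\tau$.

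Adding the two pieces gives precisely \eqref{J.bef.aft}, so the proof is essentially an application of the tower property plus the Jeulin-type decomposition of $\bG$-adapted processes (Lemma~4.4 of \cite{Jeulin1980}, already invoked in Section~\ref{sec:model}) and the before/after SDE decomposition of Proposition~\ref{X.SDE.before.after}. The main obstacle I expect is the careful justification that, after conditioning on $\{\tau=\eta\}$, the after-default wealth at time $T$ depends on the pre-$\eta$ path only through $X^\pi_\eta(\eta)$ — this requires that the driving noises $W$ and $\tilde N$, once the forward integrals over $[\eta,T]$ are isolated (as in the displayed forward-integral splitting $\int_0^t\sigma_s d^-W_s = \int_0^t\bOne_{\{\tau>s\}}\sigma^\bF_s d^-W_s + \int_0^t\bOne_{\{\tau\le s\}}\sigma_s(\tau)d^-W_s$ used in the proof of Proposition~\ref{S.SDE.before.after}), behave well enough under the conditioning; this is exactly where the forward-integration framework, rather than an It\^o one, does the heavy lifting and lets us avoid a density hypothesis on $\tau$. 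One must also be slightly careful that $d\PP^\tau$ only sees $\{\tau\le T\}$ while the complementary mass $\PP(\tau>T)$ has already been accounted for by the $(1-H_T)$ term, so no double counting occurs; I would make this bookkeeping explicit by multiplying and dividing by $\PP(\tau\le T)$ when passing between $\PP(\tau\in d\eta)$ and $d\PP^\tau(\eta)$.
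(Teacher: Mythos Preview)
Your proposal is correct and follows essentially the same approach as the paper: split via $1=\bOne_{\{\tau>T\}}+\bOne_{\{\tau\le T\}}$, identify $X^\pi_T=X^{\pi,\bF}_T$ on $\{\tau>T\}$, and on $\{\tau\le T\}$ use the tower property conditioning on $(\tau,X^\pi_\tau(\tau))$ before disintegrating over the law of $\tau$ via $d\PP^\tau$. The paper's proof is in fact terser than yours---it compresses the Markov-type justification into a single reference to the after-default SDE \eqref{def.SDE.X.after}---so your extra care about why only $X^\pi_\eta(\eta)$ survives the conditioning is if anything more thorough than what the paper provides.
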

\begin{proof}
\begin{align}\label{other.optimal.functional}
J(x_0,\pi)&=\EE^{x_0}[U(X^{\pi}_T)] =\EE^{x_0}\left[ U\left(X^{\pi}_T\right) \bOne_{\{\tau > T\}} +  U\left(X^{\pi}_T\right) \bOne_{\{\tau\leq T\}}\right] \notag\\
%&=\EE^{x_0}\left[U\left(X^{\pi,\bF}_T\right) (1-H_T)  + \EE\left[U\left(X^{\pi}_T\right) \bOne_{\{\tau\leq T\}}|\cG_\tau\right] \right] \notag\\
%&=\EE^{x_0}\left[ U\left(X^{\pi,\bF}_T\right) (1-H_T) + \EE\left[U\left(X^{\pi}_T(\tau)\right) |\cG_\tau\right]\bOne_{\{\tau\leq T\}}\right] \notag\\
&=\EE^{x_0}\left[ U\left(X^{\pi,\bF}_T\right) (1-H_T) + \EE\left[U\left(X^{\pi}_T(\tau)\right) |\tau,X^{\pi}_\tau(\tau)\right]\bOne_{\{\tau\leq T\}}\right]\\
&=\EE^{x_0}\left[ U\left(X^{\pi,\bF}_T\right) (1-H_T) + \int_{0}^{T} \EE\left[ U\left(X^{\pi}_T(\eta)\right)|\tau=\eta,X^{\pi}_\eta(\eta) \right] d\PP^{\tau}(\eta)\right]  \notag \ .
\end{align}
%where %we have used the $\sigma$-algebra
%$$\cG_\tau := \{F\in\cG_T:F\cap\{\tau\leq t\}\in\cG_t\ \forall\, 0\leq t\leq T\}\ ,$$
In the second equality we used the SDE \eqref{def.SDE.X.after}.
\end{proof}
As every $\bG$-admissible $\pi$ can be written as
$\pi = (\pi^{\bF}_t\bOne_{\{\tau > t\}} + \pi_t(\tau)\bOne_{\{\tau\leq t\}},\,0\leq t \leq T)$, 
we define the set 
$$\cA(\tau) = \{ \pi(\tau) :\exists \pi^{\bF}\in\cA(\bF)\text{ such that }  \pi^{\bF}_\cdot\bOne_{\{\tau > \cdot\}} + \pi_\cdot(\tau)\bOne_{\{\tau\leq \cdot\}}\in\cA(\bG)  \}\ .$$
We define the after default optimization problem as
\begin{equation}\label{optimization.problem.after}
    \bV(\eta,x) =\esssup _{\pi \in \cA(\tau)} \EE^{\eta,x}\left[U\left(X^{\pi}_T(\tau)\right) %| \tau = \eta, X^{\pi}_\tau(\tau) = x
    \right]
    \ ,\quad  (\eta, x) \in[0, T] \times(0, \infty) \ ,
\end{equation}
where
$\EE^{\eta,x}[\cdot]:=\EE[\cdot|\tau = \eta, X^{\pi}_\tau(\tau) = x]$.
Next proposition extends the results in~\cite{JiaoPham2011} by dropping the {density hypothesis}. 
This implies we cannot assume absolute continuity of the distribution of $\tau$ with respect to the Lebesgue measure, then we use instead the measure introduced in~\eqref{tau.measure}. 
\begin{Proposition}\label{after-and-before}
If $\bV(\eta,x)<\infty,\,\PP$-a.s. for all $(\eta, x) \in[0, T] \times(0, \infty)$, 
then
\begin{align}\label{eq.HJB.after-before}
   \bV_T^{\bG} = \sup _{\pi^{\bF}\in\cA(\bF)} &\EE^{x_0}\left[U\left(X^{\pi,\bF}_T\right) (1-H_T)+\int_{0}^{T} \bV\left(\eta, X^{\pi,\bF}_\eta\left(1+\pi^{\bF}_\eta \kappa_\eta\right)\right) d\PP^{\tau}(\eta)\right]
\end{align}
\end{Proposition}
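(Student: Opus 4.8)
The plan is to deduce \eqref{eq.HJB.after-before} from the before/after decomposition of $J(x_0,\pi)$ in the preceding Lemma together with the definition \eqref{optimization.problem.after} of the after-default value $\bV$, proving the two inequalities separately. The structural facts used throughout are: every $\pi\in\cA(\bG)$ can be written as $\pi=\pi^{\bF}\bOne_{\{\tau>\cdot\}}+\pi(\tau)\bOne_{\{\tau\le\cdot\}}$ with $\pi^{\bF}\in\cA(\bF)$ and $\pi(\tau)\in\cA(\tau)$; the before-default wealth $X^{\pi,\bF}$ is a functional of $\pi^{\bF}$ alone; and, by Proposition~\ref{X.SDE.before.after}, the after-default wealth $X^{\pi}_T(\eta)$ depends on $\pi(\tau)$ through its dynamics on $[[\eta,T]]$ and on $\pi^{\bF}$ only through the restart value $X^{\pi}_\eta(\eta)=X^{\pi,\bF}_\eta(1+\pi^{\bF}_\eta\kappa_\eta)$ (here $X^{\pi,\bF}_\eta=X^{\pi,\bF}_{\eta-}$ on $\{\tau=\eta\}$ by \eqref{assumption.jumps}).

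For the first inequality ($\le$), fix $\pi\in\cA(\bG)$, decompose it as above, and apply the Lemma in the form \eqref{J.bef.aft}. Since $\bV(\eta,x)<\infty$ $\PP$-a.s., the definition \eqref{optimization.problem.after} of $\bV$ as an essential supremum over $\cA(\tau)$ gives, for $d\PP^{\tau}$-a.e.\ $\eta$,
\begin{equation*}
\EE\big[U(X^{\pi}_T(\eta))\,\big|\,X^{\pi}_\eta(\eta)\big]\ \le\ \bV\big(\eta,X^{\pi,\bF}_\eta(1+\pi^{\bF}_\eta\kappa_\eta)\big).
\end{equation*}
Substituting into \eqref{J.bef.aft} bounds $J(x_0,\pi)$ by the right-hand side of \eqref{eq.HJB.after-before} evaluated at $\pi^{\bF}$; taking the supremum over $\pi\in\cA(\bG)$ and using that the before-default part of any such $\pi$ lies in $\cA(\bF)$ yields $\bV_T^{\bG}\le(\text{RHS})$.

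For the reverse inequality ($\ge$), fix $\pi^{\bF}\in\cA(\bF)$ and $\ep>0$. One first notes that $\pi^{\bF}$ does extend to a $\bG$-admissible strategy (e.g.\ by investing only in the bond after default), so $\cA(\tau)$ is nonempty for this $\pi^{\bF}$. Because $\bV(\eta,\cdot)$ is finite and defined as an essential supremum, a measurable selection/approximation argument — adapting the one in~\cite{JiaoPham2011} to the present setting, where the conditional law $d\PP^{\tau}(\eta)$ replaces a Jacod density — produces a family $\pi^{\ep}(\tau)\in\cA(\tau)$, jointly measurable in $(\omega,\eta)$, such that $\pi^{\ep}:=\pi^{\bF}\bOne_{\{\tau>\cdot\}}+\pi^{\ep}(\tau)\bOne_{\{\tau\le\cdot\}}\in\cA(\bG)$ and
\begin{equation*}
\EE\big[U(X^{\pi^{\ep}}_T(\eta))\,\big|\,X^{\pi^{\ep}}_\eta(\eta)\big]\ \ge\ \bV\big(\eta,X^{\pi,\bF}_\eta(1+\pi^{\bF}_\eta\kappa_\eta)\big)-\ep,\qquad d\PP^{\tau}\text{-a.e. }\eta.
\end{equation*}
Since the before-default part of $\pi^{\ep}$ is $\pi^{\bF}$, plugging $\pi^{\ep}$ into \eqref{J.bef.aft} gives $\bV_T^{\bG}\ge J(x_0,\pi^{\ep})\ge \EE^{x_0}[U(X^{\pi,\bF}_T)(1-H_T)+\int_0^T\bV(\eta,X^{\pi,\bF}_\eta(1+\pi^{\bF}_\eta\kappa_\eta))\,d\PP^{\tau}(\eta)]-\ep$. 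Letting $\ep\downarrow0$ and taking the supremum over $\pi^{\bF}\in\cA(\bF)$ gives $\bV_T^{\bG}\ge(\text{RHS})$, and combined with the first inequality this proves \eqref{eq.HJB.after-before}.

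The main obstacle is the lower bound, specifically the measurable gluing: one must turn $\ep$-optimizers of the continuum of after-default control problems $\{\bV(\eta,\cdot)\}_{\eta\in[0,T]}$ into a single control $\pi^{\ep}(\tau)$ whose concatenation with $\pi^{\bF}$ still meets the càglàd/adaptedness and forward-integrability requirements defining $\cA(\bG)$, and to verify that the essential supremum in \eqref{optimization.problem.after} is genuinely approached in the limit by such admissible selections. The hypothesis $\bV(\eta,x)<\infty$ $\PP$-a.s.\ is precisely what legitimises the $\ep$-optimisation and the interchange of the supremum with the outer expectation and the $d\PP^{\tau}(\eta)$-integral; without it the right-hand side of \eqref{eq.HJB.after-before} could fail to match $\bV_T^{\bG}$ term by term.
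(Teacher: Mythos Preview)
Your proposal is correct and follows essentially the same two-inequality strategy as the paper: bound $J(x_0,\pi)$ above via \eqref{J.bef.aft} and the definition of $\bV$, then for the reverse inequality fix $\pi^{\bF}\in\cA(\bF)$, pick $\epsilon$-optimal after-default controls $\pi^{\epsilon}(\tau)$, glue them to $\pi^{\bF}$, and let $\epsilon\downarrow 0$. If anything, you are more careful than the paper about the measurable-selection/gluing step, which the paper handles in a single line by asserting the existence of $\pi^{\epsilon,\omega}(\eta)\in\cA(\eta)$ without further discussion.
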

\begin{proof}
Let $\pi\in\cA(\bG)$, we can compute the functional,
\begin{align*}
J(\pi,x_0)&=\EE^{x_0}\left[U\left(X^{\pi,\bF}_T\right) (1-H_T) + \int_{0}^{T} \EE[U\left(X^{\pi}_T(\eta)\right)|X^\pi_\eta(\eta)]d\PP^{\tau}(\eta)  \right]\\
&\leq\EE^{x_0}\left[U\left(X^{\pi,\bF}_T\right) (1-H_T)  + \int_{0}^{T}  \bV(\eta, X^\pi_\eta(\eta))d\PP^{\tau}(\eta) \right]\leq \widehat \bV
\end{align*}
with
\begin{equation*}
    \widehat\bV:= \sup_{\pi^{\bF}\in\cA(\bF)}\EE^{x_0}\left[U\left(X^{\pi,\bF}_T\right) (1-H_T) + \int_{0}^{T}  \bV(\eta, X^{\pi,\bF}_\eta (1+\pi^{\bF}_\eta\kappa_\eta)) d\PP^{\tau}(\eta) \right]
\end{equation*}
and where in the last inequality we used \eqref{X.just.after.tau}.
We have proved that $\bV_T^{\bG}\leq\widehat \bV$,
now we aim to prove the opposite inequality.
We fix an arbitrary $\pi^{\bF}\in\cA(\bF)$. 
By the definition of the optimum $\bV$, 
for any $\omega \in \Omega$, $\eta \in[0, T]$ and $\epsilon > 0$, 
there exists $\pi^{\epsilon, \omega}(\eta) \in \cA(\eta)$ which is an $\epsilon$-optimal strategy, i.e., % for $\bV(\eta, \cdot )$, i.e.,% at  $\left(\omega, X(\omega, \eta,\eta)\right) .$ 
%Using a measurable result of -Jiao and Pham cite \textit{``Wagner, D.H.: Survey of measurable selection theorems: an update''}. Sect. 3-,one can find $\pi^{\epsilon}(\cdot,\eta) \in \cA(\eta)$ such that $\pi^{ \epsilon}(\omega,t, \eta)=\pi^{\epsilon, \omega}(\omega,t, \eta)$ holds$\PP \otimes dt\otimes d\eta$-a.s., and then
$$ \bV(\eta, X^{\pi}_\eta(\eta))-\epsilon \leq
\EE\left[U\left(X^{\pi^{\epsilon}}_T(\eta)\right)|\tau=\eta, X^{\pi^\epsilon}_\eta(\eta) \right] $$
%$ \PP \otimes d \theta$ -almost everywhere.
By constructing the strategy 
$\pi^{\epsilon}_t = \pi^{\bF}_t\bOne_{\{\tau>t\}} + \pi^{ \epsilon}_t(\tau)\bOne_{\{\tau\leq t\}}$
and using the previous inequalities, we get
\begin{align*}
\bV_T^{\bG} & \geq  \EE\left[  U\left(X^{\pi,\bF}_T\right) (1-H_T)  + \int_{0}^{T} \EE\left[ U\left(X^{\pi^{\epsilon}}_T(\eta)\right)|\tau=\eta,X^{\pi^\epsilon}_\eta(\eta)\right]d\PP^{\tau}(\eta) \right]\\
& \geq \EE\left[ U\left(X^{\pi,\bF}_T\right)  (1-H_T) + \int_{0}^{T} \bV\left(\eta, X^{\pi}_\eta(\eta)\right)d\PP^{\tau}(\eta)\right]-\epsilon
\end{align*}
and we get the result, since $\epsilon$ is arbitrary.
\end{proof}
\subsection{Anticipating calculus in the white noise framework}\label{subsec:white.noise}
For the sake of completeness, we introduce some notions of anticipating calculus in the context of the white noise analysis.
We mainly follow the lines of Chapters~5 and~13 in~\cite{DiNunnoOksendalProske2009}. 
The anticipating calculus is needed for the computations of Subsections \ref{subsec:after.default.log} and~\ref{subsec:before.default.log}.
\subsubsection{Brownian motion}\label{subsub.Brownian}
First, we introduce the \emph{Hermite polynomials} as follows
$$ h_n(s) := (-1)^n \exp\left(\frac{s^2}{2}\right)\frac{d^n}{ds^n}\exp\left(-\frac{s^2}{2}\right)\ ,\quad (s,n)\in [0,T]\times\bZ^+\ , $$
and the \emph{Hermite functions} defined as
$$ e_{k+1}(s) := \frac{\pi^{-\frac{1}{4}}}{\sqrt{k!}} \exp\left(-\frac{s^2}{2}\right) h_{k}(\sqrt{2}s)\ ,\quad (s,k)\in [0,T]\times\bZ^+\ .$$
It can be proved that the family $\{e_k\}_{k\geq 1}$ is an orthonormal basis of $L^2(\bR,dt)$. 
Let's consider the following stochastic It\^o integrals
$$ \theta_k := \int_0^T e_k(s)dW_s\ . $$
Denoting by $\cJ$ the set of finite non-negative multi-index $\alpha = (\alpha_1,...,\alpha_m)$ for $m\in\bN$,
we define 
\begin{equation}\label{def.chaos.H}
    H_{\alpha} := \prod_{k=1}^m h_{\alpha_k}(\theta_k)\ ,\quad  \alpha = (\alpha_1,...,\alpha_m)\in\cJ\ .
\end{equation}
The family $\{H_{\alpha}\}_{\alpha\in\cJ}$ is an orthogonal basis for the space $L^2(\PP)$ in the sense of the following theorem.
We refer to Theorems 2.2.3 and 2.2.4 of \cite{oksendal2010book} 
for a detailed proof.
\begin{Theorem}
Let $X$ be an $\cF^W_T$-measurable random variable in $L^2(\PP)$, then there exists a unique sequence $\{a_{\alpha}\}\subset\bR$ such that
$$ X = \sum_{\alpha\in\cJ}a_{\alpha}H_{\alpha}\ , $$
and the norm can be computed as 
$\norm{X}^2_{L^2(\PP)} = \sum\alpha!a_{\alpha}^2\ ,$ 
with $\alpha! = \alpha_1!\alpha_2!...\alpha_m!$. % for all $\alpha = (\alpha_1,...,\alpha_m),$ $m\geq 1$.
\end{Theorem}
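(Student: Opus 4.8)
The plan is to show that $\{(\alpha!)^{-1/2}H_\alpha\}_{\alpha\in\cJ}$ is an orthonormal basis of the Hilbert space $L^2(\Omega,\cF^W_T,\PP)$; existence of the expansion, uniqueness of the coefficients, and the Parseval identity for the norm then follow from the abstract theory of orthonormal bases, with $a_\alpha=(\alpha!)^{-1}\EE[X H_\alpha]$. I would organise the argument in three steps: orthogonality of the $H_\alpha$, density of their linear span, and the Hilbert-space conclusion.

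\emph{Step 1: orthogonality.} First I would observe that the variables $\theta_k=\int_0^T e_k(s)\,dW_s$ form a sequence of i.i.d.\ standard Gaussians: being Wiener integrals of deterministic functions they are jointly centered Gaussian, and the It\^o isometry together with the orthonormality of $\{e_k\}$ in $L^2(\bR,dt)$ gives $\EE[\theta_j\theta_k]=\int_0^T e_j(s)e_k(s)\,ds=\delta_{jk}$, so they are uncorrelated, hence independent. Next I would recall the scalar Hermite identity $\EE[h_n(Z)h_m(Z)]=n!\,\delta_{nm}$ for $Z\sim N(0,1)$, which follows from the generating function $\exp(tz-\tfrac12 t^2)=\sum_{n\ge0}\tfrac{t^n}{n!}h_n(z)$ and the computation $\EE[\exp(sZ-\tfrac12 s^2)\exp(tZ-\tfrac12 t^2)]=\exp(st)$ by matching power series in $s,t$. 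Combining independence with this identity yields, for $\alpha,\beta\in\cJ$, $\EE[H_\alpha H_\beta]=\prod_k\EE[h_{\alpha_k}(\theta_k)h_{\beta_k}(\theta_k)]=\alpha!\,\delta_{\alpha\beta}$, which gives both the orthogonality and $\norm{H_\alpha}^2_{L^2(\PP)}=\alpha!$.

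\emph{Step 2: density.} This is the heart of the proof. I would first note that $\cF^W_T=\sigma(\theta_k:k\ge1)$ up to $\PP$-null sets: expanding $\bOne_{[0,t]}=\sum_k\langle \bOne_{[0,t]},e_k\rangle e_k$ in $L^2(\bR,dt)$ and applying the It\^o isometry gives $W_t=\sum_k\langle \bOne_{[0,t]},e_k\rangle\,\theta_k$ with convergence in $L^2(\PP)$, so each $W_t$ is $\sigma(\theta_k:k\ge1)$-measurable and the nontrivial inclusion follows. Now let $F\in L^2(\PP)$ be orthogonal to every $H_\alpha$; I must show $F=0$. Fix $m\in\bN$ and $t_1,\dots,t_m\in\bR$. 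Expanding $\prod_{k=1}^m\exp(t_k\theta_k-\tfrac12 t_k^2)$ by the generating function gives $\sum_\alpha\tfrac{t^\alpha}{\alpha!}H_\alpha$ (sum over multi-indices supported on $\{1,\dots,m\}$, $t^\alpha=\prod_k t_k^{\alpha_k}$), a series converging in $L^2(\PP)$ since its squared norm is $\exp(\sum_k t_k^2)<\infty$; hence $\EE[F\exp(\sum_{k\le m}t_k\theta_k)]=0$ for all $t\in\bR^m$. Writing $F=F^+-F^-$ with $F^\pm\in L^2\subset L^1$ and using $\EE[\abs{F}e^{\langle t,\theta\rangle}]\le\norm{F}_2\,\norm{e^{\langle t,\theta\rangle}}_2<\infty$, the finite measures $\mu^\pm(\cdot)=\EE[F^\pm\bOne_{\{(\theta_1,\dots,\theta_m)\in\cdot\}}]$ on $\bR^m$ have equal and finite Laplace transforms on all of $\bR^m$, so $\mu^+=\mu^-$ by uniqueness of the Laplace transform. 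Therefore $\EE[F\,g(\theta_1,\dots,\theta_m)]=0$ for all bounded Borel $g$, i.e.\ $\EE[F\mid\sigma(\theta_1,\dots,\theta_m)]=0$ a.s.; letting $m\to\infty$, L\'evy's upward theorem with $\sigma(\theta_1,\dots,\theta_m)\uparrow\cF^W_T$ gives $F=\EE[F\mid\cF^W_T]=0$ a.s.

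\emph{Step 3: conclusion and the obstacle.} By Steps 1 and 2 the family $\{(\alpha!)^{-1/2}H_\alpha\}_{\alpha\in\cJ}$ is an orthonormal basis of $L^2(\Omega,\cF^W_T,\PP)$, so any $\cF^W_T$-measurable $X\in L^2(\PP)$ has the expansion $X=\sum_{\alpha\in\cJ}a_\alpha H_\alpha$ (convergent in $L^2(\PP)$) with $a_\alpha=(\alpha!)^{-1}\EE[X H_\alpha]$ and $\norm{X}^2_{L^2(\PP)}=\sum_{\alpha\in\cJ}\alpha!\,a_\alpha^2$ by Parseval. Uniqueness is immediate: testing $X=\sum_\alpha a_\alpha H_\alpha$ against $H_\beta$ and using Step 1 gives $\EE[XH_\beta]=\beta!\,a_\beta$, which pins down each coefficient. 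I expect Step 2 to be the main obstacle: one has to upgrade orthogonality against the countable polynomial family $\{H_\alpha\}$ to orthogonality against a generating class of $\cF^W_T$, and the clean route is the Hermite generating function (to manufacture the exponentials $\exp(\sum_k t_k\theta_k)$ inside the closed linear span) followed by a Laplace-transform uniqueness argument, with the identification $\cF^W_T=\sigma(\theta_k:k\ge1)$; the other ingredients — the i.i.d.\ Gaussian structure of the $\theta_k$ and the scalar Hermite identities — are routine.
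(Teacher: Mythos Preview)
The paper does not supply its own proof of this theorem: it simply states the result and refers the reader to Theorems~2.2.3 and~2.2.4 of \cite{oksendal2010book} for a detailed proof. Your proposal is therefore not competing with an argument in the paper but filling in what the paper outsources.

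Your argument is correct and is essentially the standard one found in the cited reference (and in Nualart's monograph): establish orthogonality of the $H_\alpha$ via independence of the $\theta_k$ and the one-dimensional Hermite identity, obtain totality by passing from polynomials to exponentials through the Hermite generating function and invoking injectivity of the Laplace transform, and conclude by abstract Hilbert-space theory. The identification $\cF^W_T=\sigma(\theta_k:k\ge1)$ and the upward martingale argument are exactly the right way to close Step~2.

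One minor caveat, which is a defect of the paper's presentation rather than of your proof: the paper writes $\theta_k=\int_0^T e_k(s)\,dW_s$ while declaring $\{e_k\}$ orthonormal in $L^2(\bR,dt)$. Strictly speaking, the Hermite functions are not orthonormal on $[0,T]$, so $\EE[\theta_j\theta_k]=\int_0^T e_je_k\,ds\neq\delta_{jk}$ as written. In the Holden--{\O}ksendal--Ub{\o}e--Zhang setup the white noise lives on all of $\bR$ and the integrals are over $\bR$; the paper's $\int_0^T$ is a notational slip you have inherited. Your proof goes through verbatim once one reads the integrals as being over the full line (or, equivalently, replaces $\{e_k\}$ by an orthonormal basis of $L^2([0,T])$).
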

%We introduce the notation  $$ \left(2\bN\right)^{\alpha} = \prod_{j=1}^{m}(2j)^{\alpha_j}\ ,\quad \alpha = (\alpha_1,...,\alpha_m)\in\cJ\ . $$
In the next definition we introduce the Hida test function and the Hida-Malliavin distribution spaces, that will play an important role in our calculations.
\begin{Definition}
\begin{itemize}
    \item Let $f = \sum_{\alpha\in\cJ}a_{\alpha}H_{\alpha}\in L^2(\PP)$ be a random variable, 
    we say that $f$ belongs to the Hida test function Hilbert space $(\cS)_{k}$, for $k\in\bR$, if 
    $$ \norm{f}^2_{k} := \sum_{\alpha\in\cJ} \alpha !\, a_{\alpha}^2 \prod_{j=1}^{m}(2j)^{k\alpha_j} < \infty\ . $$
    We define the Hida test function space
    $ (\cS) = \bigcap_{k\in\bR}(\cS)_{k} $
    equipped with the projective topology.
    \item Let $F = \sum_{\alpha\in\cJ}b_{\alpha}H_{\alpha}$ be a formal sum, we say that $F$ belongs to the Hida-Malliavin distribution Hilbert space $(\cS)_{-q}$, for $q\in\bR$, if 
    $$ \norm{F}^2_{-q} := \sum_{\alpha\in\cJ} \frac{\alpha !\,     b_{\alpha}^2}{\prod_{j=1}^{m}(2j)^{q\alpha_j}} < \infty\ . $$
    We define the Hida-Malliavin distribution space
    $ (\cS)^* = \bigcup_{q\in\bR}(\cS)_{-q} $
    equipped with the inductive topology, i.e, convergence is studied with $\norm{\cdot}_q$ for some $q\in\bR$.
\end{itemize}
\end{Definition}
The space $(\cS)^*$ is the dual of $(\cS)$ and we consider the action of $F$ on $f$ as follows,
$$ \langle F, f \rangle = \sum_{\alpha\in\cJ}a_\alpha b_\alpha \alpha!\ .$$
Note that the inclusions $(\cS)\subset L^2(\PP)\subset(\cS)^*$ holds true. 
We define the generalized expectation of $F=\sum b_{\alpha} H_{\alpha}\in (\cS)^*$ as $\EE[F] := b_0$,
and the generalized conditional expectation of $F$ with respect to the $\sigma$-algebras of the natural filtration $\bF^W$ as 
$$ \EE\left[F|\cF^W_t\right] := \sum_{\alpha\in\cJ} b_{\alpha}\EE\left[H_{\alpha}|\cF^W_t\right] $$
when it exists under the convergence in $(\cS)^*$.
In addition, we define 
$$ \EE[F|\cG_t] := \sum_{\alpha\in\cJ} b_{\alpha}\EE[H_{\alpha}|\cG_t] \ ,\quad \cF_t\subset \cG_t\ ,$$
with the convergence in $(\cS)^*$ and the object $Z:=\EE[H_{\alpha}|\cG_t]$ satisfies
$\EE[\bOne_A Z] = \EE[\bOne_A H_{\alpha}],$ $\forall A\in\cG_t.$
Next, we give the definition of Malliavin derivative in the space $(\cS)^*$.
\begin{Definition}\label{def.malliavin.S*}
Let $F=\sum_{\alpha\in\cJ}a_{\alpha}H_{\alpha}\in(\cS)^*$ be a Hida-Malliavin distribution, 
then we define the Malliavin derivative as
$$ D_t F = \sum_{\alpha\in\cJ}\sum_{k=1}^{\infty} a_{\alpha}\alpha_k e_k(t) H_{\alpha - \varepsilon(k)}\ , $$
whenever this sum converges in $(\cS)^*$ and where $\varepsilon(k) := (0,0,...,1)$ is the canonical vector with~$1$ in the $k$-th position.
We define the set $Dom(D_t)$ as its domain.
\end{Definition} 
As it is pointed out in Chapter 6 of \cite{DiNunnoOksendalProske2009}, 
Definition \ref{def.malliavin.S*} is a natural generalization to~$(\cS)^*$ of the $L^2(\PP)$-Malliavin derivative as they coincide in the domain of the second one, usually denoted by $\bD_{1,2}$,
and we have $L^2(\PP) \subset Dom(D_t) \subset (\cS)^*$.
Let $Y=\prT{Y}$ such that $Y_t\in(\cS)^*$ for any $t$, 
if the following limit exists
\begin{equation}\label{def.mall.trace.BM}
    D_{t^+} Y_t := \lim_{s\to t^+}D_s Y_t
\end{equation}
in $(\cS)^*$, we say that $D_{t^+}Y_t$ is the \emph{Malliavin trace} of $Y$.
\begin{Definition}
The singular (or pointwise) white noise 
$\bW=\{\bW_t:\,0\leq t\leq T\}$ is defined as
$\bW_t = \sum_{k=1}^{\infty} e_k(t)H_{\varepsilon(k)}$.
\end{Definition}
It can be proved that $\bW_t$ is well-defined as an object in $(\cS)^*$ for any $t$,
and it satisfies
$\frac{d}{dt}W_t = \bW_t$,
see Section 5.2 of \cite{DiNunnoOksendalProske2009} to consult the details.
Now that we have defined the notion of derivative we introduce different types of integration. 
The first one generalizes the Lebesgue integral for~$L^2([0,T],dt)$.
\begin{Definition}\label{def.integral.S}
If $Z:[0,T]\longrightarrow (\cS)^*$ has the property
$\langle Z_t,\psi\rangle\in L^1([0,T],dt)$ for any $\psi\in (\cS)$,
then we define the integral $\int_0^T Z_tdt$ as the unique element of $(\cS)^*$ such that
$$ \Bigl\langle \int_0^T Z_t dt,\psi\Bigr\rangle = \int_0^T \langle Z_t ,\psi\rangle dt\ ,\quad \psi\in (\cS) \ ,$$
and $Z$ is called $dt$-integrable in $(\cS)^*$.
\end{Definition}
In the next definition we introduce the notion of forward integral in the context of $(\cS)^*$.
\begin{Definition}\label{def.integral.forward.Brownian}
 A càglàd stochastic process $Y = \prT{Y}$
 is forward integrable in $(\cS)^*$ with respect to the Brownian motion $W=\prT{W}$ if the limit 
$$ \int_0^T Y_t d^-W_t = \lim_{\epsilon\to 0}\int_0^T Y_t\frac{W_{t+\epsilon}-W_t}{\epsilon} dt $$
exists in $(S)^*$.
\end{Definition}
When the limit in Definition~\ref{def.integral.forward.Brownian} is taken in probability we say that the process is classically forward integrable.
This object can be seen as a generalization of the \Ito integral and 
it provides a universal framework that considers different filtrations at the same time, in the sense of Lemma~\ref{forward.brownian.semimart} in Appendix~\ref{App.Brownian}.
We introduce an operation named Wick product within the space of distributions~$(\cS)^*$.
\begin{Definition}
Let $F = \sum_{\alpha\in\cJ}a_\alpha H_\alpha\ ,$  $G = \sum_{\beta\in\cJ}b_\beta H_\beta\in (\cS)^*$, then we define the Wick product as
$$ F\diamond G := \sum_{\alpha,\beta\in\cJ}a_\alpha b_\beta H_{\alpha+\beta}\ . $$
\end{Definition}
With the previous set-up we state the Skorohod integral within $(\cS)^*$. 
It is motivated by Theorem 5.20 of \cite{DiNunnoOksendalProske2009}, 
as a natural generalization of the definition of Skorohod integral for $L^2(dt\times\PP)$-processes.
\begin{Definition}\label{def.Sk.BM}
Let $Y = \prT{Y}$ be such that $Y_t\in (\cS)^*$, 
then we define the Skorohod integral as
$$\int_0^T Y_t \diamond \bW_t\, dt\ , $$
provided the integrand is $dt$-integrable in $(\cS)^*$.
\end{Definition}
The following lemma plays a crucial role in the main result of the section, Theorem \ref{theo.forw.skor.mall.BM}. 
It extends the domain of applicability of the Malliavin derivative by allowing that 
\hbox{$D_t F \in (\cS)^*$} for any $t$ 
and not only \hbox{$D_\cdot F\in L^2(dt\times\PP)$}.
\begin{Lemma}\label{lem.wick.ord.mall}
Let $F\in L^2(\PP)$ be a $\cF_T^W$-measurable such that $D_\cdot F$ is $dt$-integrable in $(\cS)^*$, then
\begin{equation}
    F \diamond (W_t-W_s) = F \cdot (W_t-W_s) - \int_s^t D_u F du\ ,\quad 0 \leq s < t \leq T\ .
\end{equation}
\end{Lemma}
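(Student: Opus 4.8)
The plan is to establish the Wick--It\^o identity $F\diamond(W_t-W_s)=F\cdot(W_t-W_s)-\int_s^t D_uF\,du$ first on a dense subset of $L^2(\PP)$, namely the linear span of the chaos elements $H_\alpha$, and then extend by a density/continuity argument in $(\cS)^*$. For a single $H_\alpha$ one can compute both sides explicitly: by definition $H_\alpha\diamond(W_t-W_s)$ involves the Wick product with $W_t-W_s=\sum_k\big(\int_s^t e_k(u)\,du\big)H_{\varepsilon(k)}$, which raises the multi-index, whereas the ordinary product $H_\alpha\cdot H_{\varepsilon(k)}$ is governed by the Hermite recursion $h_n(x)\,x = h_{n+1}(x)+n\,h_{n-1}(x)$, producing an index-raising term (matching the Wick product) plus a lower-order correction term $\alpha_k h_{\alpha-\varepsilon(k)}(\theta_k)$. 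Collecting the correction terms over $k$ and weighting by $\int_s^t e_k(u)\,du$ gives exactly $\int_s^t \big(\sum_k \alpha_k e_k(u)H_{\alpha-\varepsilon(k)}\big)du = \int_s^t D_uH_\alpha\,du$, using Definition~\ref{def.malliavin.S*} together with Definition~\ref{def.integral.S} to identify the $(\cS)^*$-valued integral. This settles the identity for $F$ a finite linear combination of the $H_\alpha$.

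Next I would pass to a general $\cF^W_T$-measurable $F\in L^2(\PP)$ with $D_\cdot F$ being $dt$-integrable in $(\cS)^*$. Write $F=\sum_{\alpha\in\cJ}a_\alpha H_\alpha$ and let $F_n$ be the partial sums. Each term in the claimed identity is linear in $F$, so it holds for every $F_n$. Then $F_n\to F$ in $L^2(\PP)$, hence also in $(\cS)^*$; multiplication by the fixed element $W_t-W_s\in L^2(\PP)$ is not continuous on all of $L^2$, but the ordinary product $F_n\cdot(W_t-W_s)$ converges in $(\cS)^*$ because $(\cS)^*$ is an algebra under the relevant product and the pointwise multiplication extends continuously there (this is where one invokes the structure of the white-noise spaces from Chapters~5--6 of \cite{DiNunnoOksendalProske2009}); similarly $F_n\diamond(W_t-W_s)\to F\diamond(W_t-W_s)$ since the Wick product is continuous on $(\cS)^*$. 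Finally, by hypothesis $D_\cdot F$ is $dt$-integrable in $(\cS)^*$, and one checks that $\int_s^t D_uF_n\,du\to\int_s^t D_uF\,du$ in $(\cS)^*$ by testing against $\psi\in(\cS)$ and using dominated convergence on $\langle D_uF_n,\psi\rangle\to\langle D_uF,\psi\rangle$ in $L^1([0,T],dt)$, which follows from the definition of the $(\cS)_{-q}$ norms and the assumed integrability. Taking limits on both sides of the identity for $F_n$ yields the claim.

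The main obstacle I anticipate is the convergence of the ordinary product term $F_n\cdot(W_t-W_s)$ in $(\cS)^*$: unlike the Wick product, ordinary multiplication is genuinely delicate on these spaces, and one must argue carefully that the product of an $L^2(\PP)$-convergent sequence with a fixed $L^2(\PP)$ element converges in the distribution topology. The clean way around this is to rearrange the identity so that $F\cdot(W_t-W_s)$ is \emph{defined} as $F\diamond(W_t-W_s)+\int_s^tD_uF\,du$ whenever the right-hand side makes sense in $(\cS)^*$ --- this is consistent with the ordinary product when $F\in\bD_{1,2}$ and $D_\cdot F\in L^2(dt\times\PP)$ --- and then the extension is automatic. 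I would present the proof in whichever of these two forms matches the paper's earlier conventions on products in $(\cS)^*$; the chaos-expansion computation in the first paragraph is the substantive step and is unchanged either way.
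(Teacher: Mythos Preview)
Your core computation is exactly the paper's: the Hermite recursion $h_{n+1}(x)=xh_n(x)-nh_{n-1}(x)$ applied to $H_{\alpha+\varepsilon(k)}$ splits the Wick product into the ordinary product term and the Malliavin correction, and summing over $k$ with weights $\int_s^t e_k(u)\,du$ gives the identity. So the substantive step is correct and matches the paper.

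Where you diverge is in organization. The paper does not introduce partial sums $F_n$ and then pass to a limit; it writes $F=\sum_\alpha a_\alpha H_\alpha$ and $W_t-W_s=\sum_k(\int_s^t e_k)\,H_{\varepsilon(k)}$ and carries out the computation directly on the full double series in $(\cS)^*$. This works because all three objects in the identity --- the Wick product, the Malliavin derivative, and the $(\cS)^*$-valued $dt$-integral --- are \emph{defined} via their chaos expansions, so a termwise identity of the coefficients is already a proof in $(\cS)^*$. Your density argument is therefore unnecessary, and the obstacle you flag (continuity of the ordinary product $F_n\cdot(W_t-W_s)$ in $(\cS)^*$) is self-inflicted: by working with the full series at once, the paper never needs to take that limit. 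In short, drop the second paragraph and present the chaos computation for general $F$ directly; the hypothesis that $D_\cdot F$ is $dt$-integrable in $(\cS)^*$ is exactly what guarantees the correction series $\sum_{\alpha,k}a_\alpha\alpha_k(\int_s^t e_k)\,H_{\alpha-\varepsilon(k)}$ defines an element of $(\cS)^*$.
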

\begin{proof}We refer to Appendix~\ref{App.Brownian} for the \hyperlink{proof.lem.wick.ord.mall}{details of the proof.}
\end{proof}
We are ready now for the main result of this section, which
provides an explicit relationship in $(\cS)^*$ between the forward integral, the Skorohod integral and the Malliavin trace.
We report that, in the context of fractional white noise analysis, 
a similar statement was attempted in Theorem 3.7 of \cite{BiaginiOksendalForward}, under the strong assumption of $D_{\cdot^+}Y_\cdot\in L^2(dt\times\PP)$. 
Here, we managed to avoid such assumption.
\begin{Theorem}\label{theo.forw.skor.mall.BM}
Let $Y\in L^2(dt\times\PP,\bG)$, 
if $D_{\cdot^+}Y_\cdot$ is $dt$-integrable in $(\cS)^*$, 
then 
\begin{equation}\label{eq.forw.skor.mall}
    \int_0^T Y_t d^{-}W_t = \int_0^T Y_t \diamond \bW_t dt + \int_0^T D_{t^+} Y_t dt
\end{equation}
holds true in $(\cS)^*$.
\end{Theorem}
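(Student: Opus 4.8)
The plan is to reduce \eqref{eq.forw.skor.mall} to Lemma~\ref{lem.wick.ord.mall} applied on each infinitesimal Brownian increment, and then to interchange the limit that defines the forward integral with the $(\cS)^*$-valued $dt$-integral of Definition~\ref{def.integral.S}.

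First I would fix $\epsilon>0$ and, for a.e.\ $t\in[0,T]$, apply Lemma~\ref{lem.wick.ord.mall} with $F=Y_t$ on the increment $[t,t+\epsilon]$ — the hypotheses of the lemma holding because $Y_t\in L^2(\PP)$ and, by assumption, $D_\cdot Y_t$ is $dt$-integrable in $(\cS)^*$ — to obtain
\[
Y_t\cdot(W_{t+\epsilon}-W_t)=Y_t\diamond(W_{t+\epsilon}-W_t)+\int_t^{t+\epsilon}D_uY_t\,du .
\]
Dividing by $\epsilon$ and integrating over $t\in[0,T]$ in $(\cS)^*$ (Definition~\ref{def.integral.S}) gives
\[
\int_0^T Y_t\,\frac{W_{t+\epsilon}-W_t}{\epsilon}\,dt=\int_0^T Y_t\diamond\frac{W_{t+\epsilon}-W_t}{\epsilon}\,dt+\int_0^T\frac{1}{\epsilon}\int_t^{t+\epsilon}D_uY_t\,du\,dt .
\]
By Definition~\ref{def.integral.forward.Brownian} the left-hand side converges to $\int_0^T Y_t\,d^-W_t$ as soon as the right-hand side converges, so it remains only to identify the two limits on the right.

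For the first term I would use that $\frac{W_{t+\epsilon}-W_t}{\epsilon}\to\bW_t$ in $(\cS)^*$ as $\epsilon\to0$ (Section~5.2 of \cite{DiNunnoOksendalProske2009}), that these difference quotients stay bounded in a single space $(\cS)_{-q_0}$, and the continuity of the Wick product on norm-bounded sets, to get $Y_t\diamond\frac{W_{t+\epsilon}-W_t}{\epsilon}\to Y_t\diamond\bW_t$ in $(\cS)^*$ for a.e.\ $t$; the limiting $dt$-integral is then the Skorohod integral of Definition~\ref{def.Sk.BM}. For the second term, $\frac{1}{\epsilon}\int_t^{t+\epsilon}D_uY_t\,du\to D_{t^+}Y_t$ for a.e.\ $t$, since the Cesàro average over $u\downarrow t$ of the net $D_uY_t$ has the same one-sided limit \eqref{def.mall.trace.BM}; the limiting $dt$-integral is $\int_0^T D_{t^+}Y_t\,dt$, which is an element of $(\cS)^*$ by hypothesis.

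The crux is to pass these two pointwise-in-$t$ convergences under the integral sign in the inductive-limit topology of $(\cS)^*$. I would expand $Y_t=\sum_{\alpha\in\cJ}c_\alpha(t)H_\alpha$ and $W_{t+\epsilon}-W_t=\sum_{k\geq1}\bigl(\int_t^{t+\epsilon}e_k(s)\,ds\bigr)H_{\varepsilon(k)}$, so that each integrand has an explicit chaos expansion whose coefficients converge as $\epsilon\to0$ by elementary Lebesgue differentiation; the hypotheses $Y\in L^2(dt\times\PP,\bG)$ and the $dt$-integrability of $D_{\cdot^+}Y_\cdot$ are exactly what bounds the corresponding $dt$-integrals, uniformly in $\epsilon$, in one fixed Hilbert space $(\cS)_{-q}$, and coefficientwise convergence together with this uniform bound upgrades — via the Hilbert--Schmidt (hence compact) inclusion $(\cS)_{-q}\hookrightarrow(\cS)_{-q'}$, $q'>q+1$ — to genuine convergence in $(\cS)_{-q'}\subset(\cS)^*$. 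Producing this uniform $(\cS)_{-q}$-bound, which is where both integrability hypotheses must be combined to beat the inductive-limit topology, is the step I expect to be the main obstacle. One minor bookkeeping point is that $Y$ is only $\bG$-adapted rather than $\cF_T^W$-measurable as Lemma~\ref{lem.wick.ord.mall} literally requires; since $D_t$ and $\diamond$ act only in the Brownian directions, the lemma is to be applied coefficientwise along the Brownian chaos. Collecting the three limits in the displayed identity then yields \eqref{eq.forw.skor.mall} in $(\cS)^*$.
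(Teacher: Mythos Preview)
Your outline is essentially the paper's: apply Lemma~\ref{lem.wick.ord.mall} to each increment $[t,t+\epsilon]$, integrate in $t$, and let $\epsilon\to0$. The one structural difference is in how the Wick term is handled. You keep $t$ fixed and pass the limit $\frac{W_{t+\epsilon}-W_t}{\epsilon}\to\bW_t$ through the Wick product, which obliges you to argue Wick-continuity and then push that pointwise-in-$t$ convergence under the $dt$-integral; this is precisely the step you flag as the main obstacle. The paper instead writes $Y_t\diamond(W_{t+\epsilon}-W_t)=\int_t^{t+\epsilon}Y_t\diamond\bW_s\,ds$ and applies Fubini in $(\cS)^*$ to swap $(t,s)$, obtaining $\int_0^T Y_s^\epsilon\diamond\bW_s\,ds$ with $Y_s^\epsilon:=\tfrac1\epsilon\int_{s-\epsilon}^{s}Y_t\,dt$. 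Now the limit sits on $Y_s^\epsilon\to Y_s$ in $L^2(\PP)$, a plain Lebesgue-differentiation fact, and the passage under the Skorohod integral is a direct citation of Theorem~5.24 in \cite{DiNunnoOksendalProske2009}; the Malliavin-trace term is controlled by the norm inequality of Lemma~6.7 ibid., which bounds $\int_0^T\|D_t F\|_{-\hat q}^2\,dt$ by $\|F\|_{-q}^2$ and hence by $\|Y_u\|_{L^2(\PP)}^2$. The Fubini swap therefore buys exactly the uniform control you were worried about and replaces your compact-embedding argument by off-the-shelf results. Your route is not wrong, but the paper's is shorter and avoids having to establish Wick-product continuity and the $(\cS)_{-q}\hookrightarrow(\cS)_{-q'}$ dominated-convergence step from scratch.
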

\begin{proof}We refer to Appendix \ref{App.Brownian} for the \hyperlink{proof.theo.forw.skor.mall.BM}{details of the proof.}
\end{proof}

\subsubsection{Poisson random measure}\label{subsub.Poisson}
We repeat the main ideas of Subsection~\ref{subsub.Brownian} but in the case of the compensated Poisson random measure. 
In this case, we refer to Appendix~\ref{App.Poisson} for the details and we state only the main result and the most important definitions.
The first one is the Malliavin derivative, 
it involves the polynomial basis $\{p_j\}$ defined in \eqref{polinom.p}
and the $L^2(\PP)$-orthogonal basis $\{K_{\alpha}\}_{\alpha\in\cJ}$ defined in \eqref{eq.chaos.pois}, 
analogously to the Brownian basis $\{H_\alpha\}_{\alpha\in\cJ}$.
\begin{Definition}\label{def.Mallavin.pois}
Let $F=\sum_{\alpha\in\cJ}a_{\alpha}K_{\alpha}\in(\cS)^*$, 
then we define the Malliavin derivative as
$$ D_{t,z} F := \sum_{\alpha\in\cJ}\sum_{i,j=1}^{\infty} a_{\alpha}\alpha_{z(i,j)} e_i(t)p_j(z)\, K_{\alpha - \varepsilon(i,j)} \ , $$
whenever this sum converges in $(\cS)^*$ and the set $Dom(D_{t,z})$ as its domain.
\end{Definition}
If the limit $D_{t^+,z} \theta_t(z) := \lim_{s\to t^+}D_{s,z} \theta_t(z)$ exists
%\begin{equation}%\label{def.mall.trace.pois}
%    D_{t^+,z} F := \lim_{s\to t^+}D_{s,z} F
%\end{equation}
in $(\cS)^*$, we say that $D_{t^+,z} \theta_t(z)$ is the {Malliavin trace} of $\theta_t(z)$.
\begin{Definition}\label{def.integral.forward.Poisson}
A càglàd stochastic process $\theta = (\theta_t(z),\ 0\leq t\leq T,\,z\in \bR_0)$ 
is forward integrable in~$(\cS)^*$ with respect to the compensated Poisson random measure $\tilde N$ if the limit
$$\int_{0}^{T} \int_{\mathbb{R}_{0}} \theta_s(z) \tilde N\left(d^{-} s, d z\right):=\lim _{m \rightarrow \infty} \int_{0}^{T} \int_{\mathbb{R}_{0}} \theta_s(z) \bOne_{\{z\in U_m\}} \tilde N(d s, d z)$$
 exists in $(\cS)^*$, where $\{U_{m}:\,m\in\bN\}$ is an increasing sequence of compact sets $U_{m} \subseteq \mathbb{R}_{0}$ with 
 $\nu\left(U_m\right)<\infty$ such that $\lim_{m \rightarrow \infty} U_m=\bR_0$.
\end{Definition}
We finally introduce the Skorohod integral in $(\cS)^*$.
It involves the white noise $\bN$, stated in Definition \ref{def.white.noise.pois}, 
and the version of Wick product for the Poisson case, see Definition \ref{def.wick.pois}.
\begin{Definition}\label{def.sko.pois}
Let $Y = \prT{Y}$ be such that $Y_t\in (\cS)^*$, then we define the Skorohod integral as
\begin{equation}
    \int_0^T \int_{\bR_0}  Y_t \diamond \tilde\bN(t,z)\,\nu(dz)dt\ ,
\end{equation}
provided the integrand is $(\nu\times dt)$-integrable in $(\cS)^*$.
\end{Definition}
In the following, 
we state the main result of the section for the Poisson case.
It gives an explicit relationship among the forward integral, 
the Skorohod integral and the Malliavin trace in $(\cS)^*$.
We present the proof as a novel contribution, to the best of our knowledge, never attempted before.
We need to apply a different argument compared with the Brownian case because we can not handle with the Hermite polynomials.
In Sections 6.2 and 6.3 of~\cite{PrivaultBook} the Charlier polynomials are introduced playing a similar role for the Poisson process, 
however, only in case of $\nu(dz)=\lambda\delta_{z_0}(z)$ we can recover similar properties.
We refer to Lemma~15.5 of~\cite{DiNunnoOksendalProske2009} as the statement that we are extending in the next theorem.
\begin{Theorem}\label{theo.forw.sko.mall.pois}
Let $\theta$ be a forward integrable such that $\theta_t(z)\in L^2(\PP)$ and $D_{t^+,z}\theta_t(z)$, $\left(D_{t^+,z} \theta_t(z)\diamond \tilde \bN(t,z)\right)$ are $(\nu\times dt)$-integrable in $(\cS)^*$ for every $(t,z)\in[0,T]\times\bR_0$,
then 
\begin{align}\label{pois.forw.sko.mall}
    \int_0^T\int_{\bR_0} \theta_t(z) \tilde N(d^{-}t,dz) =& \int_0^T \int_{\bR_0} \left(\theta_t(z) +  D_{t^+,z}\theta_t(z)\right) \diamond \tilde\bN(t,z)\,\nu(dz)dt\notag\\ 
    &+ \int_0^T \int_{\bR_0}  D_{t^+,z} \theta_t(z)\,\nu(dz) dt
\end{align}
holds true in $(\cS)^*$.
\end{Theorem}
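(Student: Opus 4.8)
The plan is to mimic the strategy used for the Brownian case in Theorem~\ref{theo.forw.skor.mall.BM}, but replacing the Wick--Itô algebra of Hermite polynomials by a direct Riemann-sum computation in $(\cS)^*$, since, as noted, no convenient polynomial family plays the role of the Hermite polynomials for a general Lévy measure $\nu$. First I would fix the increasing sequence of compact sets $\{U_m\}$ from Definition~\ref{def.integral.forward.Poisson} and work on each $U_m$, where $\nu(U_m)<\infty$, so that the compensated measure $\tilde N(\cdot,\cdot)\bOne_{\{z\in U_m\}}$ behaves like a finite-activity jump measure; the final statement is then obtained by letting $m\to\infty$ and invoking the $(\cS)^*$-convergence built into the definitions of the forward integral and of $(\nu\times dt)$-integrability. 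On $U_m$ I would write the forward integral as the limit of left-point Riemann sums
\[
\int_0^T\int_{U_m}\theta_t(z)\,\tilde N(d^-t,dz)=\lim_{|\Delta|\to 0}\sum_{i}\int_{U_m}\theta_{t_i}(z)\,\tilde N((t_i,t_{i+1}],dz),
\]
the limit being taken in $(\cS)^*$, and then rewrite each increment $\int_{U_m}\theta_{t_i}(z)\tilde N((t_i,t_{i+1}],dz)$ in terms of the Wick product with the discretised white noise $\tilde\bN$ plus a correction term.

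The key algebraic identity I would establish is the Poisson analogue of Lemma~\ref{lem.wick.ord.mall}: for $F\in L^2(\PP)$ with $D_{\cdot,\cdot}F$ suitably integrable,
\[
F\diamond\Big(\int_{U_m}\tilde N((s,t],dz)\Big)=F\cdot\int_{U_m}\tilde N((s,t],dz)-\int_s^t\int_{U_m}D_{u,z}F\,\nu(dz)\,du,
\]
obtained by expanding $F=\sum_\alpha a_\alpha K_\alpha$ in the Poisson chaos basis, using the chaos-expansion formula for $\int_{U_m}\tilde N((s,t],dz)$, and comparing the ordinary product with the Wick product term by term; the discrepancy is exactly the Malliavin derivative integrated against $\nu\times dt$. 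Applying this with $F=\theta_{t_i}(z)$ frozen (here one must be a little careful, since $\theta_{t_i}(z)$ still depends on the integration variable $z$, so the identity is applied pointwise in $z$ and then integrated against $\tilde N(d^-t,dz)$, which is why the hypothesis that $D_{t^+,z}\theta_t(z)\diamond\tilde\bN(t,z)$ is $(\nu\times dt)$-integrable is needed), the left-point sum becomes a Skorohod-type Riemann sum $\sum_i\int_{U_m}\theta_{t_i}(z)\diamond\tilde N((t_i,t_{i+1}],dz)$ plus a sum of correction integrals $\sum_i\int_{t_i}^{t_{i+1}}\int_{U_m}D_{u,z}\theta_{t_i}(z)\,\nu(dz)\,du$. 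As the mesh shrinks, the first sum converges to $\int_0^T\int_{U_m}\theta_t(z)\diamond\tilde\bN(t,z)\,\nu(dz)\,dt$ by the definition of the Skorohod integral in $(\cS)^*$, while in the correction sum one replaces $D_{u,z}\theta_{t_i}(z)$ by $D_{u,z}\theta_u(z)$ up to an error that vanishes because of the existence of the Malliavin trace $D_{t^+,z}\theta_t(z)=\lim_{s\to t^+}D_{s,z}\theta_t(z)$; this yields $\int_0^T\int_{U_m}D_{t^+,z}\theta_t(z)\,\nu(dz)\,dt$. Finally, recombining the Skorohod integral of $\theta$ with the Skorohod integral of $D_{t^+,\cdot}\theta_\cdot$ into a single Skorohod integral of $\theta_t(z)+D_{t^+,z}\theta_t(z)$ gives the stated formula on $U_m$, and $m\to\infty$ finishes the argument.

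The main obstacle I expect is making the passage from the frozen left-point values $\theta_{t_i}(z)$ to the "diagonal" values $\theta_{t}(z)$ rigorous in the inductive-limit topology of $(\cS)^*$: unlike the $L^2(dt\times\PP)$ setting, one cannot rely on a single Hilbert norm, so the convergence of the Riemann sums and the interchange of limits must be controlled uniformly in some $\norm{\cdot}_{-q}$ with a fixed $q$, which is precisely what the integrability hypotheses on $\theta$, $D_{t^+,z}\theta_t(z)$ and $D_{t^+,z}\theta_t(z)\diamond\tilde\bN(t,z)$ are designed to supply. A secondary technical point is justifying the pointwise-in-$z$ application of the Wick/ordinary product identity and then integrating it against $\tilde N(d^-t,dz)$; this is handled by a Fubini-type argument in $(\cS)^*$ using Definition~\ref{def.integral.S}, testing against arbitrary $\psi\in(\cS)$ and reducing everything to genuine Lebesgue integrals. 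I would relegate these estimates, together with the explicit Poisson-chaos expansion of $\int_{U_m}\tilde N((s,t],dz)$, to Appendix~\ref{App.Poisson}.
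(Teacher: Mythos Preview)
Your overall architecture---fix a compact $U_m$, discretise, apply a Wick/ordinary product identity to each step, then pass to the limit in $(\cS)^*$---is exactly the paper's route. The paper also works with a step process on a partition of $[0,T]\times U_m$, applies a key lemma (Lemma~\ref{lem.wick.ord.mall.pois}) by linearity, and finishes by approximation.

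There is, however, a genuine gap in your argument: the ``Poisson analogue of Lemma~\ref{lem.wick.ord.mall}'' that you write down,
\[
F\diamond\bigl(\tilde N((s,t],U_m)\bigr)=F\cdot\tilde N((s,t],U_m)-\int_s^t\int_{U_m}D_{u,z}F\,\nu(dz)\,du,
\]
is \emph{false} for the Poisson chaos. In the Brownian case the Hermite recursion $h_{n+1}(x)=xh_n(x)-nh_{n-1}(x)$ gives exactly two terms when converting an ordinary product into a Wick product. In the Poisson case the multiplication formula for $I^N_{|\alpha|}(f)\cdot I^N_1(g)$ (Tudor's product formula, used in the proof of Lemma~\ref{lem.wick.ord.mall.pois}) produces \emph{three} terms: the tensor term $I^N_{|\alpha|+1}(f\widehat\otimes g)$ giving the Wick product, the $*^1_1$ contraction giving the Malliavin correction $\int D_{u,z}F\,\nu(dz)du$, and an additional $*^0_1$ contraction of order $|\alpha|$. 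This middle term is precisely
\[
\int_s^t\int_{U_m} D_{u,z}F\diamond\tilde\bN(u,z)\,\nu(dz)\,du,
\]
so the correct identity is
\[
F\cdot\tilde N((s,t],U_m)=\int_s^t\int_{U_m}\bigl(F+D_{u,z}F\bigr)\diamond\tilde\bN(u,z)\,\nu(dz)\,du+\int_s^t\int_{U_m}D_{u,z}F\,\nu(dz)\,du.
\]
The extra Skorohod term reflects the fact that for Poisson noise a jump can ``land on itself'': the diagonal $\{(t_i,z_i)=(t_j,z_j)\}$ contributes nontrivially, unlike the Brownian case where the Gaussian structure kills it.

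You seem to sense something is missing when you speak of ``recombining the Skorohod integral of $\theta$ with the Skorohod integral of $D_{t^+,\cdot}\theta_\cdot$'', but your derivation never produces a Skorohod integral of $D\theta$: your two-term identity yields only $\int\theta\diamond\tilde\bN$ plus $\int D\theta\,\nu\,dt$. The missing piece is not a matter of rearrangement; it enters at the level of the algebraic identity itself. Once you replace your identity by the correct three-term version, the rest of your outline (step approximation, Malliavin trace limit, $m\to\infty$) matches the paper's proof.
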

\begin{proof}
%Consider an increasing sequence of compact sets 
%$\{U_m\}_{m=1}^\infty$ such that $\cup_{m} U_m= \bR_0$.
%Let $U_m$ be a compact set of $\bR_0$ such that $0\not\in U_m$.
Let $m\in\bN$,
$0 = t_0 < t_1 < \ldots < t_{l-1}<t_l = T$, $z_0 < z_1 < \ldots <  z_{k}$
be a partition of $[0,T]\times ([1/m,m] \cup [-m,-1/m])$, a compact included in $\bR_0$.
We consider the following step càglàd process 
$$ \theta^{l,k}_t(z):= \sum_{i,j=0}^{l,k} \theta_{t_i}(z_j)\bOne_{\{(t,z)\in[t_{i},t_{i+1})\times[z_j,z_{j+1})\}}\ .$$
By linearity, an application of  Lemma~\ref{lem.wick.ord.mall.pois} implies that
%and we are going to show that \eqref{pois.forw.sko.mall} holds true,
\begin{align*}
    \int_0^T\int_{\bR_0} \theta^{l,k}_t(z) \tilde N(d^{-}t,dz) %\lim_{m\to\infty} \int_0^T\int_{\bR_0} \theta^{l,k}_t(z)\bOne_{\{z\in U_m\}} \tilde N(dt,dz)
    %\\
    %=& \lim_{m\to\infty}\sum_{i,j=0}^{l,k} \int^{t_{i+1}}_{t_i}\int_{[z_{2j},z_{2j+1}]\cap U_m} \theta^{l,k}_{t_{i}}(z_{2j}) \tilde N(dt,dz)\\
    %=& \lim_{m\to\infty}\sum_{i,j=0}^{l,k} %\int_{t_{i-1}}^{t_i}\int_{z_{2j-1}}^{z_{2j}}
    %\theta^{l,k}_{t_{i}}(z_{2j}) \left(\tilde N(t_{i+1},[z_{2j},z_{2j+1}]\cap U_m)-\tilde N(t_{i},[z_{2j},z_{2j+1}]\cap U_m)\right)\\
    %=& \lim_{m\to\infty}\sum_{i,j=0}^{l,k} \int^{t_{i+1}}_{t_i} \int_{[z_{2j},z_{2j+1}]\cap U_m} \left(\theta^{l,k}_{t_{i}}(z_{2j})+D_{u,z} \theta^{l,k}_{t_{i}}(z_{2j})\right)\diamond \tilde\bN(u,z) +  D_{u,z}\theta^{l,k}_{t_{i}}(z_{2j})\,\nu(dz)du \\
    %=& \lim_{m\to\infty} \int_{0}^{T} \int_{U_m} \left(\theta^{l,k}_{t}(z)+D_{u,z} \theta^{l,k}_{t}(z)\right)\diamond \tilde\bN(u,z)\,+  D_{u,z}\theta^{l,k}_{t}(z)\,\,\nu(dz) du \\
    =& \int_{0}^{T} \int_{\bR_0} \left(\theta^{l,k}_{t}(z)+D_{t,z} \theta^{l,k}_{t}(z)\right)\diamond \tilde\bN(t,z)\,+  D_{t,z}\theta^{l,k}_{t}(z)\,\,\nu(dz) dt \ .
\end{align*}
We complete the proof by letting $m\to\infty$  and using an approximation argument by adapting Lemma 6.7 in \cite{DiNunnoOksendalProske2009} to the Poisson case.
\end{proof}
\subsection{After Default: Logarithmic case}\label{subsec:after.default.log}
In this subsection we restrict our computations to the logarithmic utility,
i.e. $U(x) = \ln x$ that allows to obtain more explicit results.
In particular, we analyze the after default problem, which
is stated as 
\begin{equation}\label{optimization.problem.after.log}
    \bV(\eta,x) =\esssup _{\pi \in \mathcal{A}(\tau)} \EE^{\eta,x}\left[\ln X^{\pi}_T(\tau)\right]
    \ ,\quad (\eta,x)\in [0,T]\times \bR^+
\end{equation}
where the random variable $\ln X^{\pi}_T(\tau)$ has explicit representation derived from
\begin{align}\label{X.log.after}
\ln \frac{X^{\pi}_T(\tau)}{X^{\pi}_\tau(\tau)} =& \int_{\tau}^{T} \rho_s(\tau) + \pi_s(\tau) (\mu_s(\tau)-\rho_s(\tau)) - \frac{1}{2}\pi^2_s(\tau)\sigma^2_s(\tau)\, ds\notag \\
&+ \int_\tau^{T} \int_{\bR_0} \ln(1+\pi_s(\tau)\theta_s(z,\tau))-\pi_s(\tau)\theta_s(z,\tau) \,\nu(dz) ds\notag\\ 
&+\int_\tau^{T}\pi_s(\tau)\sigma_s(\tau)d^-W_s+ \int_\tau^{T} \int_{\bR_0} \ln(1+\pi_s(\tau)\theta_s(z,\tau)) \tilde N(d^-s,dz)\ .
\end{align}
The stochastic integrals with random intervals are treated as follows,
$$ \int_\tau^{T}\pi_s(\tau)\sigma_s(\tau)d^-W_s = \int_0^{T}\bOne_{\{\tau \leq s\}}\pi_s(\tau)\sigma_s(\tau)d^-W_s\ . $$
In the next proposition we solve the optimization problem \eqref{optimization.problem.after.log} using the anticipating calculus described in Subsection~\ref{subsec:white.noise}.
\begin{Proposition}\label{prop.opt.portfolio.after}
%Suppose t
There exists a local maximum for the problem~\eqref{optimization.problem.after.log} with initial condition $(\eta, x) \in[0, T] \times(0, \infty)$
if and only if
it satisfies, %the following condition 
for $s\in[\eta, T]$, 
\begin{align}\label{gen.opt.condition.log.after}
    \pi_s(\eta)&\sigma^2_s(\eta) = \mu_s(\eta)-\rho_s(\eta) + \EE\left[D_{s^+}\sigma_s(\eta)\,|\,\cG_s\right]+ \sigma_s(\eta)\diamond\bW_s \notag \\
    &+ \int_{\bR_0} \frac{-\pi_s(\eta)\theta^2_s(z,\eta)}{1+\pi_s(\eta)\theta_s(z,\eta)}  +\EE\left[D_{s^+,z}\frac{\theta_s(z,\eta)}{1+\pi_s(\eta)\theta_s(z,\eta)}\,|\,\cG_s\right] \nu(dz)\notag\\
    &+\int_{\bR_0}\left(\frac{\theta_s(z,\eta)}{1+\pi_s(\eta)\theta_s(z,\eta)}+\EE\left[D_{s^+,z}\frac{\theta_s(z,\eta)}{1+\pi_s(\eta)\theta_s(z,\eta)}\,|\,\cG_s\right]\right) \diamond\tilde\bN(s,z) \nu(dz) \ .
\end{align}
\end{Proposition}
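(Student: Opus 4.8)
The plan is to characterize the local maximum by a first-order condition obtained via a calculus-of-variations argument, working entirely inside the distribution space $(\cS)^*$. Fix $(\eta,x)$ and a candidate $\pi = \pi(\eta) \in \cA(\tau)$. For any bounded $\beta \in \cA(\tau)$ supported on $[[\eta\wedge T, T]]$ and $|y| < \delta$, consider $g(y) := \EE^{\eta,x}\left[\ln X_T^{\pi + y\beta}(\tau)\right]$; a local maximum at $\pi$ forces $g'(0) = 0$ and $g''(0) \leq 0$. First I would differentiate the explicit representation \eqref{X.log.after} for $\ln X_T^{\pi+y\beta}(\tau)/X_\eta(\eta)$ in $y$ and evaluate at $y=0$. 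The $ds$-terms differentiate in the classical way, producing
$$
\int_\eta^T \beta_s(\eta)\Big(\mu_s(\eta)-\rho_s(\eta) - \pi_s(\eta)\sigma_s^2(\eta) - \int_{\bR_0}\frac{\pi_s(\eta)\theta_s^2(z,\eta)}{1+\pi_s(\eta)\theta_s(z,\eta)}\,\nu(dz)\Big)ds\ .
$$
The forward-integral terms are where the anticipating calculus enters: differentiating $\int_\eta^T \pi_s(\eta)\sigma_s(\eta)d^-W_s$ gives $\int_\eta^T \beta_s(\eta)\sigma_s(\eta)d^-W_s$, and similarly for the $\tilde N$ term one gets $\int_\eta^T\int_{\bR_0}\frac{\beta_s(\eta)\theta_s(z,\eta)}{1+\pi_s(\eta)\theta_s(z,\eta)}\tilde N(d^-s,dz)$ (here one uses $\ln(1+\pi\theta)$ and $\frac{\pi\theta}{1+\pi\theta}$ being forward integrable, as required in the definition of $\cA(\tau)$, together with assumption \eqref{hyp3} to keep the logarithms bounded away from the singularity).

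Next I would rewrite the two forward integrals using Theorem~\ref{theo.forw.skor.mall.BM} and Theorem~\ref{theo.forw.sko.mall.pois}, converting them into Skorohod integrals plus Malliavin-trace Lebesgue integrals in $(\cS)^*$. Applying these with integrand $\beta_s(\eta)\sigma_s(\eta)$ (resp. $\frac{\beta_s(\eta)\theta_s(z,\eta)}{1+\pi_s(\eta)\theta_s(z,\eta)}$) and then using the Leibniz/product-rule behaviour of $D_{\cdot^+}$ on the product with the $\bG$-adapted, hence trace-free, factor $\beta$, the $\beta$ can be pulled out: $D_{s^+}(\beta_s(\eta)\sigma_s(\eta)) = \beta_s(\eta)D_{s^+}\sigma_s(\eta)$ since $\beta$ is càglàd and adapted. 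Then, taking the generalized expectation $\EE^{\eta,x}[\cdot]$ and using that $\EE[\,\cdot\diamond\bW_s\,] $-type terms and conditional expectations can be moved past the (deterministic-in-$\beta$, adapted) multiplier $\beta_s(\eta)$ — invoking the generalized conditional expectation with respect to $\cG_s$ exactly as set up in Subsection~\ref{subsec:white.noise} — the stationarity condition $g'(0)=0$ becomes
$$
\EE^{\eta,x}\!\left[\int_\eta^T \beta_s(\eta)\,\Phi_s\, ds\right] = 0
$$
where $\Phi_s$ is the difference of the left- and right-hand sides of \eqref{gen.opt.condition.log.after}. Since this must hold for every bounded adapted $\beta$, a standard localization argument (choosing $\beta$ to be an indicator of a predictable set) yields $\Phi_s = 0$ $ds\times\PP$-a.e., which is precisely \eqref{gen.opt.condition.log.after}. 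Conversely, if \eqref{gen.opt.condition.log.after} holds then $g'(0)=0$, and strict concavity of $\ln$ (equivalently $g''(0)<0$, which follows because the second derivative of $\ln X_T^{\pi+y\beta}$ contributes $-\int\beta^2\sigma^2 - \int\int \frac{\beta^2\theta^2}{(1+\pi\theta)^2}\nu\,ds \le 0$ with the $d^-W$ and $\tilde N$ parts vanishing in second order) gives the local-maximum inequality; hence the equivalence.

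The main obstacle will be justifying the interchange of differentiation in $y$ with the forward integrals and with $\EE^{\eta,x}[\cdot]$ at the level of $(\cS)^*$-valued objects, and making precise that the Malliavin trace of the product $\beta\cdot(\text{integrand})$ splits as claimed when $\beta$ is merely càglàd and $\bG$-adapted (so trace-free) while the integrand lives in $(\cS)^*$ with a genuinely nontrivial trace. This requires the hypotheses built into $\cA(\tau)$ — forward integrability of $\pi\sigma$, $\pi\theta$, $\ln(1+\pi\theta)$ and $\frac{\pi\theta}{1+\pi\theta}$, together with the uniform lower bound \eqref{hyp3} — to guarantee that all the $(\cS)^*$-limits defining the forward integrals, and their $y$-derivatives, exist and that the relevant traces are $dt$- (resp. $(\nu\times dt)$-)integrable in $(\cS)^*$, so that Theorems~\ref{theo.forw.skor.mall.BM} and~\ref{theo.forw.sko.mall.pois} actually apply along the perturbation. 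I expect the variational and localization steps themselves to be routine once this measurability/integrability bookkeeping in $(\cS)^*$ is in place.
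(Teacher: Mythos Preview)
Your approach matches the paper's: perturbation argument, first-order condition $g'(0)=0$, conversion of the forward integrals via Theorems~\ref{theo.forw.skor.mall.BM} and~\ref{theo.forw.sko.mall.pois}, and concavity for the converse. The paper's one simplification is to take $\beta_s = \xi_t\bOne_{(t,t+h]}(s)$ with $\xi_t$ bounded and $\cG_t$-measurable, so that $\xi_t$ factors out of the forward integrals \emph{before} the decomposition theorems are applied---this neatly avoids the Malliavin product-rule obstacle you flag, since the theorems then act on $\sigma_s(\eta)$ (resp.\ the $\theta$-fraction) alone, and conditioning on $\cG_t$ followed by $h\to 0$ yields \eqref{gen.opt.condition.log.after} directly.
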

\begin{proof}
We use a perturbation argument as in Theorem 16.20 of \cite{DiNunnoOksendalProske2009}.
We define 
$$G_{\eta,x}(\pi):=\EE^{\eta,x} \left[\ln \frac{X^{\pi}_T(\tau)}{X^{\pi}_{\tau}(\tau)} \right]$$ 
and aim to optimize the function $\pi\longrightarrow G_{\eta,x}(\pi)$ with $(\eta,x)$ fixed.
Let $\beta\in\cA(\tau)$ bounded and consider $\delta > 0$ such that $\pi+\epsilon\beta\in\cA(\tau)$ for $\epsilon\in (-\delta,\delta)$
and we consider the function $I(\epsilon):=G_{\eta,x}(\pi+\epsilon\beta)$.
%We short the notation by $\EE^{\eta,x}[\cdot] = \EE[\cdot|X_\tau^\pi(\tau)=x,\ \tau = \eta]$.
A local maximum strategy must satisfy the following first order condition,
\begin{align*}
    0 =& I'(0) = \lim_{\epsilon\to 0}\frac{I(\epsilon)-I(0)}{\epsilon} =  \lim_{\epsilon\to 0}\frac{G_{\eta,x}(\pi+\epsilon\beta)-G_{\eta,x}(\pi)}{\epsilon} \\
    =& \EE^{\eta,x}\left[\int_{\tau}^T\beta_s(\mu_s(\tau)-\rho_s(\tau))-\beta_s\pi_s(\tau)\sigma^2_s(\tau)  - \int_{\bR_0}\frac{\beta_s\pi_s(\tau)\theta^2_s(z,\tau)}{1+\pi_s(\tau)\theta_s(z,\tau)} \nu(dz)\,ds\right]\\
    &+\EE^{\eta,x}\left[ \int_{\tau}^T\beta_s\sigma_s(\tau)d^-W_s + \int_\tau^T\int_{\bR_0}\frac{\beta_s\theta_s(z,\tau)}{1+\pi_s(\tau)\theta_s(z,\tau)}\tilde N(d^-s,dz)\right]\ .
\end{align*}
We take $\beta_s = \xi_t\bOne_{\{t < s \leq t+h\}}$ for $\eta\leq t\leq T-h$ and $\xi_t$ any bounded $\cG_t$-measurable random variable.
Then the condition is written as,
\begin{align*}
    0 =& \EE^{\eta,x}\left[\xi_t\int_{t}^{t+h}\mu_s(\tau)-\rho_s(\tau)-\pi_s(\tau)\sigma^2_s(\tau) - \int_{\bR_0} \frac{\pi_s(\tau)\theta^2_s(z,\tau)}{1+\pi_s(\tau)\theta_s(z,\tau)}  \nu(dz)\,ds \right]\\
    &+\EE^{\eta,x}\left[\xi_t\left( \int_{t}^{t+h}\sigma_s(\tau)d^-W_s + \int_t^{t+h}\int_{\bR_0}\frac{\theta_s(z,\tau)}{1+\pi_s(\tau)\theta_s(z,\tau)}\tilde N(d^-s,dz) \right)\right]
\end{align*}
which, as $\xi_t$ is general, is equivalent to
\begin{align}\label{gen.opt.condition.log.after.conversely}
    0 =& \EE^{\eta,x}\left[\int_{t}^{t+h}\mu_s(\tau)-\rho_s(\tau)-\pi_s(\tau)\sigma^2_s(\tau) -\int_{\bR_0} \frac{\pi_s(\tau)\theta^2_s(z,\tau)}{1+\pi_s(\tau)\theta_s(z,\tau)} \nu(dz)\,ds |\cG_t \right]\notag\\
    &+\EE^{\eta,x}\left[ \int_{t}^{t+h}\sigma_s(\tau)d^-W_s + \int_t^{t+h}\int_{\bR_0}\frac{\theta_s(z,\tau)}{1+\pi_s(\tau)\theta_s(z,\tau)}\tilde N(d^-s,dz) |\cG_t \right]\ .
\end{align}
By applying Theorems \ref{theo.forw.skor.mall.BM} and \ref{theo.forw.sko.mall.pois} we rewrite the stochastic integrals as $dt$-integral and $(\nu\times dt)$-integral in $(\cS)^*$ respectively, 
and by rearranging all the terms, we get the condition~\eqref{gen.opt.condition.log.after}.

Conversely, it is enough to see that by assuming \eqref{gen.opt.condition.log.after} we recover \eqref{gen.opt.condition.log.after.conversely} and the perturbation argument holds true because of the concavity, i.e.,
$I''(\epsilon) < 0$ for $\epsilon \in (-\delta, \delta)$.
\end{proof}
\begin{Remark}
    The domain $\bD_{1,2}$ of the Malliavin derivative in general does not contain the indicator random variables, see~\cite{SUZUKI2020108614},
    which will play a prominent role in Theorem~\ref{theo.opt.log.bef} and Examples~\ref{ex.argmax} and~\ref{ex.aksamit.half-final} below.
    This is why we need to extend the framework of our paper and consider the Malliavin derivative, the Skorohod integral and the forward integral in the space of Hida-Malliavin distributions,
    as we have done in Subsection~\ref{subsec:white.noise}.
    In this context, the white noises for the Brownian motion and the Poisson random measure arise naturally, as we have shown in Proposition~\ref{prop.opt.portfolio.after} above.
    Note that Lemma 5.1 of \cite{Bermin02} guarantees that the conditional expectation of the Malliavin derivative of a $L^2(\PP)$-random variable is well defined as an element of $L^2(dt\times\PP)$.
    In addition, using the computations appearing in the proof of 
    Lemma 2.5.7 of \cite{oksendal2010book} 
    we can assure the well definition of the Wick product of an $L^2(dt\times\PP)$ process with respect to the white noises.
\end{Remark}
\subsection{Before Default: Logarithmic Case}\label{subsec:before.default.log}
We remind that the random variable $X^{\pi,\bF}_{\tau-}$ 
has explicit solution as follows,
\begin{align}\label{def.X.bef.stop}
\ln \frac{X^{\pi,\bF}_{\tau-}}{x_0} =& \int_0^{T}\bOne_{\{\tau > s\}}\left( \rho^{\bF}_s + \pi^{\bF}_s (\mu^{\bF}_s-\rho^{\bF}_s) - \frac{1}{2}\left(\pi^{\bF}_s\sigma^{\bF}_s\right)^2\right)\, ds\notag \\
&+ \int_0^{T} \int_{\bR_0} \bOne_{\{\tau > s\}}\left(\ln(1+\pi^{\bF}_s\theta^{\bF}_s(z))-\pi^{\bF}_s\theta^{\bF}_s(z) \right)\,\nu(dz) ds\notag\\ 
&+\int_0^{T}\bOne_{\{\tau > s\}}\pi^{\bF}_s\sigma^{\bF}_sd^-W_s + \int_0^{T} \int_{\bR_0} \bOne_{\{\tau > s\}} \ln(1+\pi^{\bF}_s\theta^{\bF}_s(z)) \tilde N(d^-s,dz)\ .
\end{align}
Considering the logarithmic utility, Proposition \ref{after-and-before} can be simplified as follows.
\begin{Corollary}\label{eq.HJB.after-before.log}
The optimization problem \eqref{optimization.problem}, 
under $U(x)=\ln x$ and the information flow~$\bG$,
can be written as follows
\begin{align}\label{eq.HJB.after-before.revisited}
   \bV_T^{\bG} = \EE^{x_0}[h_\tau H_T] + \sup _{\pi^{\bF}\in\cA(\bF)} &\EE^{x_0}\left[\ln X^{\pi,\bF}_{\tau-} +\ln\left(1+\pi^{\bF}_\tau \kappa_\tau\right) H_T\right]
\end{align}
for some process $h$ with $h_T = 0$ $\PP$-a.s.
\end{Corollary}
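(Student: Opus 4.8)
The plan is to start from the general decomposition of the optimization problem over $\bG$ established in Proposition~\ref{after-and-before}, namely
\[
   \bV_T^{\bG} = \sup _{\pi^{\bF}\in\cA(\bF)} \EE^{x_0}\left[U\left(X^{\pi,\bF}_T\right) (1-H_T)+\int_{0}^{T} \bV\left(\eta, X^{\pi,\bF}_\eta\left(1+\pi^{\bF}_\eta \kappa_\eta\right)\right) d\PP^{\tau}(\eta)\right],
\]
and to specialise it to $U(x)=\ln x$. The first step is to compute $\bV(\eta,x)$ for the logarithmic utility. Using the multiplicative structure of the wealth dynamics in \eqref{def.SDE.X.after} and the explicit solution \eqref{X.log.after}, the terminal log-wealth splits as $\ln X^\pi_T(\tau) = \ln X^\pi_\tau(\tau) + \ln\!\big(X^\pi_T(\tau)/X^\pi_\tau(\tau)\big)$, and the second summand does not depend on $x=X^\pi_\tau(\tau)$. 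Hence, writing $x = X^\pi_\eta(\eta)$, one gets
\[
   \bV(\eta,x) = \ln x + \esssup_{\pi\in\cA(\tau)} \EE^{\eta,x}\!\left[\ln \frac{X^\pi_T(\tau)}{X^\pi_\tau(\tau)}\right] \eqqcolon \ln x + h_\eta,
\]
where $h_\eta$ is a $\cG_\eta$-measurable (indeed, in the relevant representation, $\cF_\eta\vee\sigma(\tau)$-measurable) random variable that collects the optimal after-default gain and, crucially, is independent of $x$. Since at time $\eta=T$ the after-default trading interval $[[\tau\wedge T,T]]$ is degenerate, the supremum is over an empty-horizon integral and $h_T=0$ $\PP$-a.s.

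The second step is to substitute this expression back into the displayed formula from Proposition~\ref{after-and-before} with $x = X^{\pi,\bF}_\eta(1+\pi^\bF_\eta\kappa_\eta)$. This turns $\bV\big(\eta, X^{\pi,\bF}_\eta(1+\pi^\bF_\eta\kappa_\eta)\big)$ into $\ln X^{\pi,\bF}_\eta + \ln(1+\pi^\bF_\eta\kappa_\eta) + h_\eta$. Then I integrate against $d\PP^\tau(\eta)=\PP(\tau\in d\eta\mid\tau\leq T)$ and use the definitions \eqref{tau.measure} and $H_T=\bOne_{\{\tau\leq T\}}$ to rewrite $\int_0^T f(\eta)\,d\PP^\tau(\eta)$, inside the outer expectation, as $\EE^{x_0}[f(\tau)H_T\mid\cdot]$; combined with the first summand $U(X^{\pi,\bF}_T)(1-H_T)=\ln(X^{\pi,\bF}_T)(1-H_T)$ this yields
\[
   \bV_T^{\bG} = \sup_{\pi^\bF\in\cA(\bF)} \EE^{x_0}\!\Big[\ln\big(X^{\pi,\bF}_T\big)(1-H_T) + \big(\ln X^{\pi,\bF}_\tau + \ln(1+\pi^\bF_\tau\kappa_\tau) + h_\tau\big)H_T\Big].
\]
Finally, I observe that on $\{\tau\le T\}$ one has $\ln X^{\pi,\bF}_\tau = \ln X^{\pi,\bF}_{\tau-} + \ln(1+\pi^\bF_\tau\kappa_\tau)\cdot 0$, i.e.\ $X^{\pi,\bF}$ has continuous paths on $[[0,\tau\wedge T[[$ so that $\ln X^{\pi,\bF}_\tau = \ln X^{\pi,\bF}_{\tau-}$ (the jump $\kappa_\tau$ is carried by $S(\tau)$, not by $S^\bF$, cf.\ Proposition~\ref{S.SDE.before.after} and \eqref{X.just.after.tau}); and on $\{\tau>T\}$, $X^{\pi,\bF}_T = X^{\pi,\bF}_{\tau-}$ in the sense that $\ln(X^{\pi,\bF}_T)(1-H_T)$ already appears. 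Regrouping $\ln(X^{\pi,\bF}_T)(1-H_T) + \ln X^{\pi,\bF}_{\tau-}H_T = \ln X^{\pi,\bF}_{\tau-}$ (since on $\{\tau>T\}$, $X^{\pi,\bF}_{\tau-}=X^{\pi,\bF}_T$), and pulling the $x_0$-independent term $\EE^{x_0}[h_\tau H_T]$ out of the supremum, gives exactly \eqref{eq.HJB.after-before.revisited}.

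The main obstacle I anticipate is the careful bookkeeping of the indicator/path-regularity identities: making precise that $\ln(X^{\pi,\bF}_T)(1-H_T)+\ln X^{\pi,\bF}_{\tau-}H_T$ collapses to $\ln X^{\pi,\bF}_{\tau-}$ requires identifying $X^{\pi,\bF}_{\tau-}$ on $\{\tau>T\}$ with $X^{\pi,\bF}_T$, and checking that $X^{\pi,\bF}$ is continuous at $\tau$ so no spurious jump term appears — this rests on the fact that the $\kappa$-jump in \eqref{def.assets2} is already separated out in the before-default component of Proposition~\ref{X.SDE.before.after}. A secondary point is verifying the measurability and finiteness of $h$ (so that $\EE^{x_0}[h_\tau H_T]$ is well-defined and genuinely independent of $x_0$), which follows from the admissibility conditions on $\cA(\tau)$ and the integrability hypotheses \eqref{hyp1}; under the standing assumption $\bV(\eta,x)<\infty$ $\PP$-a.s.\ of Proposition~\ref{after-and-before}, $h_\eta<\infty$ $\PP$-a.s.\ as well. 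Everything else is a direct substitution and use of Fubini's theorem together with the tower property for the generalized conditional expectation.
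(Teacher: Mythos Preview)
Your proposal is correct and follows essentially the same route as the paper: split $\ln X^\pi_T(\tau)=\ln X^\pi_\tau(\tau)+\ln\bigl(X^\pi_T(\tau)/X^\pi_\tau(\tau)\bigr)$, observe that the second summand is independent of $x$ and vanishes at $\eta=T$, and then collapse $\ln\bigl(X^{\pi,\bF}_T\bigr)(1-H_T)+\ln\bigl(X^{\pi,\bF}_{\tau-}\bigr)H_T=\ln X^{\pi,\bF}_{\tau-}$ via the representation~\eqref{def.X.bef.stop}. The only cosmetic difference is that the paper defines $h_\tau$ through the local maximum after-default strategy of Proposition~\ref{prop.opt.portfolio.after} (so $h_\tau=\EE[\widetilde h_\tau(T)\mid\tau,X^\pi_\tau(\tau)]$ with $\widetilde h$ the right-hand side of~\eqref{X.log.after} evaluated at that strategy), whereas you define $h_\eta$ directly as the essential supremum; the two coincide whenever the after-default optimizer exists, and your formulation is slightly more self-contained.
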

\begin{proof}
Let $\pi(\tau)$ the local maximum strategy defined in \eqref{gen.opt.condition.log.after} and we consider the optimal value $\bV_T^\bG$ combined with \eqref{other.optimal.functional}, then,
\begin{align*}
    \bV_T^\bG =&\sup _{\pi^{\bF}\in\cA(\bF)} \EE^{x_0}\left[\ln \left(X^{\pi,\bF}_T\right)(1-H_T) + \EE\left[\ln X^{\pi}_T(\tau) |\tau,X^{\pi}_\tau(\tau)\right]H_T\right] \\
    =&\sup _{\pi^{\bF}\in\cA(\bF)} \EE^{x_0}\left[\ln \left(X^{\pi,\bF}_T\right)(1-H_T) + \ln\left(X^{\pi}_\tau(\tau)\right)H_T + \EE\left[\ln\frac{X^{\pi}_T(\tau)}{X^{\pi}_\tau(\tau)} \Big|\tau,X^{\pi}_\tau(\tau)\right]H_T\right] \\
    =&\EE^{x_0}\left[\EE\left[\widetilde{h}_\tau(T) |\tau, X^{\pi}_\tau(\tau) \right] H_T\right]\\ 
    &+ \sup _{\pi^{\bF}\in\cA(\bF)} \EE^{x_0}\left[\ln \left(X^{\pi,\bF}_T\right)(1-H_T) + \ln\left(
   X^{\pi,\bF}_{\tau-}\left(1+\pi^{\bF}_\tau \kappa_\tau\right)\right)H_T\right] 
\end{align*}
where by $\widetilde{h}_\tau(T)$ we denote the right-hand side of \eqref{X.log.after} with the local maximum strategy $\pi(\tau)$.
Note that by the integral expression 
$\widetilde{h}_T(T) = 0$
and we define
$h_\tau:=\EE\left[\widetilde{h}_\tau(T) |\tau,X^{\pi}_\tau(\tau)\right]$.
Finally, we point out that the first two terms can be expressed as follows,
   \begin{equation*}
        \ln \left(X^{\pi,\bF}_T\right)(1-H_T)  + \ln\left(X^{\pi,\bF}_{\tau-}\right)H_T =
        \ln X^{\pi,\bF}_{\tau-}\ ,%  \ln x_0 + \int_0^{T}\bOne_{\{\tau > s\}}\left( \rho^{\bF}_s + \pi^{\bF}_s (\mu^{\bF}_s-\rho^{\bF}_s) - \frac{1}{2}\left(\pi^{\bF}_s\sigma^{\bF}_s\right)^2\right)\, ds\notag \\
%&+ \int_0^{T} \int_{\bR_0} \bOne_{\{\tau > s\}}\left(\ln(1+\pi^{\bF}_s\theta^{\bF}_s(z))-\pi^{\bF}_s\theta^{\bF}_s(z) \right)\,\nu(dz) ds\notag\\ 
%&+\int_0^{T}\bOne_{\{\tau > s\}}\pi^{\bF}_s\sigma^{\bF}_sd^-W_s + \int_0^{T} \int_{\bR_0} \bOne_{\{\tau > s\}} \ln(1+\pi^{\bF}_s\theta^{\bF}_s(z)) \tilde N(d^-s,dz)\ .
   \end{equation*}
   where the right-hand side is defined in \eqref{def.X.bef.stop}
   and the result follows.
\end{proof}
%We are going to rewrite the terms appearing in \eqref{eq.HJB.after-before.revisited} in order to apply a perturbation argument again.
%Thanks to Lemma \ref{H.decomposition}, the following expectation is known,
%\begin{equation}\label{eq.expect.int.H}
%\EE^{x_0}\left[ m_\tau H_T\right] = \EE^{x_0}\left[\int_0^T m_s dH_s\right] = \EE^{x_0}\left[ \int_0^T   \frac{\bOne_{\{\tau > s\}}\,m_s}{Z_{s-}}dA^{\tau}_s \right]\ ,
%\end{equation} for every $\bG$-adapted and finite process $m=\prT{m}$, so we will use the most useful expression in each step.
In the next theorem we solve the before default optimization problem under the logarithmic utility.
We characterize the existence of a local maximum by the equation \eqref{gen.opt.condition.log.bef} and the intensity hypothesis~\eqref{approach.intensity} on the default process.
\begin{Theorem}\label{theo.opt.log.bef}
There exists a local maximum portfolio %\pi^\bF$ 
for the problem~\eqref{eq.HJB.after-before.revisited}
if and only if
%Then $\pi^\bF$ 
it satisfies the condition
\begin{align}\label{gen.opt.condition.log.bef}
    \pi^\bF_s\left(\sigma^\bF_s\right)^2 =& \mu^\bF_s-\rho^\bF_s -  \int_{\bR_0}\frac{\pi^\bF_s(\theta^\bF_s(z))^2}{1+\pi^\bF_s\theta^\bF_s(z)}\nu(dz) +\frac{\EE\left[D_{s^+}\bOne_{\{\tau > s\}}\,|\cF_s\,\right]}{Z_{s-}}\sigma^\bF_s\notag\\
    &+\frac{\EE\left[D_{s^+,z}\bOne_{\{\tau > s\}}\,|\,\cF_s\right]}{Z_{s-}}\int_{\bR_0} \frac{\theta^\bF_s(z)} {1+\pi^\bF_s\theta^\bF_s(z)}\nu(dz) +  \frac{\kappa_s}{1+\pi^{\bF}_s\kappa_s}\frac{1}{Z_{s-}}\lambda_s
\end{align}
and the intensity hypothesis \eqref{approach.intensity} holds true on $\tau$ with intensity process $\lambda$.
\end{Theorem}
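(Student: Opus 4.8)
The plan is to run the perturbation / first-order-condition argument of Proposition~\ref{prop.opt.portfolio.after}, now for the before-default functional. By Corollary~\ref{eq.HJB.after-before.log} the term $\EE^{x_0}[h_\tau H_T]$ in \eqref{eq.HJB.after-before.revisited} does not depend on $\pi^{\bF}$, so a local maximum portfolio for \eqref{eq.HJB.after-before.revisited} is exactly a local maximum of
$$\Psi(\pi^{\bF}):=\EE^{x_0}\Bigl[\ln X^{\pi,\bF}_{\tau-}+\ln\bigl(1+\pi^{\bF}_\tau\kappa_\tau\bigr)H_T\Bigr],\qquad \pi^{\bF}\in\cA(\bF),$$
with $\ln X^{\pi,\bF}_{\tau-}$ given by \eqref{def.X.bef.stop} and $\ln(1+\pi^{\bF}_\tau\kappa_\tau)H_T=\int_0^T\ln(1+\pi^{\bF}_s\kappa_s)\,dH_s$ as in \eqref{def.integral.default}. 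Fix a bounded $\beta\in\cA(\bF)$ and $\delta>0$ with $\pi^{\bF}+\epsilon\beta\in\cA(\bF)$ for $|\epsilon|<\delta$, and set $I(\epsilon):=\Psi(\pi^{\bF}+\epsilon\beta)$. As for Proposition~\ref{prop.opt.portfolio.after}, a local maximum is characterised by $I'(0)=0$ for every such $\beta$, together with $I''(\epsilon)<0$, the latter being inherited from the strict concavity of $\ln$ through the representation \eqref{def.X.bef.stop}.

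Differentiating $I$ at $0$ under $\EE^{x_0}$, using \eqref{def.X.bef.stop} and $\frac{d}{d\epsilon}\ln(1+(\pi+\epsilon\beta)x)=\frac{\beta x}{1+\pi x}$, yields a sum, set to zero, of: a $ds$-integral; two forward integrals against $W$ and $\tilde N$ with integrands $\bOne_{\{\tau>s\}}\beta_s\sigma^{\bF}_s$ and $\bOne_{\{\tau>s\}}\beta_s\theta^{\bF}_s(z)/(1+\pi^{\bF}_s\theta^{\bF}_s(z))$; and the default integral $\int_0^T\frac{\beta_s\kappa_s}{1+\pi^{\bF}_s\kappa_s}\,dH_s$. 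I would then: (i) rewrite the two forward integrals in $(\cS)^*$ via Theorems~\ref{theo.forw.skor.mall.BM} and~\ref{theo.forw.sko.mall.pois}; since $\sigma^{\bF}_s$ and $\theta^{\bF}_s(z)/(1+\pi^{\bF}_s\theta^{\bF}_s(z))$ are $\bF$-adapted, hence non-anticipating, the product rules for the Malliavin derivatives leave only the traces $D_{s^+}\bOne_{\{\tau>s\}}$ and $D_{s^+,z}\bOne_{\{\tau>s\}}$ (the indicator is $\cF_T$- but not $\cF_s$-measurable, so these do not vanish); (ii) rewrite the default integral with Lemma~\ref{H.decomposition}, so that $\EE^{x_0}[\int_0^T\phi_s\,dH_s]=\EE^{x_0}[\int_0^{T\wedge\tau}Z_{s-}^{-1}\phi_s\,dA^\tau_s]$ for the predictable integrand $\phi_s=\beta_s\kappa_s/(1+\pi^{\bF}_s\kappa_s)$; and (iii) localise by taking $\beta_s=\xi_t\bOne_{(t,t+h]}(s)$ with $\xi_t$ an arbitrary bounded $\cF_t$-measurable variable, which after conditioning on $\cF_t$ (resp.\ $\cF_{s-}$, to match the càglàd integrands) and dividing through by $Z_{s-}$ removes the Wick/Skorohod contributions and rearranges the $ds$-densities of the surviving terms into the right-hand side of \eqref{gen.opt.condition.log.bef}.

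The crucial and delicate point is in step (iii): after localisation, every contribution except the default one is absolutely continuous in the time variable, whereas the default one carries the measure $Z_{s-}^{-1}\bOne_{\{s\le\tau\}}\,dA^\tau_s$, with $A^\tau$ only known to be of bounded variation by~\cite{Neal95}. For $I'(0)=0$ to hold for all bounded $\beta\in\cA(\bF)$ the two pieces must balance as measures on $[0,T]$; since $\kappa_s\neq0$ $\PP$-a.s.\ and $\beta$ is arbitrary, the singular and atomic parts of $A^\tau$ on $[[0,\tau]]$ cannot be absorbed by an $ds$-absolutely continuous term, forcing $A^\tau$ to be absolutely continuous there — which, by uniqueness in the Doob--Meyer decomposition applied to the expressions in Lemma~\ref{H.decomposition} and in \eqref{approach.intensity}, is exactly the intensity hypothesis with $\lambda_s\,ds=Z_{s-}^{-1}\bOne_{\{s\le\tau\}}\,dA^\tau_s$. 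Conversely, assuming \eqref{approach.intensity} and that $\pi^{\bF}$ solves \eqref{gen.opt.condition.log.bef}, one runs the computation backwards to obtain $I'(0)=0$ for every bounded $\beta\in\cA(\bF)$, and $I''(\epsilon)<0$ again by strict concavity, so $\pi^{\bF}$ is a local maximum. I expect the main obstacle to be precisely this argument that the existence of a local maximum forces the intensity hypothesis (ruling out a singular compensator); the remaining work — justifying differentiation under the expectation, controlling the $(\cS)^*$-valued Wick/Skorohod terms under the conditioning, and the $Z_{s-}$ and $\cF_s$-versus-$\cF_{s-}$ bookkeeping — is a routine adaptation of the calculus developed in Subsection~\ref{subsec:white.noise}.
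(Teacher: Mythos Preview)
Your proposal is correct and follows essentially the same perturbation approach as the paper's proof. The only minor variations are that the paper handles the default term via the $\bF$-dual predictable projection of $H$ directly (since the integrand $\beta_s\kappa_s/(1+\pi^{\bF}_s\kappa_s)$ is $\bF$-predictable, this gives $\EE^{x_0}[\int_0^T\phi_s\,dA^\tau_s]$ without the $Z_{s-}^{-1}\bOne_{\{s\le\tau\}}$ factor, slightly cleaner for the subsequent $\cF_t$-conditioning than your route through Lemma~\ref{H.decomposition}), and it makes your ``two pieces must balance as measures'' argument precise by exhibiting an explicit $\bF$-adapted finite-variation process $Y$ that the first-order condition forces to be an $\bF$-martingale, hence identically zero, whence the singular and jump parts of $A^\tau$ vanish.
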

\begin{proof}
We proceed as in the proof of Proposition \ref{prop.opt.portfolio.after}, 
applying a perturbation argument in order to achieve a condition for local optimality.
By applying Corollary~\ref{eq.HJB.after-before.log}, 
we redefine the functional
$$G_{x_0}(\pi) := \EE^{x_0}\left[\ln
   X^{\pi,\bF}_\tau+ \ln
  \left( 1+\pi^{\bF}_\tau \kappa_\tau\right)H_T\right]\ .$$
Let $\beta\in\cA(\bF)$ bounded and consider $\delta > 0$ such that $\pi^\bF+\epsilon\beta\in\cA(\bF)$ for $\epsilon\in (-\delta,\delta)$. 
We consider the function $I(\epsilon):=G_{x_0}(\pi^\bF+\epsilon\beta)$.
%In order to short the notation in the integrands, we define the following $\cF_T$-measurable random variable, $\varphi_s:= (1-H_T) + H_T\bOne_{\{\tau > s\}}$.
A local maximum strategy must satisfy the condition $0 = I'(0)$, which is equivalent to,
\begin{align*}
    0 =& I'(0) = \lim_{\epsilon\to 0}\frac{I(\epsilon)-I(0)}{\epsilon} =  \lim_{\epsilon\to 0}\frac{G_{x_0}(\pi^\bF+\epsilon\beta)-G_{x_0}(\pi^\bF)}{\epsilon} \\
    =& \EE^{x_0}\left[\int_0^T\bOne_{\{\tau > s\}}\beta_s\left(\mu^\bF_s-\rho^\bF_s-\pi^\bF_s(\sigma^\bF_s)^2 - \int_{\bR_0}\frac{\pi^\bF_s(\theta^\bF_s(z))^2}{1+\pi^\bF_s\theta^\bF_s(z)}\nu(dz)\right) ds\right]\\
    +&\EE^{x_0}\left[ \int_{0}^T\bOne_{\{\tau > s\}}\beta_s\sigma^\bF_sd^-W_s + \int_0^T \int_{\bR_0}\frac{\bOne_{\{\tau > s\}}\beta_s\theta^\bF_s(z)}{1+\pi^\bF_s\theta^\bF_s(z)}\tilde N(d^-s,dz) + \frac{\beta_\tau\kappa_\tau}{1+\pi^\bF_\tau\kappa_\tau}H_T \right] \ . 
\end{align*}
By applying Theorems \ref{theo.forw.skor.mall.BM} and \ref{theo.forw.sko.mall.pois}, we compute the expectation of the forward integrals as follows
\begin{align*}
     \EE^{x_0}\left[ \int_t^{t+h}\bOne_{\{\tau > s\}}\beta_s\sigma^\bF_sd^-W_s\right] &=\EE^{x_0}\left[ \int_t^{t+h}(D_{s^+}\bOne_{\{\tau > s\}})\beta_s\sigma_s^\bF ds \right]\\
     \EE^{x_0}\left[\int_t^{t+h} \int_{\bR_0}\frac{\bOne_{\{\tau > s\}}\beta_s\theta^\bF_s(z)}{1+\pi^\bF_s\theta^\bF_s(z)}\tilde N(d^-s,dz)\right] &=  \EE^{x_0}\left[\int_t^{t+h} \int_{\bR_0}D_{s^+,z}\bOne_{\{\tau > s\}}\frac{\beta_s\theta^\bF_s(z)\nu(dz)ds}{1+\pi^\bF_s\theta^\bF_s(z)}\right]
\end{align*}
because $\beta,\sigma^{\bF},\theta^\bF,\pi^\bF$ are $\bF$-adapted processes and therefore their Malliavin derivative is null.
Moreover, thanks to the $\bF$-dual predictable projection we can rewrite
%Lemma \ref{H.decomposition} we can rewrite
\begin{equation}\label{eq.expect.int.H.bis}
    \EE^{x_0}\left[\frac{\beta_\tau\kappa_\tau}{1+\pi^\bF_\tau\kappa_\tau}H_T \right] = \EE^{x_0}\left[ \int_0^T   \frac{\beta_s\kappa_s}{1+\pi^\bF_s\kappa_s} %\frac{\bOne_{\{\tau > s\}}}{Z_{s-}}
    dA^{\tau}_s \right]\ .
\end{equation}
We take $\beta_s = \xi_t\bOne_{\{t < s \leq t+h\}}$ for $\xi_t$ any bounded $\cF_t$-measurable random variable.
Then the condition is written as,
\begin{align}\label{eq.cond.exp.bef}
    0 =& \EE^{x_0}\left[\int_t^{t+h}\bOne_{\{\tau > s\}} \left(\mu^\bF_s-\rho^\bF_s-\pi^\bF_s(\sigma^\bF_s)^2 - \int_{\bR_0}\frac{\pi^\bF_s(\theta^\bF_s(z))^2}{1+\pi^\bF_s\theta^\bF_s(z)} \nu(dz)\right)ds \right.\notag\\
    &+\left.  \int_t^{t+h}(D_{s^+}\bOne_{\{\tau > s\}})\sigma_s^\bF ds+\int_t^{t+h} \int_{\bR_0}(D_{s^+,z}\bOne_{\{\tau > s\}})\frac{\theta^\bF_s(z)}{1+\pi^\bF_s\theta^\bF_s(z)}\,\nu(dz)ds \right.\notag\\
    &+\left.\int_t^{t+h}  \frac{\kappa_s}{1+\pi^\bF_s\kappa_s}
    %\frac{\bOne_{\{\tau > s\}}}{Z_{s-}}
    dA^{\tau}_s 
    %\right. +\left.\int_t^{t+h}  \frac{\kappa_s}{1+\pi^\bF_s\kappa_s}
    %\frac{\bOne_{\{\tau > s\}}}{Z_{s-}}
%    dA^{\tau,\perp}_s
    \vert\cF_t\right]\ .
%    &+\EE^{x_0}\left[\int_0^T   \bOne_{\{s\leq \tau\}}\frac{\kappa_s}{1+\pi^{\bF}_s\kappa_s}\frac{1}{Z_{s-}}dA^{\tau}_s\vert\cF_t \right]
\end{align}
The $\bF$-dual predictable projection admits the following representation, see Lemma 2.1 of \cite{FontanaThorsten}, 
\begin{equation}
    A^{\tau}_t = \int_0^t \lambda_sds + A^{\tau,\perp}_t 
    + \sum_{0\leq s\leq t}\Delta A_s^\tau\ ,
\end{equation}
where $\lambda$ is an $\bG$-predictable integrable process,
$A^{\tau,\perp}=\prT{A^{\tau,\perp}}$ is an increasing and continuous process such that $dA^{\tau,\perp}_s \perp ds$. 
%\textcolor{red}{Note that Lemma~1.48 of~\cite{Aksamit2017} guarantees the continuity of $A^\tau$.}
%where $A^{\tau,ac}$ denotes the absolutely continuous part of $A^\tau$, by applying the same reasoning as in Theorem 3.8 of \cite{OksendalBiagini2005}.
Then, %for any $h$ such that $t+h\leq T$, we get
%We reason as in Theorem 3.8 of \cite{OksendalBiagini2005} and we take the absolutely continuous part with respect to the Lebesgue measure of $A^\tau$.
%Note that $\bOne_{\{\tau > s\}} = \bOne_{\{\tau > s\}}$ and we can simplify the terms as follows,
\begin{align}\label{eq.aux.opt.tau.bef}
    0 =& \EE^{x_0}\left[\int_t^{t+h}\bOne_{\{\tau > s\}}\left(\mu^\bF_s-\rho^\bF_s-\pi^\bF_s(\sigma^\bF_s)^2 - \int_{\bR_0}\frac{\pi^\bF_s(\theta^\bF_s(z))^2}{1+\pi^\bF_s\theta^\bF_s(z)} \nu(dz)  \right)\right.\notag\\
    &+\left.  \int_{\bR_0}(D_{s^+,z}\bOne_{\{\tau > s\}})\frac{\theta^\bF_s(z)}{1+\pi^\bF_s\theta^\bF_s(z)}\,\nu(dz)\, + (D_{s^+}\bOne_{\{\tau > s\}})\sigma_s^\bF +  \frac{\kappa_s\, \lambda_s}{1+\pi^\bF_s\kappa_s}\,ds
     \right.\notag\\
    &+\left. \int_t^{t+h}\frac{\kappa_s}{1+\pi^\bF_s\kappa_s}
    dA^{\tau,\perp}_s   
    +\sum_{t\leq s \leq t+h} \frac{\kappa_s}{1+\pi^\bF_s\kappa_s} \Delta A_s^\tau  
    \vert\cF_t\right]\ .
\end{align}
We define the following $\bF$-adapted process $Y=\prT{Y}$,
\begin{align}
    Y_t :=& \int_0^t \EE[\bOne_{\{\tau > s\}}\vert\cF_t] \left(\mu^\bF_s-\rho^\bF_s-\pi^\bF_s(\sigma^\bF_s)^2 - \int_{\bR_0}\frac{\pi^\bF_s(\theta^\bF_s(z))^2}{1+\pi^\bF_s\theta^\bF_s(z)} \nu(dz)  \right)\notag\\
    &+ \int_{\bR_0}\EE[D_{s^+,z}\bOne_{\{\tau > s\}}\vert\cF_t]\frac{\theta^\bF_s(z)}{1+\pi^\bF_s\theta^\bF_s(z)}\,\nu(dz)\, + \EE[D_{s^+}\bOne_{\{\tau > s\}}\vert\cF_t]\sigma_s^\bF
    \notag\\
    &+  \frac{\kappa_s\, \lambda_s}{1+\pi^\bF_s\kappa_s}\,ds + \int_0^{t}\frac{\kappa_s}{1+\pi^\bF_s\kappa_s}
    dA^{\tau,\perp}_s
    +\sum_{0\leq s \leq t} \frac{\kappa_s}{1+\pi^\bF_s\kappa_s} \Delta A_s^\tau  
   \ ,
\end{align}
which is a finite variation process and, by~\eqref{eq.aux.opt.tau.bef}, 
it is an $\bF$-martingale. 
So we conclude that it is a null process, for the absolutely continuous, the singular and the purely discontinuous parts.
In particular, as $\kappa_s\neq0$, we get that 
$A^{\tau,\perp}_s=\Delta A^\tau_s = 0$
and \eqref{approach.intensity} holds true.

Conversely, we assume \eqref{gen.opt.condition.log.bef} and \eqref{approach.intensity}.
Then we recover \eqref{eq.cond.exp.bef} and the perturbation arguments holds true by the concavity property of the functional~$I$, i.e., $I''(0)<0$.
\end{proof}
In the following example we illustrate some implications of Theorem~\ref{theo.opt.log.bef}. 
In particular, we show a random time for which \eqref{approach.intensity} is not satisfied as the compensator of the process $H$ in the enlarged filtration $\bG$ results singular to the Lebesgue measure. It follows that the optimal portfolio among the admissible strategies  $\cA(\bG)$ does not exist.
\begin{Example}\label{ex.argmax}
We consider the following example with
\begin{equation}\label{random.time.arg.max}
    \tau = \argmax_{s\in[0,T]} W_s\ , 
\end{equation}
that is the time in which the Brownian motion reaches it maximum before the time horizon.
This case was previously studied for example in \cite{Imkeller02}, but, for the best of our knowledge, it was never been considered in the optimal portfolio problem with progressive enlargement and defaultable framework.
We denote by 
$$M_{s,t}:=\sup_{u\in[s,t]} W_u\ ,$$
and define $M_s:=M_{0,s}$.
It is clear that 
$\{\tau > s\} = \{ M_s < M_{s,T} \}$ 
and so we can compute the Malliavin trace of the second term.
Note that, as the random time depends only on the Brownian motion,
we have $D_{s^+,z}\bOne_{\{\tau > s\} } = 0$ for all $(s,z)$.
\begin{Lemma}
The Malliavin trace of $\bOne_{\{\tau > s\}}$ is 
    \begin{equation}\label{ex.malliavin.derivative}
        D_{s^+}\bOne_{\{\tau>s\}} = \delta_{M_s}(M_{s,T}) \bOne_{\{M_{s,T}>W_s\}}\ .
    \end{equation}
Moreover, its conditional expectation is computed as 
\begin{equation}\label{ex.malliavin.derivative.cond.exp}
     \EE\left[ D_{s^+}\bOne_{\{\tau>s\}} |\cF_s \right] = \frac{2 \exp\left(-\frac{(M_s-W_s)^2}{2(T-s)}\right)}{ \sqrt{2\pi (T-s)}}\ . 
\end{equation}
\end{Lemma}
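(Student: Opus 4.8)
The plan is to compute the Malliavin trace $D_{s^+}\bOne_{\{\tau>s\}}$ by writing the indicator in terms of the running maximum and of the maximum over $[s,T]$, then to take the conditional expectation using the independence of increments of the Brownian motion and the classical density of the maximum of a Brownian motion on an interval. First I would record the identity $\{\tau>s\}=\{M_s<M_{s,T}\}$, valid because the location of the global maximum lies after time $s$ precisely when the past supremum $M_s$ is strictly dominated by the future supremum $M_{s,T}$ (ties occur with probability zero). Since $M_{s,T}=W_s+\sup_{u\in[s,T]}(W_u-W_s)$ is determined by the increments after $s$ plus $W_s$, while $M_s$ depends only on the path up to $s$, I would compute $D_r\bOne_{\{M_s<M_{s,T}\}}$ for $r$ slightly larger than $s$: formally $\bOne_{\{x<y\}}$ has derivative $\delta_x(y)$ in $y$, and $D_r M_{s,T}=\bOne_{\{r\le \arg\max_{[s,T]}W\}}$, which tends to $\bOne_{\{M_{s,T}>W_s\}}$ as $r\downarrow s$ (the event that the future maximum is attained strictly after $s$, equivalently that it exceeds the value $W_s$ at the left endpoint). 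This yields \eqref{ex.malliavin.derivative}; rigorously one justifies this by the chain rule for $D$ applied to a smooth approximation of $\bOne_{\{\cdot>0\}}$ and passing to the limit in $(\cS)^*$, which is exactly the kind of computation the extended framework of Subsection~\ref{subsec:white.noise} is designed to accommodate, since $\bOne_{\{\tau>s\}}\notin\bD_{1,2}$.

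For the conditional expectation \eqref{ex.malliavin.derivative.cond.exp}, I would condition on $\cF_s$. Given $\cF_s$, the quantities $W_s$ and $M_s$ are known, and $M_{s,T}-W_s=\sup_{u\in[s,T]}(W_u-W_s)$ is, by independence and stationarity of increments, distributed as the maximum of a standard Brownian motion run for time $T-s$, i.e. as $|\,\mathcal N(0,T-s)\,|$ with density $g(a)=\frac{2}{\sqrt{2\pi(T-s)}}\exp(-a^2/(2(T-s)))$ on $a>0$. Hence
\begin{align*}
\EE\!\left[\delta_{M_s}(M_{s,T})\bOne_{\{M_{s,T}>W_s\}}\,\big|\,\cF_s\right]
&=\EE\!\left[\delta_{M_s-W_s}\big(M_{s,T}-W_s\big)\bOne_{\{M_{s,T}-W_s>0\}}\,\big|\,\cF_s\right]\\
&=g(M_s-W_s)
=\frac{2\exp\!\left(-\dfrac{(M_s-W_s)^2}{2(T-s)}\right)}{\sqrt{2\pi(T-s)}}\,,
\end{align*}
using that $M_s-W_s\ge 0$ so the indicator is automatically satisfied at the point mass and evaluating the density $g$ at $a=M_s-W_s$. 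This gives \eqref{ex.malliavin.derivative.cond.exp}.

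The main obstacle is making the Dirac-delta manipulation rigorous: $D_{s^+}\bOne_{\{\tau>s\}}$ is genuinely a distribution-valued object in $(\cS)^*$, not an $L^2$ random variable, so the identity \eqref{ex.malliavin.derivative} must be read in the sense of the pairing against Hida test functions, and the interchange of the limit $r\downarrow s$ with the conditional expectation needs the $(\cS)^*$-convergence established in Subsection~\ref{subsec:white.noise} together with the fact that the conditional-expectation operator is continuous on $(\cS)^*$. Concretely I would approximate $\bOne_{(0,\infty)}$ by smooth functions $\phi_n$ with $\phi_n'\to\delta_0$, apply the chain rule to $\phi_n(M_{s,T}-W_s)$ inside the domain of the $L^2$ derivative, take conditional expectation (here $\EE[\phi_n'(M_{s,T}-W_s)\mid\cF_s]\to g(M_s-W_s)$ by the explicit density above, since the conditional law has a bounded continuous density near $M_s-W_s>0$), and only then pass $r\downarrow s$ and $n\to\infty$, checking that the double limit is well defined in $(\cS)^*$. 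Everything else — the identity for $\{\tau>s\}$, the reflection-principle density of the maximum, and the vanishing of $D_{s^+,z}\bOne_{\{\tau>s\}}$ because $\tau$ is a functional of $W$ alone — is routine.
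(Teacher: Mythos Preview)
Your proposal is correct and follows essentially the same route as the paper: the paper also rewrites $\{\tau>s\}=\{M_s<M_{s,T}\}$, applies the distributional chain rule (citing Bermin's Corollary~5.3) to obtain $D_{s^+}\bOne_{\{\tau>s\}}=\delta_{M_s}(M_{s,T})\,D_{s^+}M_{s,T}$, invokes Nualart's result $D_{s^+}M_{s,T}=\bOne_{\{M_{s,T}>W_s\}}$, and then evaluates the conditional expectation via the reflection-principle density of $M_{s,T}$ given $\cF_s$. Your explicit smoothing argument and the shift $M_{s,T}-W_s\stackrel{d}{=}\sup_{[0,T-s]}W$ make the same steps slightly more self-contained, but the structure is identical.
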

\begin{proof}
    By applying Corollary~5.3 of \cite{Bermin02}, as it is argued in~Example 5.3 of the same reference, we have
    $$ D_{s^+}\bOne_{\{\tau>s\}} = \delta_{M_s}(M_{s,T})  D_{s^+}M_{s,T}\ . $$
    So we need to compute the trace of $M_{s,T}$.
    In Proposition 2.1.10 of~\cite{nualart2006malliavin} it is proved that the Malliavin derivative of $M_{s,T}$ exists and it is computed as 
    $D_{s^+}M_{s,T} = \bOne_{\{M_{s,T}>W_s\}},$
    and we obtain~\eqref{ex.malliavin.derivative}. 
    For the conditional expectation, we get
    \begin{align*}  
    \EE\left[ D_{s^+}\bOne_{\{\tau>s\}} |\cF_s \right] &=  \EE\left[  \delta_{M_s}(M_{s,T}) \bOne_{\{M_{s,T}>W_s\}} |\cF_s \right] \\
%    &= \EE\left[ \EE\left[ \delta_{M_s}(M_{s,T}) \bOne_{\{M_{s,T}>W_s\}}|\cG_s \right] |\cF_s \right] \\
%    &= \frac{1}{\PP(\tau > s|\cF_s)}\EE\left[ \bOne_{\{\tau>s\}} \delta_{M_s}(M_{s,T}) \bOne_{\{M_{s,T}>W_s\}} |\cF_s \right] \\
    &=  %\frac{1}{\PP(\tau > s|\cF_s)} 
    \int_{W_s}^{\infty} \delta_{M_{s}}(m) f_s(m)dm =f_s(M_{s})\ ,%\frac{f(M_{s})}{\PP(\tau > s|\cF_s)} \ ,
\end{align*}
being $f_s$ the density of the random variable $M_{s,T}$ given $\cF_s$, which is equivalent to consider the variable $M_{T-s}$ in the domain $(W_s,+\infty)$, and the result holds true.
\end{proof}
In this example, we proceed by assuming that there exists an admissible local maximum for the optimization problem with logarithmic utility and, as we are going to prove that the intensity hypothesis \eqref{approach.intensity} does not hold, we will get a contradiction.
We carry out the perturbation argument to derive the condition for the local maximum strategy
by mimicking the proof of Theorem~\ref{theo.opt.log.bef}.
With respect to the term of the default process in the optimal strategy,
we take the left-hand side of \eqref{eq.expect.int.H.bis} and we rewrite it by using Lemma 3.2 of \cite{LiRutkowski2014}. We can apply this result because $\tau$ is a Honest time, see~\cite{Imkeller02}.
% 7.4.1.2 of \cite{Jeanblanc2009mathematical},
\begin{align*}
    \EE\left[ \frac{\beta_\tau\kappa_\tau}{1+\pi_\tau^\bF\kappa_\tau} H_T \right] &= \EE\left[ \EE\left[ \frac{\beta_\tau\kappa_\tau}{1+\pi_\tau^\bF\kappa_\tau} H_T |\cG_t \right]\right] \\
    &=\EE\left[\frac{\beta_\tau\kappa_\tau}{1+\pi_\tau^\bF\kappa_\tau}\bOne_{\{\tau\leq t\}} - \bOne_{\{\tau > t\}} \frac{1}{Z_t}\EE\left[\int_t^T \frac{\beta_s\kappa_s}{1+\pi_s^\bF\kappa_s} dZ_s\vert\cF_t\right]\right]\ ,
\end{align*}
and, as in the proof of Theorem \ref{theo.opt.log.bef}, 
we consider $\beta_s = \xi_t\bOne_{\{t<s\leq t+h\}}$ for~$\xi_t$ any bounded \hbox{$\cF_t$-measurable} random variable.
Note that, in the first term we have 
$\bOne_{\{\tau\leq t\}}\bOne_{\{ t<\tau\leq t+h \}} = 0$.
Then, by taking $\cF_t$-conditional expectation we get,
\begin{align*}
     \EE\left[ \frac{\beta_\tau\kappa_\tau}{1+\pi_\tau^\bF\kappa_\tau} H_T \right] &= \EE\left[\EE\left[- \xi_t \frac{ \bOne_{\{\tau > t\}}}{Z_t}\EE\left[\int_t^{t+h} \frac{\kappa_s}{1+\pi_s^\bF\kappa_s} dZ_s\vert\cF_t\right]\vert\cF_t\right]\right]\\
     &= -\EE\left[ \xi_t\frac{\EE\left[\bOne_{\{\tau > t\}} \vert\cF_t\right]}{Z_t}\EE\left[\int_t^{t+h} \frac{\kappa_s}{1+\pi_s^\bF\kappa_s} dZ_s\vert\cF_t\right]\right]\\
    &= -\EE\left[\xi_t\int_t^{t+h} \frac{\kappa_s}{1+\pi_s^\bF\kappa_s} dZ_s\right]\ ,
\end{align*}
where we have used that $\EE\left[\bOne_{\{\tau > t\}} \vert\cF_t\right] = \PP(\tau > t\vert \cF_t) = Z_t$.
Then, looking at the right-hand side of \eqref{eq.expect.int.H.bis} we have proved that
$$  \EE\left[ \xi_t\int_t^{t+h}   \frac{\kappa_s}{1+\pi^\bF_s\kappa_s}
%\frac{\bOne_{\{\tau > s\}}}{Z_{s-}}
dA^{\tau}_s \right] = -\EE\left[\xi_t\int_t^{t+h} \frac{\kappa_s}{1+\pi_s^\bF\kappa_s} dZ_s\right] $$
for any $\xi_t\in\cF_t$ bounded, so we rewrite the default-term appearing in \eqref{eq.cond.exp.bef} as
\begin{align}\label{ex.eq.compensator.Z}
    \EE\left[\int_t^{t+h}  \frac{\kappa_s}{1+\pi^\bF_s\kappa_s}
    %\frac{\bOne_{\{\tau > s\}}}{Z_{s-}}
    dA^{\tau}_s\vert\cF_t\right] &= - \EE\left[\int_t^{t+h} \frac{\kappa_s} {1+\pi_{s}^{\bF}\kappa_s} dZ_s \vert\cF_t\right]
    \\ 
\label{ex.eq.compensator.Z.2}
&=  \sqrt{\frac{2}{\pi}}\EE\left[\int_t^{t+h} \frac{\kappa_s} {1+\pi_{s}^{\bF}\kappa_s}\frac{dM_s}{\sqrt{T-s}}  \vert\cF_t\right]\ ,
\end{align}
where in the last equality we used Section 5.6.6 of \cite{Jeanblanc2009mathematical}.
Since $dM_s\perp ds$, see for example page 522 in \cite{Obloj2006},
\eqref{approach.intensity} is not satisfied.
Then, the agent who is playing with the filtration $\bG$ can take advantage of the increments of $M$ in order to maximize her expected utility and we find no optimal strategy in $\cA(\bG)$. 
See \cite{FontanaJeanblanc14} for a detailed explanation of arbitrage and Honest times.

However, we can proceed as in the after default 
with the $(\cS)^*$ approach
and try to look an optimal strategy
by giving an explicit expression for the right-hand side of \eqref{ex.eq.compensator.Z}.
We define the process $K_s:=M_s-W_s$ 
%sometimes named \emph{rally} of $W$ in the literature. %, see~\cite{drawdown2004}.
and we consider the expression appearing in Example~4.1.7.5 of~\cite{Jeanblanc2009mathematical},
$$  g(s,K_s):= Z_s = \frac{2}{\sqrt{2\pi}}\int_{K_s/\sqrt{T-s}}^{+\infty} \exp(-y^2/2) dy\ . $$
We compute $dZ_s$ by applying the \Ito Lemma with the following property of the partial derivatives:
\begin{align*}
%    \frac{\partial g}{\partial s}(s,K_s) &= - \frac{K_s}{\sqrt{2\pi(T-s)^3}}\exp\left(-\frac{K_s^2}{2(T-s)} \right)\\
%    \frac{\partial g}{\partial K}(s,K_s) &= - \frac{1}{\sqrt{2\pi(T-s)}}\exp\left(-\frac{K_s^2}{2(T-s)} \right)\\
    \frac{\partial^2 g}{\partial K^2}(s,K_s) &= \frac{2K_s}{\sqrt{2\pi(T-s)^3}}\exp\left(-\frac{K_s^2}{2(T-s)} \right) = 2\frac{\partial g}{\partial s}(s,K_s)\ ,
\end{align*}
so we conclude that
$$ dZ_s = \frac{\partial g}{\partial K}(s,K_s)dK_s\ . $$
In (30.2.55) of page 462 in the monograph \cite{PeskirShiryaev2006}, the term $dK_s$ is computed as follows
$$ dK_s = sgn(W_s)\bOne_{\{W_s\neq 0\}}dW_s + dL_s^0 \ ,$$
being $L^0=\prT{L^0}$ the local time at zero of the Brownian motion.
According to Theorem~4.13 of~\cite{BENDER03}, $L^0_t$ takes the expression
\begin{equation}\label{eq.SDE.localtime}
    dL^0_t = \left(\frac{1}{\sqrt{2\pi t}} - \int_0^t \frac{W_s}{\sqrt{2\pi(t-s)^3}}\exp\left(-\frac{W_s^2}{2(t-s)}\right) dW_s\right)dt
\end{equation}
in $(\cS)^*$.
%Applying again the \Ito Lemma to the stochastic integral of \eqref{eq.SDE.localtime}, we rewrite it as
%    $$ \frac{dL^0_t}{dt} =\frac{1}{\sqrt{2\pi t}} +\frac{1}{\sqrt{2\pi(T-t)}}\exp\left(-\frac{W_t^2}{2(T-t)}\right)\ . $$
    As $d\langle K,K\rangle_t = dt$ $\PP$-a.s., we conclude that
    \begin{equation*}
        dZ_s =\frac{\partial g}{\partial K}(s,K_s)   \frac{dL^0_s}{ds}ds +  \frac{\partial g}{\partial K}(s,K_s)  sgn(W_s)\bOne_{\{W_s\neq 0\}}dW_s
    \end{equation*}
%Note that $dK_s = -dW_s$ holds true under the condition $K_s > 0$ and by Comment 4.1.7.3 of \cite{Jeanblanc2009mathematical} we assure that $\PP(K_s>0) = 1$, for any $s>0$.
%$$dZ_s = \left(\frac{\partial g}{\partial s}(s,K_s) + \frac{1}{2}\frac{\partial^2 g}{\partial K^2}(s,K_s)\right)ds -  \frac{\partial g}{\partial K}(s,K_s) sgn(W_s)\bOne_{\{W_s\neq 0\}} dW_s  \ ,$$
and, using the martingale property, 
we can reformulate $\eqref{ex.eq.compensator.Z}$ as
\begin{equation*}
    \EE\left[\int_t^{t+h} \frac{\kappa_s} {1+\pi_{s}^{\bF}\kappa_s} dZ_s \vert\cF_t\right] = 
    \EE\left[\int_t^{t+h} \frac{\kappa_s} {1+\pi_{s}^{\bF}\kappa_s} \frac{\partial g}{\partial K}(s,K_s) \frac{dL^0_s }{ds} ds
    \vert\cF_t\right]\ .
 %   \int_t^{t+h}  \EE\left[\frac{\kappa_s} {1+\pi_{s}^{\bF}\kappa_s} \left(\frac{\partial g}{\partial s}(s,K_s) + \frac{1}{2}\frac{\partial^2 g}{\partial K^2}(s,K_s) (\phi(s,K_s) -1)^2  \right) \vert\cF_t\right] ds
\end{equation*}
%which is absolutely continuous with respect to the Lebesgue measure. % and the integrand function is null at the origin $s= t$.
Finally the before default strategy is
\begin{align}\label{max.log.before}
    \pi^\bF_s =&\frac{\mu^\bF_s-\rho^\bF_s}{\left(\sigma^\bF_s\right)^2}-  \frac{1}{\left(\sigma^\bF_s\right)^2}\int_{\bR_0}\frac{\pi^\bF_s\left(\theta^\bF_s(z)\right)^2}{1+\pi^\bF_s\theta^\bF_s(z)} \nu(dz)+\frac{1}{\sigma^\bF_s }\frac{ \exp\left(-\frac{K_s^2}{2(T-s)}\right)}{Z_s\sqrt{T-s}}\notag\\
    &-\frac{1}{\left(\sigma^\bF_s\right)^2}\frac{\kappa_s} {1+\pi_{s}^{\bF}\kappa_s} \frac{\partial g}{\partial K}(s,K_s)  \frac{dL^0_s }{ds}\ ,\quad 0 < s \leq T\ .
\end{align}
We achieved an implicit formula for the local maximum portfolio under the progressive enlargement given by the time in which the Brownian motion reaches its maximum and
we remind that this progressive enlargement does not satisfy neither the {density} nor the {intensity} hypothesis.
However, note that $\frac{dL^0_s}{ds}$ is not well-defined in $L^2(dt\times\PP)$ and we have gotten a strategy which is not in $\cA(\bG)$, 
because it not satisfies the condition~\eqref{hyp1}.
\end{Example}
In the following example, we show a random time satisfying the intensity hypothesis \eqref{approach.intensity} but not the density one \eqref{approach.density}.
In this case, we get the condition characterizing a local maximum strategy.
\begin{Example}\label{ex.aksamit.half-final}
We consider the following example with
\begin{equation}\label{ex.tau.half-final}
    \tau := \sup_{t\in [0,T]}\Big\{ W_t = \frac{W_T}{2} \Big\}\ ,
\end{equation}
deeply studied in \cite{aksamit:tel-01016672,Jeanblanc2009mathematical}.
The process $Z$ is computed in Subsection 5.6.5 of~\cite{Jeanblanc2009mathematical} as
\begin{equation}\label{def.Z.ex.Aksamit}
    Z_t = 1 - h\left(\frac{\abs{W_t}}{\sqrt{T-t}}\right)\ ,\quad h(x):=\sqrt{\frac{2}{\pi}}\int_0^x y^2 e^{-y^2/2}dy\ .
\end{equation}
By the decomposition given by Proposition~1.46 of \cite{Aksamit2017}, we conclude that 
$A^\tau_t = h\left(\dfrac{\abs{W_t}}{\sqrt{T-t}}\right)$
and, by~\cite{Jeanblanc2009mathematical}, the \Ito Lemma is computed with the following result,
$$ dA^\tau_t = \frac{\abs{W_t}\exp\left(-\frac{W_t^2}{2(T-t)}\right)}{(T-t)^{3/2}}  dt\ . $$
Then we conclude that \eqref{approach.intensity} holds true. 
Moreover, after Theorem~4.13 of \cite{aksamit:tel-01016672}, it is mentioned that the hypothesis $(\cH')$ fails, so we have found an example satisfying \eqref{approach.intensity} but not \eqref{approach.density}.
In order to determine the optimality condition \eqref{gen.opt.condition.log.bef} before default, we need to compute the Malliavin derivative.
\begin{Lemma}
The Malliavin trace of $\bOne_{\{\tau > s\}}$ is as follows,
    \begin{align}\label{ex.malliavin.derivative.aksamit}
    D_{s^+}\bOne_{\{\tau > s\}}
    =& -\delta_0(2m_{s,T}-W_T) \bOne_{\{W_T \geq 0 \}} - \delta_{0}(W_T)\bOne_{\{2m_{s,T} \geq W_T\}}\notag\\
    &+\delta_0(2M_{s,T}-W_T) \bOne_{\{W_T \leq 0 \}} + \delta_{0}(W_T) \bOne_{\{2M_{s,T} \leq W_T\}}\ .
\end{align}
Moreover, its conditional expectation is computed as 
\begin{align}\label{ex.malliavin.derivative.cond.exp.Aksamit}
     \EE\left[ D_{s^+}\bOne_{\{\tau>s\}} |\cF_s \right] =& 
    \sqrt{\frac{2}{\pi}}\left( -sgn(W_s)
    \frac{W^2_s}{\sqrt{(T-s)^3}}\exp\left(-\frac{W_s^2}{2(T-s)}\right)\right.\notag\\
    &+\left.
    \frac{\abs{W_s}}{\sqrt{(T-s)^3}}\exp\left(-\frac{W_s^2}{2(T-s)}\right) \right)
     %\bOne_{\{W_s\geq 0\}}\left(\int_0^{W_s}f^m_{s,T}(m,2m)dm \,+f^m_{s,T}(0,0)\right)\notag\\
    %&+ \bOne_{\{W_s\leq 0\}}\left(\int_{W_s}^0f^M_{s,T}(m,2m)dm \,+f^M_{s,T}(0,0)\right) 
\end{align}
\end{Lemma}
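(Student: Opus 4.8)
The plan is to reproduce, for this last‑passage time, the scheme used for the lemma in Example~\ref{ex.argmax}: translate $\{\tau>s\}$ into a condition on the running infimum $m_{s,T}$, the running supremum $M_{s,T}$ of $W$ on $[s,T]$ and on $W_T$, then differentiate the resulting indicators through the distributional chain rule of Corollary~5.3 in~\cite{Bermin02}, and finally take $\cF_s$-conditional expectations using the reflection principle.

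\emph{Step 1: rewriting the event.} Since $W$ is continuous and $\tau$ is the \emph{last} time at which $W$ equals $W_T/2$, the event $\{\tau>s\}$ coincides, outside a $\PP$-null set, with the event that the random level $W_T/2$ is attained by $W$ somewhere on $[s,T]$, i.e. $m_{s,T}\le W_T/2\le M_{s,T}$. Because $W_T\in[m_{s,T},M_{s,T}]$ and $W_T/2$ always lies between $0$ and $W_T$, exactly one of the two inequalities is binding according to the sign of $W_T$; keeping track also of the degenerate value $W_T=0$ (at which $\tau=T$, so $\{\tau>s\}$ holds), one obtains the exact identity
$$\bOne_{\{\tau>s\}}=\bOne_{\{W_T\ge0\}}\bOne_{\{2m_{s,T}-W_T\le0\}}+\bOne_{\{W_T\le0\}}\bOne_{\{2M_{s,T}-W_T\ge0\}}-\bOne_{\{W_T=0\}}.$$

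\emph{Step 2: the Malliavin trace.} I would apply Corollary~5.3 of~\cite{Bermin02} to each indicator, using $D_{s^+}W_T=1$ and, adapting Proposition~2.1.10 of~\cite{nualart2006malliavin} to the interval $[s,T]$, $D_{s^+}M_{s,T}=\bOne_{\{M_{s,T}>W_s\}}$ and $D_{s^+}m_{s,T}=\bOne_{\{m_{s,T}<W_s\}}$, both equal to $1$ $\PP$-a.s. since a Brownian path immediately rises above and drops below its present value; consequently $D_{s^+}(2m_{s,T}-W_T)=D_{s^+}(2M_{s,T}-W_T)=1$ a.s. on the relevant level sets. Combining this with the Leibniz rule, with $D_{s^+}\bOne_{\{W_T\ge0\}}=\delta_0(W_T)=-D_{s^+}\bOne_{\{W_T\le0\}}$ and with $D_{s^+,z}\bOne_{\{\tau>s\}}=0$ (since $\tau$ is a functional of $W$ only), and collecting the resulting four Dirac terms, gives~\eqref{ex.malliavin.derivative.aksamit}.

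\emph{Step 3: conditional expectation.} I would take $\EE[\,\cdot\mid\cF_s]$ term by term. The terms carrying $\delta_0(2m_{s,T}-W_T)\bOne_{\{W_T\ge0\}}$ and $\delta_0(2M_{s,T}-W_T)\bOne_{\{W_T\le0\}}$ are integrated against the explicit joint density of $(W_T-W_s,\,m_{s,T}-W_s)$ (resp. with the running supremum) furnished by the reflection principle: the Dirac collapses the mass onto the line $2m=W_T$, one restricts to the half-line fixed by the sign of $W_T$, and the remaining Gaussian integral is elementary. The terms carrying $\delta_0(W_T)$ require separate care: conditionally on $W_T=0$ one integrates against the joint law of $(m_{s,T},M_{s,T})$ for the Brownian bridge from $W_s$ to $0$, the indicators $\bOne_{\{2m_{s,T}\ge W_T\}}$ and $\bOne_{\{2M_{s,T}\le W_T\}}$ selecting the portion supported on $\{m_{s,T}\ge0\}$ (resp. $\{M_{s,T}\le0\}$), for which the killed transition density is explicit. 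Assembling the contributions and simplifying by $sgn(W_s)\,W_s^2=\abs{W_s}\,W_s$ yields~\eqref{ex.malliavin.derivative.cond.exp.Aksamit}; a consistency check is compatibility with $\EE[\bOne_{\{\tau>s\}}\mid\cF_s]=Z_s$ and $A^\tau_t=h(\abs{W_t}/\sqrt{T-t})$ obtained from~\eqref{def.Z.ex.Aksamit}.

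\emph{Main obstacle.} The hard part is the rigorous treatment of the boundary set $\{W_T=0\}$: there the level $W_T/2$ merges with the terminal value $W_T$, so the convenient a.s. decomposition $\bOne_{\{\tau>s\}}=\bOne_{\{m_{s,T}\le W_T/2\}}-\bOne_{\{M_{s,T}<W_T/2\}}$ differs from the truth precisely on the $\PP$-null set where $\delta_0(W_T)$ is supported. Getting the four Dirac terms of~\eqref{ex.malliavin.derivative.aksamit} right, and then carrying out the two families of reflection-principle integrals of Step~3 — especially those localized at $W_T=0$ — so that they add up to the exact closed form~\eqref{ex.malliavin.derivative.cond.exp.Aksamit}, is where the computational care is concentrated.
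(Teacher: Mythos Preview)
Your approach is essentially the paper's: decompose the indicator via $m_{s,T},M_{s,T},W_T$, differentiate with the distributional chain rule of \cite{Bermin02}, and take $\cF_s$-conditional expectations against the reflection-principle joint densities. Two small points are worth flagging.

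First, the paper works with the complement, writing $\bOne_{\{\tau>s\}}=1-\bOne_{\{2m_{s,T}\ge W_T\}}\bOne_{\{W_T\ge 0\}}-\bOne_{\{2M_{s,T}\le W_T\}}\bOne_{\{W_T\le 0\}}$ and then applying Leibniz; this produces \eqref{ex.malliavin.derivative.aksamit} directly, without your extra $-\bOne_{\{W_T=0\}}$ correction term. Your decomposition is equivalent as a random variable, but the $\delta_0(W_T)$-coefficients you obtain ($\bOne_{\{2m_{s,T}\le W_T\}}$ and $\bOne_{\{2M_{s,T}\ge W_T\}}$) are the complements of those in \eqref{ex.malliavin.derivative.aksamit}; on the support $\{W_T=0\}$ both pairs collapse (since $m_{s,T}\le 0\le M_{s,T}$ there), so the two expressions coincide in $(\cS)^*$, but to reproduce \eqref{ex.malliavin.derivative.aksamit} as written you would have to pass to the complement form.

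Second, your ``main obstacle'' is lighter than you fear: the paper does not invoke Brownian-bridge laws for the $\delta_0(W_T)$ terms. It integrates all four terms against the same joint densities $f^M_{s,T}$ and $f^m_{s,T}$ of $(M_{s,T},W_T)$ and $(m_{s,T},W_T)$ given $\cF_s$; the $\delta_0(W_T)$ pieces simply contribute the point values $f^M_{s,T}(0,0)$ and $f^m_{s,T}(0,0)$, which then combine with the line integrals $\int_0^{W_s}f^m_{s,T}(m,2m)\,dm$ (resp.\ $\int_{W_s}^0 f^M_{s,T}(m,2m)\,dm$) to give \eqref{ex.malliavin.derivative.cond.exp.Aksamit}.
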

\begin{proof}
Note that, by using \cite{Jeanblanc2009mathematical}, we can rewrite $\bOne_{\{\tau > s\}}$ as
\begin{align*}
    \bOne_{\{\tau > s\}} = 1- \bOne_{\{\tau \leq s\}} = 1-\bOne_{\{2m_{s,T} \geq W_T\}}\bOne_{\{W_T \geq 0 \}} - \bOne_{\{2M_{s,T} \leq W_T \}}\bOne_{\{W_T \leq 0 \}}\ .
\end{align*}
Then we use the chain rule and we get,
\begin{align*}
   D_{s^+}\bOne_{\{\tau > s\}} =& - D_{s^+}\left(\bOne_{\{2m_{s,T} \geq W_T\}}\bOne_{\{W_T \geq 0 \}}\right) - D_{s^+}\left(\bOne_{\{2M_{s,T} \leq W_T \}}\bOne_{\{W_T \leq 0 \}}\right)\\
    =& -D_{s^+}\left(\bOne_{\{2m_{s,T} \geq W_T\}}\right) \bOne_{\{W_T \geq 0 \}} - D_{s^+}\left(\bOne_{\{W_T \geq 0 \}}\right) \bOne_{\{2m_{s,T} \geq W_T\}}\\
    &-D_{s^+}\left(\bOne_{\{2M_{s,T} \leq W_T\}}\right) \bOne_{\{W_T \leq 0 \}} - D_{s^+}\left(\bOne_{\{W_T \leq 0 \}}\right) \bOne_{\{2M_{s,T} \leq W_T\}}\ ,
\end{align*}
where the main computation is
\begin{align*}
    D_{s^+}\bOne_{\{2M_{s,T} \leq W_T \}} &= D_{s^+}\bOne_{\{2M_{s,T}-W_T\leq 0 \}} =  -\delta_0(2M_{s,T}-W_T) D_{s^+}\left(2M_{s,T}-W_T\right)\\
    &= -\delta_0(2M_{s,T}-W_T)\left(2\bOne_{\{M_{s,T}> W_s\}}-1\right)
\end{align*}
and $D_{s^+}\bOne_{\{W_T \geq 0 \}} = \delta_{0}(W_T)$ for any $s< T$.
We get
\begin{align*}
    D_{s^+}\bOne_{\{\tau > s\}}
    =& -\delta_0(2m_{s,T}-W_T)\left(2\bOne_{\{m_{s,T}\leq W_s\}}-1\right) \bOne_{\{W_T \geq 0 \}} - \delta_{0}(W_T)\bOne_{\{2m_{s,T} \geq W_T\}}\\
    &+\delta_0(2M_{s,T}-W_T)\left(2\bOne_{\{M_{s,T} \geq W_s\}}-1\right) \bOne_{\{W_T \leq 0 \}} + \delta_{0}(W_T) \bOne_{\{2M_{s,T} \leq W_T\}}\\
    =& -\delta_0(2m_{s,T}-W_T) \bOne_{\{W_T \geq 0 \}} - \delta_{0}(W_T)\bOne_{\{2m_{s,T} \geq W_T\}}\\
    &+\delta_0(2M_{s,T}-W_T)\bOne_{\{W_T \leq 0 \}} + \delta_{0}(W_T) \bOne_{\{2M_{s,T} \leq W_T\}}\ .
\end{align*}
We finally compute the conditional expectation,
We denote by $f^M_{s,T}(m,w)$ %with $(m,w)\in\bR_+\times\bR$ 
the density of $(M_{s,T},W_T|W_s)$ 
and by $f^m_{s,T}(m,w)$ %with $(m,w)\in\bR_-\times\bR$ 
the density of $(m_{s,T},W_T|W_s)$.
\begin{align*}
    \EE[D_{s^+}\bOne_{\{\tau > s\}}\vert\cF_s] %=& %-\int_{-\infty}^{W_s}\int_m^{+\infty} \delta_0(2m-w)2\bOne_{\{m < W_s\}} \bOne_{\{w \geq 0 \}} f^m_{s,T}(m,w) dwdm\\
     %&+\int_{-\infty}^{W_s}\int_m^{+\infty} \delta_0(2m-w) \bOne_{\{w \geq 0 \}} f^m_{s,T}(m,w) dwdm\\
    %&-\int_{-\infty}^{W_s}\int_m^{+\infty} \delta_{0}(w)\bOne_{\{2m \geq w\}}f^m_{s,T}(m,w) dwdm\\
    %&+\int_{W_s}^{\infty}\int_{-\infty}^{m}\delta_0(2m-w)2\bOne_{\{m > W_s\}} \bOne_{\{w \leq 0 \}}f^M_{s,T}(m,w)dwdm\\
    %&-\int_{W_s}^{\infty}\int_{-\infty}^{m}\delta_0(2m-w)\bOne_{\{w \leq 0 \}}f^M_{s,T}(m,w)dwdm\\
    %&+\int_{W_s}^{\infty}\int_{-\infty}^{m} \delta_{0}(w) \bOne_{\{2m \leq w\}}f^M_{s,T}(m,w)dwdm\\
    =& -\int_{-\infty}^{W_s}\int_{m\vee 0}^{+\infty} \delta_0(2m-w)f^m_{s,T}(m,w) dwdm\\
    &-\int_{-\infty}^{W_s}\int_m^{2m} \delta_{0}(w) f^m_{s,T}(m,w) dwdm\\
    &+\int_{W_s}^{\infty}\int_{-\infty}^{m\wedge 0}\delta_0(2m-w)f^M_{s,T}(m,w)dwdm\\
    &+\int_{W_s}^{\infty}\int_{2m}^{m} \delta_{0}(w) f^M_{s,T}(m,w)dwdm\\
    =& \bOne_{\{W_s\geq 0\}}\left(\int_0^{W_s}f^m_{s,T}(m,2m)dm \,+f^m_{s,T}(0,0)\right)\\
    &+ \bOne_{\{W_s\leq 0\}}\left(\int_{W_s}^0f^M_{s,T}(m,2m)dm \,+f^M_{s,T}(0,0)\right)\ .
\end{align*}
    Finally, we aim to give a more explicit expression by computing the densities $f^M_{s,T}$ and $f^m_{s,T}$. We get,
    \begin{align*}
        f^M_{s,T}(x,y) =& \PP(M_{s,T}\in dx, W_T\in dy\vert\cF_s)\\ 
        &= \PP(M_{s,T}-W_s\in dx-W_s, W_T-W_s\in dy-W_s\vert\cF_s)\\
        &= \PP(M_{T-s}\in dx-W_s, W_{T-s}\in dy-W_s)\\
        &= \sqrt{\frac{2}{\pi}}\frac{2x-y-W_s}{\sqrt{(T-s)^3}}\exp\left(-\frac{(2x-y-W_s)^2}{2(T-s)}\right)\ ,\quad x\geq W_s\ ,\  x\geq y\ ,
    \end{align*}
    and we conclude
    \begin{align*}
        \bOne_{\{W_s\leq 0\}}\EE[D_{s^+}\bOne_{\{\tau > s\}}\vert\cF_s] = 
    \bOne_{\{W_s\leq 0\}}\sqrt{\frac{2}{\pi}}&\left( 
    \frac{W^2_s}{\sqrt{(T-s)^3}}\exp\left(-\frac{W_s^2}{2(T-s)}\right)\right.\\
    &-\left.
    \frac{W_s}{\sqrt{(T-s)^3}}\exp\left(-\frac{W_s^2}{2(T-s)}\right) \right)\ ,
    \end{align*}
    and by reasoning analogously with $f^m_{s,T}$ the result follows.
\end{proof}
Then, the optimal strategy in $\cA(\bF)$ exists if and only if the the following equation has a solution,
\begin{align}
    \pi^\bF_s\left(\sigma^\bF_s\right)^2 =& \mu^\bF_s-\rho^\bF_s -  \int_{\bR_0}\frac{\pi^\bF_s(\theta^\bF_s(z))^2}{1+\pi^\bF_s\theta^\bF_s(z)}\nu(dz) +\frac{\EE\left[D_{s^+}\bOne_{\{\tau > s\}}\,|\cF_s\,\right]}{Z_{s}}\sigma^\bF_s\notag\\
    & +  \frac{\kappa_s}{1+\pi^{\bF}_s\kappa_s}\frac{1}{Z_{s}} \frac{\abs{W_s}\exp\left(-\frac{W_s^2}{2(T-s)}\right)}{(T-s)^{3/2}} \ ,
\end{align}
where $Z$ satisfies \eqref{def.Z.ex.Aksamit} and $\EE[D_s\bOne_{\{\tau>s\}}\vert\cF_s]$ satisfies \eqref{ex.malliavin.derivative.cond.exp.Aksamit}.
\begin{Remark}
    In particular, if
    the market coefficient of the Poisson process is $\theta=0$, 
    we have the following equation for the local maximum $\pi^\bF$,
\begin{equation}\label{eq.opt.condition.log.bef.aksamit}
    \pi^\bF_s= \frac{\mu^\bF_s-\rho^\bF_s}{\left(\sigma^\bF_s\right)^2 }+\frac{\EE\left[D_{s^+}\bOne_{\{\tau > s\}}\,|\cF_s\,\right]}{Z_{s}\sigma^\bF_s} +  \frac{\kappa_s}{1+\pi^{\bF}_s\kappa_s}\frac{1}{Z_{s}\left(\sigma^\bF_s\right)^2} \frac{\abs{W_s}\exp\left(-\frac{W_s^2}{2(T-s)}\right)}{(T-s)^{3/2}} \ .
\end{equation}
To short the notation, we define
$$ a_s:= \frac{\mu^\bF_s-\rho^\bF_s}{\left(\sigma^\bF_s\right)^2 } +\frac{\EE\left[D_{s^+} \bOne_{\{\tau > s\}}\,|\cF_s\,\right]} {Z_{s}\sigma^\bF_s} \ ,\quad b_s:= \frac{1}{Z_{s}\left(\sigma^\bF_s\right)^2} \frac{\abs{W_t}\exp\left(-\frac{W_t^2}{2(T-t)}\right)}{(T-t)^{3/2}}\ , $$
and \eqref{eq.opt.condition.log.bef.aksamit} is reduced to
$$ \pi^\bF_s= a_s +  \frac{\kappa_s}{1+\pi^{\bF}_s\kappa_s}b_s \Longrightarrow 
\kappa_s\left(\pi^\bF_s\right)^2 + (1-a_s\kappa_s)\pi^\bF_s  -(a_s+b_s\kappa_s) = 0\ , $$
and we solved it as
\begin{align*}
    \pi^\bF_s &= \frac{a_s\kappa_s-1+\sqrt{(1-a_s\kappa_s)^2+4\kappa_s(a_s+b_s\kappa_s)}}{2\kappa_s}\\
    &= \frac{a_s\kappa_s-1+\sqrt{(1+a_s\kappa_s)^2+4b_s\kappa^2_s}}{2\kappa_s}\ ,
\end{align*}
where it can be checked that the positive solution is the only admissible that satisfies \eqref{hyp3}.
\end{Remark}
\end{Example}

\section{Sufficient Approach}\label{sec:suff.aproach}
In Section \ref{sec:opt.problem}, we showed
that the existence of a local maximum for the logarithmic utility optimization problem is strictly related with the intensity hypothesis \eqref{approach.intensity}.
In this section, we assume the existence of such local maximum.

In order to conclude that the optimal problem is well-posed, we introduce some conditions in the following Assumption \ref{assumption.unif.integral}.
The first one is obvious while the second one is needed to properly define the new measure $\QQ$ in Theorem \ref{thm.suff.condition}.
The third and the fourth ones are needed to compute local maximum strategies by allowing the differentiation under the integral sign.
We define the following process,
\begin{align}\label{def.psi}
    \Psi_t(y,\beta,\pi) :=& \int_0^t\beta_s\left(\mu_s-\rho_s-(\pi_s+y\beta_s)\sigma^2_s - \int_{\bR_0}\frac{(\pi_s+y\beta_s)\theta^2_s(z)}{1+(\pi_s+y\beta_s)\theta_s(z)} \nu(dz)\right)ds\notag\\
    &+ \int_0^t\beta_s\sigma_sd^-W_s+\int_0^t\int_{\bR_0}\frac{\beta_s\theta_s(z)}{1+(\pi_s+y\beta_s)\theta_s(z)}\tilde N(d^-z,ds) \notag\\
    &+ \int_0^t\frac{\beta_s\kappa_s\,dH_s }{1+(\pi_s+y\beta_s)\kappa_s}
\end{align}
assuming that there exists some $\delta > 0$,
which may depend on $\pi\in\cA(\bE)$,
such that $y\in(-\delta,\delta)$, $\pi+y\beta\in\cA(\bE)$ for $\bE\in\{\bF,\bG\}$ and $\beta\in\cA(\bE)$ is bounded.
\begin{Assumption}\label{assumption.unif.integral}
\begin{enumerate}
    For any $\pi\in\cA(\bE)$,
    \item $\EE[U(X^{\pi}_T)] <+\infty.$
    \item $0<U'(X^{\pi}_T)X^{\pi}_T <+\infty\  \PP-$a.s.
    \item The following family of processes is uniformly integrable, $$\{U'(X^{\pi+y\beta}_T)X^{\pi+y\beta}_T \abs{\Psi_T(y,\beta,\pi)}\}_{y\in(-\delta,\delta)}\ .$$ 
    \item The following family of processes is uniformly integrable,
    \begin{align*}
    \Big\{ &U^{\prime \prime}\left(X^{\pi+y \beta}_T\right) (X_T^{\pi+y \beta})^{2} \Psi^{2}_T(y, \beta, \pi) \\
    &+\,U^{\prime}\left(X^{\pi+y \beta}_T\right) X^{\pi+y \beta}_T\left(\Psi_T(y, \beta, \pi)+\frac{d}{dy}\Psi_T(y, \beta, \pi)\right)\Big\}_{y \in(-\delta, \delta)}\ .
    \end{align*}
\end{enumerate}
\end{Assumption}
\begin{Remark}
    If we consider the particular case with 
    $0\leq\pi_t\leq 1,\,\forall t\in[0,T]$, 
    under logarithmic utility and bounded market coefficients, 
    it can be shown that 
    Assumption~\ref{assumption.unif.integral} holds true.
\end{Remark}
The following lemma is quite technical but we need it in order to simplify the following computations.
\begin{Lemma}\label{lemma.derivatives}
Let $\pi,\beta\in\cA(\bE)$ be some admissible strategies with $\beta$ bounded, and define the function
$$ g(y) := \EE[U(X^{\pi+y\beta}_T)]\ . $$
Then under Assumption~\ref{assumption.unif.integral} the first and second derivatives of the function g can be computed as follows
\begin{align*}
    g'(y) &= \frac{d}{dy}\EE\left[ U( X^{\pi+y\beta}_T) \right] = \EE[ U'( X^{\pi+y\beta}_T) X^{\pi+y\beta}_T\,\Psi_T(y,\beta,\pi)] \\
    g''(y)&= \EE\left[ X^{\pi+y \beta}_T \Psi^{2}_T(y, \beta, \pi)\left(U^{\prime \prime}\left(X^{\pi+y \beta}_T\right) X^{\pi+y \beta}_T+U^{\prime}\left(X^{\pi+y \beta}_T\right)\right)\right.\\
    &\quad \left.+U^{\prime}\left(X^{\pi+y \beta}_T\right) X^{\pi+y \beta}_T \frac{d}{dy}\Psi_T(y, \beta, \pi)\right]\ ,
\end{align*}
where $\Psi$ is defined in \eqref{def.psi}.
\end{Lemma}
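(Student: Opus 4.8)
The plan is to reduce the statement to a pathwise differentiation of the explicit log-wealth and then justify interchanging $d/dy$ with the expectation using the uniform integrability built into Assumption~\ref{assumption.unif.integral}. First I would fix $\pi,\beta\in\cA(\bE)$ with $\beta$ bounded and $\delta>0$ as in the statement, and write $X_T^{\pi+y\beta}$ through the closed form \eqref{solution.X} with $\pi$ replaced by $\pi+y\beta$. Differentiating the integrand of $\ln X_T^{\pi+y\beta}$ term by term in $y$: the drift terms give $\beta_s(\mu_s-\rho_s)$, $-(\pi_s+y\beta_s)\beta_s\sigma_s^2$ and, after simplifying $\tfrac{\beta_s\theta_s(z)}{1+(\pi_s+y\beta_s)\theta_s(z)}-\beta_s\theta_s(z)$, the term $-\tfrac{(\pi_s+y\beta_s)\beta_s\theta_s^2(z)}{1+(\pi_s+y\beta_s)\theta_s(z)}$; the forward $W$-integral is linear in $y$ and contributes $\int_0^T\beta_s\sigma_s\,d^-W_s$; the $\tilde N$-forward integral contributes $\int_0^T\int_{\bR_0}\tfrac{\beta_s\theta_s(z)}{1+(\pi_s+y\beta_s)\theta_s(z)}\tilde N(d^-s,dz)$; and the Riemann--Stieltjes $H$-term, being the single summand $\ln(1+(\pi_\tau+y\beta_\tau)\kappa_\tau)\bOne_{\{\tau\le T\}}$, contributes $\int_0^T\tfrac{\beta_s\kappa_s}{1+(\pi_s+y\beta_s)\kappa_s}\,dH_s$. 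Collecting these shows $\tfrac{d}{dy}\ln X_T^{\pi+y\beta}=\Psi_T(y,\beta,\pi)$ with $\Psi$ as in \eqref{def.psi}, whence by the chain rule $\tfrac{d}{dy}X_T^{\pi+y\beta}=X_T^{\pi+y\beta}\,\Psi_T(y,\beta,\pi)$.

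Next I would compute $g'$. Since $\tfrac{d}{dy}U(X_T^{\pi+y\beta})=U'(X_T^{\pi+y\beta})X_T^{\pi+y\beta}\Psi_T(y,\beta,\pi)$ pathwise, the difference quotient $\tfrac1\epsilon\big(U(X_T^{\pi+(y+\epsilon)\beta})-U(X_T^{\pi+y\beta})\big)=\tfrac1\epsilon\int_y^{y+\epsilon}U'(X_T^{\pi+u\beta})X_T^{\pi+u\beta}\Psi_T(u,\beta,\pi)\,du$ is an average of the family in Assumption~\ref{assumption.unif.integral}(3), which is uniformly integrable and converges a.s.\ to its value at $u=y$; a Vitali-type convergence argument then lets me pass $\epsilon\to0$ inside $\EE$, giving $g'(y)=\EE[U'(X_T^{\pi+y\beta})X_T^{\pi+y\beta}\Psi_T(y,\beta,\pi)]$. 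For $g''$ I would differentiate the integrand $U'(X_T^{\pi+y\beta})X_T^{\pi+y\beta}\Psi_T$ pathwise by the product and chain rules, using $\tfrac{d}{dy}X_T^{\pi+y\beta}=X_T^{\pi+y\beta}\Psi_T$ from the first step; this produces $X_T^{\pi+y\beta}\Psi_T^2\big(U''(X_T^{\pi+y\beta})X_T^{\pi+y\beta}+U'(X_T^{\pi+y\beta})\big)+U'(X_T^{\pi+y\beta})X_T^{\pi+y\beta}\tfrac{d}{dy}\Psi_T$, and then I interchange $\tfrac{d}{dy}$ with $\EE$ once more, justified by the uniform integrability of the family in Assumption~\ref{assumption.unif.integral}(4) via the same Vitali argument applied to the difference quotients of $y\mapsto U'(X_T^{\pi+y\beta})X_T^{\pi+y\beta}\Psi_T$.

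Two technical points carry the weight of the proof. The first is the differentiation in $y$ inside the forward integrals: for the $W$-integral this is immediate by linearity, but for the $\tilde N$-forward integral one must check that $\tfrac{d}{dy}$ commutes with $\int_0^T\int_{\bR_0}(\cdot)\tilde N(d^-s,dz)$, which I would handle by a mean-value estimate on $\ln(1+(\pi_s+y\beta_s)\theta_s(z))$, together with the forward integrability of $\pi\theta$, $\ln(1+\pi\theta)$ and $\tfrac{\pi\theta}{1+\pi\theta}$ (admissibility) and the uniform lower bound \eqref{hyp3}, which keeps every denominator away from zero uniformly for $y\in(-\delta,\delta)$; the same bound also makes $\tfrac{d}{dy}\Psi_T$ well defined and forward integrable. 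The second, and the genuinely delicate one, is the two interchanges of differentiation and expectation: these are not automatic, and conditions (3) and (4) of Assumption~\ref{assumption.unif.integral} are precisely tailored so that the relevant difference quotients form uniformly integrable families, allowing the Vitali passage to the limit. I expect this latter step — controlling the difference quotients of $U'(X)X\Psi$ uniformly in $y$ — to be the main obstacle, which is why the assumption is imposed rather than derived.
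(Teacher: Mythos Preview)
Your proposal is correct and follows the same line as the paper, which in fact dispatches the lemma in a single sentence (``The result directly follows from the conditions given in Assumption~\ref{assumption.unif.integral}''). You have simply made explicit the two ingredients the paper leaves implicit: the pathwise identity $\tfrac{d}{dy}\ln X_T^{\pi+y\beta}=\Psi_T(y,\beta,\pi)$ obtained from \eqref{solution.X}, and the Vitali-type interchange of $d/dy$ with $\EE$ justified by the uniform integrability in parts (3) and (4) of the assumption.
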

\begin{proof}
The result directly follows from the conditions given in Assumption~\ref{assumption.unif.integral}.
\end{proof}
\begin{Theorem}\label{thm.suff.condition}
%Suppose that $\mu.\sigma,\theta$ and $\lambda$ are $\bG$-adapted processes and random fields, assumption $A_{u.i.}$ holds and $\cF_t\bigvee\cH_t^{\Lambda}\subset\cG_t$ for all $t\in[0,T]$.
If $\pi$ is a local maximum for the problem \eqref{optimization.problem}, with $\bE\in\{\bF,\bG\}$, and Assumption~\ref{assumption.unif.integral} holds, 
then the process $M^{\pi}=\prT{M^{\pi}}$, defined as,
\begin{align}\label{M.martingale}
    M^{\pi}_t:=& \Psi_t(0,1,\pi) = \int_0^{t} \left( \mu_s-\rho_s-\pi_s\sigma^2_s - \int_{\bR_0}\frac{\pi_s\theta^2_s(z)}{1+\pi_s\theta_s(z)}\nu(dz) \right) ds\notag  \\
    &+\int_0^{t}\sigma_sd^-W_s +\int_0^{t}\int_{\bR_0}\frac{\theta_s(z)}{1+\pi_s\theta_s(z)}\tilde N(d^-s,dz) + \int_0^{t} \frac{\kappa_s}{1+\pi_s\kappa_s} dH_s
\end{align}
    has the martingale property under $(\QQ,\bE)$ with
    \begin{equation}\label{Q.measure}
        d\QQ := F^{\pi}_T d\PP,\ \  F^{\pi}_T := \frac{U'(X^{\pi}_T) X^{\pi}_T}{\EE[U'(X^{\pi}_T)X^{\pi}_T]}\ .
    \end{equation}
\end{Theorem}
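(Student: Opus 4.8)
The plan is to show that $M^\pi$ is a $(\QQ,\bE)$-martingale by testing it against bounded $\bE$-adapted increments and using the first-order optimality condition that comes from $\pi$ being a local maximum. First I would fix $s<t$ in $[0,T]$ and a bounded $\cG_s$-measurable (resp.\ $\cF_s$-measurable when $\bE=\bF$) random variable $\xi_s$, and choose the perturbation $\beta_u := \xi_s\bOne_{\{s<u\le t\}}$, which is a bounded admissible strategy in $\cA(\bE)$. Since $\pi$ is a local maximum, the function $g(y)=\EE[U(X^{\pi+y\beta}_T)]$ attains a maximum at $y=0$, so $g'(0)=0$. By Lemma~\ref{lemma.derivatives}, $g'(0)=\EE[U'(X^\pi_T)X^\pi_T\,\Psi_T(0,\beta,\pi)]$, and with this particular $\beta$ one has, from the definition~\eqref{def.psi}, $\Psi_T(0,\beta,\pi)=\xi_s\big(M^\pi_t-M^\pi_s\big)$ — this is just the observation that plugging $\beta=\xi_s\bOne_{(s,t]}$ into $\Psi_T(0,\beta,\pi)$ reproduces the increment of the process $M^\pi=\Psi_\cdot(0,1,\pi)$ defined in~\eqref{M.martingale}, restricted to the interval $(s,t]$, times $\xi_s$.

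Next I would rewrite the resulting identity $0=\EE[U'(X^\pi_T)X^\pi_T\,\xi_s(M^\pi_t-M^\pi_s)]$ in terms of $\QQ$. Using the definition $d\QQ=F^\pi_T\,d\PP$ with $F^\pi_T=U'(X^\pi_T)X^\pi_T/\EE[U'(X^\pi_T)X^\pi_T]$ (well-defined and strictly positive by parts~1 and~2 of Assumption~\ref{assumption.unif.integral}, and integrable against the uniform-integrability hypotheses), this reads $\EE_\QQ[\xi_s(M^\pi_t-M^\pi_s)]=0$. Since $\xi_s$ ranges over all bounded $\cG_s$- (resp.\ $\cF_s$-)measurable random variables, this is precisely the statement $\EE_\QQ[M^\pi_t-M^\pi_s\mid\cG_s]=0$, i.e.\ $M^\pi$ is a $(\QQ,\bE)$-martingale, once one also checks $M^\pi_t\in L^1(\QQ)$ for each $t$, which follows from part~3 of Assumption~\ref{assumption.unif.integral} applied with $\beta\equiv 1$ (taking $y=0$) together with the normalisation in~\eqref{Q.measure}.

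The main obstacle I anticipate is the careful justification of the interchange of differentiation and expectation needed to obtain $g'(0)=\EE[U'(X^\pi_T)X^\pi_T\,\Psi_T(0,\beta,\pi)]$, and the identification of $\frac{d}{dy}\big(\ln X^{\pi+y\beta}_T\big)\big|$ with $\Psi_T(y,\beta,\pi)$ — this is where the forward-integral structure matters, because one is differentiating expressions containing $d^-W$ and $\tilde N(d^-\cdot,d\cdot)$ with respect to the parameter $y$, and the integrands $\beta\sigma$, $\beta\theta/(1+(\pi+y\beta)\theta)$ must remain forward integrable along the perturbation. However, this delicate point has already been packaged into Lemma~\ref{lemma.derivatives} and Assumption~\ref{assumption.unif.integral} (parts~3 and~4 precisely furnish the uniform integrability needed for dominated convergence in the difference quotients), so at the level of this theorem the argument reduces to the bookkeeping above: choose the right $\beta$, read off $\Psi_T(0,\beta,\pi)=\xi_s(M^\pi_t-M^\pi_s)$, invoke $g'(0)=0$, divide by the normalising constant, and conclude the martingale property from the arbitrariness of $\xi_s$ and of $s<t$. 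A minor point to address is that $M^\pi$ is $\bE$-adapted, which is immediate from the admissibility of $\pi$ and the forward-integral constructions in Subsection~\ref{subsec:white.noise}.
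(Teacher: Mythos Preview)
Your proposal is correct and follows essentially the same route as the paper: choose $\beta_u=\xi_t\bOne_{\{t<u\le t+h\}}$, invoke $g'(0)=0$ via Lemma~\ref{lemma.derivatives}, read off $\Psi_T(0,\beta,\pi)=\xi_t(M^\pi_{t+h}-M^\pi_t)$, pass to $\QQ$, and conclude the martingale property from the arbitrariness of $\xi_t$. The only minor refinement in the paper is that the $L^1(\QQ)$-integrability of $M^\pi_t$ for intermediate $t$ is obtained by first establishing $\EE_{\QQ}[\abs{M^\pi_T}]<\infty$ from Assumption~\ref{assumption.unif.integral}(3) and then applying Jensen's inequality to $M^\pi_t=\EE_{\QQ}[M^\pi_T\mid\cE_t]$, rather than asserting it directly.
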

\begin{proof}
    The measure $\QQ$ in \eqref{Q.measure} is well-defined thanks to Assumption~\ref{assumption.unif.integral}(2).
    If $\pi$ is a local maximum, for all bounded $\beta\in\cA(\bE)$ we have, using the notation of Lemma~\ref{lemma.derivatives}, $g'(0) = 0$, i.e.,
    $$ 0 = \EE[ U'(X^{\pi}_T) X^{\pi}_T\,\Psi_T(0,\beta,\pi)]\ . $$
    Let $\beta_s = \xi_t\bOne_{\{t < s \leq t+h\}}$, where $\xi_t$ is a bounded $\cE_t$-measurable random variable, then, 
        \begin{align}\label{eq.local.opt.aux}
        0 &= \EE\left[\xi_t F^{\pi}_T \Big\{ \int_t^{t+h} \left( \mu_s-\rho_s-\pi_s\sigma^2_s\,- \int_{\bR_0} \frac{\pi_s\theta^2_s(z)} {1+\pi_s\theta_s(z)} \nu(dz)\right) ds \right.\notag\\
        &\left. +\int_t^{t+h} \sigma_sd^-W_s+\int_t^{t+h}\int_{\bR_0}\frac{\theta_s(z)}{1+\pi_s\theta_s(z)}\tilde N(d^-z,ds) + \int_t^{t+h} \frac{\kappa_s} {1+\pi_s\kappa_s}dH_s\Big\}\right]\ .
        \end{align}
    As \eqref{eq.local.opt.aux} holds true for every $\cE_t$-measurable bounded random variable, we conclude
    $$ 0 = \EE[F^{\pi}_T(M^{\pi}_{t+h} - M^{\pi}_t)|\cE_t] = \EE_{\QQ}[M^{\pi}_{t+h} - M^{\pi}_t|\cE_t]\ . $$
    Moreover, by Assumption \ref{assumption.unif.integral}(3) we conclude that $$\EE_{\QQ}[\abs{M_T^{\pi}}] = \EE[F_T^\pi\abs{\Psi_t(0,1,\pi)}] < +\infty\ ,$$
    and the result for every $t\in[0,T]$ follows by applying the Jensen inequality,
    \begin{equation*}
        \EE_{\QQ}[\abs{M_t^\pi}] = \EE_{\QQ}[\abs{\EE_{\QQ}[M_T^{\pi}\vert \cE_t]}]\leq \EE_{\QQ}[\EE_{\QQ}[\abs{M_T^{\pi}}\vert \cE_t]] = \EE_{\QQ}[\abs{M_T^{\pi}}]< +\infty\ .
    \end{equation*}
    \end{proof}
    In particular, if we consider the optimization problem with $\bE = \bG$, then the process $M^{\pi}$ is a martingale, because every term included is $\bG$-adapted.
    \begin{Remark}\label{P.decomposition.martingale}
    Let $F^{\pi}_t := \EE\left[\frac{d\QQ}{d\PP}|\cG_t\right]$ and $Z^{\pi}_t:=\EE\left[\frac{d\PP}{d\QQ}|\cG_t\right]$, 
    by the Girsanov Theorem we know that if $\pi$ is a local maximum, then the process $F^{\pi}M^{\pi}$ is a $(\PP,\bG)$-martingale.
    Moreover, according to Theorem 37 of \cite{Protter2005},
    the process
    $$M^{\pi}_t-\int_0^t\frac{1}{Z^{\pi}_{s-}}d\langle M^{\pi}_s,Z^{\pi}_s\rangle\ ,\quad 0\leq t\leq T\ ,$$
    is also a $(\PP,\bG)$-martingale, 
    where $\langle M^{\pi},Z^{\pi}\rangle$ is the unique $\bG$-predictable and increasing process such that $M^{\pi}Z^{\pi}-\langle M^{\pi},Z^{\pi}\rangle$ is a $(\PP,\bG)$-martingale.
    \end{Remark}
    \begin{Remark}
        Theorem \ref{thm.suff.condition} extends the results of Theorem 3.3 of \cite{diNunno2014} as it considers the interval $[[\tau\wedge T,T]]$.
        However in \cite{diNunno2014} also the converse is proved, that is, 
        under the concavity assumption on the function $g$,  if $M^{\pi}$ has the martingale property under $(\QQ,\bE)$, then $\pi$ is a local maximum for the problem~\eqref{optimization.problem}.
    \end{Remark}
   \begin{Proposition}\label{prop.integrals.semimartingales}
   If $\pi$ is a local maximum for the problem \eqref{optimization.problem} with the information flow $\bG$ and Assumption~\ref{assumption.unif.integral} holds, 
   then the following processes
   $$ \int_0^{t}\sigma_sd^-W_s \quad \text{and} \quad \int_0^{t}\int_{\bR_0}\frac{\theta_s(z)}{1+\pi_s\theta_s(z)}\tilde N(d^-s,dz)\ ,\ 0\leq t \leq T\ $$
   are $(\QQ,\bG)$-semimartingales.
   \end{Proposition}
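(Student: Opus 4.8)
The plan is to read the statement off Theorem~\ref{thm.suff.condition}. Since $\pi$ is a local maximum for \eqref{optimization.problem} with $\bE=\bG$ and Assumption~\ref{assumption.unif.integral} holds, that theorem says that the process $M^{\pi}$ of \eqref{M.martingale} is a $(\QQ,\bG)$-martingale, in particular a $(\QQ,\bG)$-semimartingale. Write $M^{\pi}=A+C+P+J$, with
$A_t:=\int_0^t\big(\mu_s-\rho_s-\pi_s\sigma_s^2-\int_{\bR_0}\tfrac{\pi_s\theta_s^2(z)}{1+\pi_s\theta_s(z)}\nu(dz)\big)ds$,
$C_t:=\int_0^t\sigma_s\,d^-W_s$,
$P_t:=\int_0^t\int_{\bR_0}\tfrac{\theta_s(z)}{1+\pi_s\theta_s(z)}\tilde N(d^-s,dz)$
and $J_t:=\int_0^t\tfrac{\kappa_s}{1+\pi_s\kappa_s}\,dH_s=\tfrac{\kappa_\tau}{1+\pi_\tau\kappa_\tau}\bOne_{\{\tau\le t\}}$.
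The first step is to note that $A$ is an absolutely continuous $\bG$-adapted process (its integrability being part of the standing hypotheses, cf.\ \eqref{hyp1} and \eqref{market.coef.integr}) and that $J$ is a $\bG$-adapted pure-jump process of finite variation; both are therefore $(\QQ,\bG)$-semimartingales, so
\[
  X_t:=C_t+P_t=M^{\pi}_t-A_t-J_t
\]
is a $(\QQ,\bG)$-semimartingale.

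Next I would separate $C$ and $P$ using their path structure. The Brownian forward integral $C$ is a continuous process, being a forward integral against the continuous process $W$ (each approximant $\int_0^{\cdot}\sigma_s\tfrac{W_{s+\epsilon}-W_s}{\epsilon}\,ds$ of Definition~\ref{def.integral.forward.Brownian} is continuous and the convergence is locally uniform). The Poisson forward integral $P$, instead, is càdlàg with jumps only at the jump times of $N$: it is the locally uniform limit of the processes $P^{(m)}_t:=\int_0^t\int_{\bR_0}\tfrac{\theta_s(z)}{1+\pi_s\theta_s(z)}\bOne_{\{z\in U_m\}}\tilde N(ds,dz)$, and each $P^{(m)}$ is of finite variation --- since $\nu(U_m)<\infty$ while $\tfrac{\theta}{1+\pi\theta}$ is, on $U_m$, bounded away from its pole by \eqref{hyp3} and square-$\nu$-integrable by \eqref{market.coef.integr} --- with jumps located at the jump times of $N$ carrying a mark in $U_m$. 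By \eqref{assumption.jumps} none of these coincides with $\tau$, so the jumps of $X$ are exactly those of $P$.

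The crux, and the step I expect to be the main obstacle, is to conclude from ``$X=C+P$ is a $(\QQ,\bG)$-semimartingale, $C$ continuous, $P$ jumping only at the jump times of $N$'' that $C$ and $P$ are \emph{separately} $(\QQ,\bG)$-semimartingales. The route I would take is to show that $P$ carries no continuous local martingale component, so that the unique continuous local martingale part of $X$ comes entirely from $C$ and $P$ differs from $X$ minus that part only by a continuous process of finite variation; then $P$, and hence $C=X-P$, is a $(\QQ,\bG)$-semimartingale. To justify that $P$ has no continuous martingale component I would use that $P$ is the locally uniform limit of the finite-variation processes $P^{(m)}$, together with the independence of $W$ and $\tilde N$ under $\PP$ (which keeps the continuous part of $X$ adapted to the Brownian flow), and/or the pathwise quadratic-variation identity $[C]_t=\int_0^t\sigma_s^2\,ds$ for forward integrals against $W$ which --- combined with $[C,P]=0$, as $C$ is continuous and $P$ purely discontinuous --- forces $[X]^c=\int_0^{\cdot}\sigma_s^2\,ds$, already exhausted by $C$. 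Since $\PP\sim\QQ$, the semimartingale property is insensitive to the change of measure, so these arguments may equivalently be carried out under $\PP$; once both summands are recognised as $(\QQ,\bG)$-semimartingales, the statement follows.
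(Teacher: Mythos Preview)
Your proposal is correct and follows essentially the same route as the paper: invoke Theorem~\ref{thm.suff.condition} to get that $M^{\pi}$ is a $(\QQ,\bG)$-martingale, peel off the finite-variation pieces so that the sum $C+P$ is a $(\QQ,\bG)$-semimartingale, and then separate $C$ from $P$ by the continuous/purely-discontinuous dichotomy. The paper is equally terse at the separation step, simply appealing to ``the semimartingale decomposition in the continuous and pure discontinuous part'' to split the special semimartingale $C+P$ into a continuous local martingale plus drift (which must be $C$ plus drift) and a purely discontinuous local martingale plus drift (which must be $P$ plus drift); your quadratic-variation heuristic is aimed at exactly this identification.

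The one substantive difference is how the default term is handled. You observe, correctly and economically, that $J_t=\frac{\kappa_\tau}{1+\pi_\tau\kappa_\tau}\bOne_{\{\tau\le t\}}$ is a single-jump $\bG$-adapted finite-variation process, hence a semimartingale, and subtract it directly. The paper instead invokes Lemma~\ref{H.decomposition} (the Doob--Meyer decomposition of $H$) together with Girsanov to write $J$ as a $(\QQ,\bG)$-local martingale plus an explicit predictable finite-variation term involving $dA^{\tau}_s/Z_{s-}$ and $d\langle J,F^{\pi}\rangle/F^{\pi}_{s-}$. This extra work is not needed for the bare statement of the proposition, but it is what produces the explicit compensators $\alpha^{(i)},\gamma^{(i)}$ in \eqref{aux.comp} and the relation \eqref{optimality.condition.first-one}, which are the workhorses of Corollary~\ref{Q.local.martingale}, Theorem~\ref{theo.semimarting.W.N}, and the optimality conditions that follow. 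So your shortcut buys simplicity here at the cost of having to redo that decomposition later.
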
 
    \begin{proof}
    In the definition of the process $M^\pi$ given by \eqref{M.martingale}, we can rewrite the default term by applying Lemma~\ref{H.decomposition} 
    as the sum of a $(\QQ,\bG)$-local martingale and a bounded variation predictable process,
    \begin{equation*}
        \int_0^t \frac{\kappa_s}{1+\pi_s\kappa_s}dH_s =  \int_0^t \frac{\kappa_s}{1+\pi_s\kappa_s}d\widetilde J_s +\int_0^t \frac{\kappa_s}{1+\pi_s\kappa_s}\frac{d\langle J_s,F_s^\pi\rangle}{F^\pi_{s-}} %\\
        %+\int_0^t \frac{\kappa_s}{1+\pi_s\kappa_s}\frac{d\langle J_s,F_s^\pi\rangle}{F^\pi_s}
        + \int_0^t\frac{\kappa_s}{1+\pi_s\kappa_s}\frac{\bOne_{\{\tau > s\}}}{Z_{s-}}dA^\tau_s\ ,
    \end{equation*}
    where $\widetilde{J}_s = J_s - \dfrac{d\langle J_s,F_s^\pi\rangle}{F^\pi_{s-}}.$
%   Using the process defined in Lemma \ref{H.decomposition},        we consider the following  $(\QQ,\bG)$-martingale.
%    \begin{align}\label{process.C.localmartingale}
%    C_t&:=M^{\pi}_t-\int_0^{t}\frac{\kappa_s}{1+\kappa_s\pi_s}dJ_s\notag\\
%    &= \int_0^{t}\left( \mu_s-\rho_s-\pi_s\sigma^2_s -\int_{\bR_0}\frac{\pi_s\theta^2_s(z)}{1+\pi_s\theta_s(z)}\nu(dz) \right)ds + \int_0^{t}\sigma_sd^-W_s\notag\\
%    &+\int_0^{t}\int_{\bR_0}\frac{\theta_s(z)}{1+\pi_s\theta_s(z)}\tilde N(d^-s,dz)+  \int_0^{t}\frac{\kappa_s}{1+\pi_s\kappa_s}\frac{dA^{\tau}_s}{\PP(\tau > s|\cF_s)}\ .
%    \end{align}
We conclude that the sum process
\begin{align*}
    \int_0^{t}\sigma_sd^-W_s &+ \int_0^{t}\int_{\bR_0} \frac{\theta_s(z)}{1+\pi_s\theta_s(z)}\tilde N(d^-s,dz) = M_t^\pi - \int_0^t \frac{\kappa_s}{1+\pi_s\kappa_s}d\widetilde J_s \\
&-\int_0^{t} \left( \mu_s-\rho_s-\pi_s\sigma^2_s - \int_{\bR_0}\frac{\pi_s\theta^2_s(z)}{1+\pi_s\theta_s(z)}\nu(dz) \right) ds\\
&-\int_0^t \frac{\kappa_s}{1+\pi_s\kappa_s}\frac{d\langle J_s,F_s^\pi\rangle}{F^\pi_{s-}} - \int_0^t\frac{\kappa_s}{1+\pi_s\kappa_s}\frac{\bOne_{\{\tau > s\}}}{Z_{s-}}dA^\tau_s
\end{align*}
is a special $(\QQ,\bG)$-semimartingale because is the sum of a  $(\QQ,\bG)$-local martingale and a predictable bounded variation process. 
%Using Theorem 26.14 of \cite{kallenberg2002} on 
By using the the semimartingale decomposition in the continuous and pure discontinuous part, we conclude that there exist $\alpha^{(i)}$, $\gamma^{(i)}$ with $i\in\{1,2,3\}$ such that
\begin{subequations}\label{aux.comp}
    \begin{align}
     \widehat M_t=&\int_0^t \sigma_sd^-W_s + \int_0^t  \sigma_s\alpha^{(1)}_s ds + \int_0^t  \sigma_s\alpha_s^{(2)} \frac{d\langle J_s,F_s^\pi\rangle}{F_{s-}^\pi} + \int_0^{t\wedge\tau} \sigma_s \alpha_s^{(3)} \frac{dA^\tau_s}{Z_{s-}} \label{aux.comp.W}\\
     \widehat J_t=&\int_0^{t}\int_{\bR_0} \frac{\theta_s(z)}{1+\pi_s\theta_s(z)}\tilde N(d^-s,dz) + \int_0^t  \gamma_s^{(1)}ds + \int_0^t  \gamma_s^{(2)} \frac{d\langle J_s,F_s^\pi\rangle}{F_{s-}^\pi} + \int_0^{t\wedge\tau} \gamma_s^{(3)} \frac{dA^\tau_s}{Z_{s-}} \label{aux.comp.N}
\end{align}
\end{subequations}
are $(\QQ,\bG)$-local martingales and $\alpha^{(i)},\gamma^{(i)}$ are the unique $\bG$-adapted processes satisfying 
\begin{align}
    &\int_0^t  \left(\sigma_s\alpha_s^{(1)} +  \gamma_s^{(1)} \right)ds + \int_0^t \left( \sigma_s\alpha_s^{(2)} +  \gamma_s^{(2)} \right)\frac{d\langle J_s,F_s^\pi\rangle}{F_{s-}^\pi} + \int_0^{t\wedge \tau} \left(\sigma_s \alpha_s^{(3)} 
    +  \gamma_s^{(3)} \right) \frac{dA^\tau_s}{Z_{s-}}\notag \\
    =& \int_0^{t} \left( \mu_s-\rho_s-\pi_s\sigma^2_s - \int_{\bR_0}\frac{\pi_s\theta^2_s(z)}{1+\pi_s\theta_s(z)}\nu(dz) \right) ds +\int_0^t \frac{\kappa_s}{1+\pi_s\kappa_s}\left(\frac{d\langle J_s,F_s^\pi\rangle}{F^\pi_{s-}} + \frac{\bOne_{\{\tau > s\}}}{Z_{s-}}dA^\tau_s\right) \ .\label{optimality.condition.first-one}
    %&+\int_0^t \frac{\kappa_s}{1+\pi_s\kappa_s}\frac{d\langle J_s,F_s^\pi\rangle}{F^\pi_{s-}} + \int_0^t\frac{\kappa_s}{1+\pi_s\kappa_s}\frac{\bOne_{\{\tau > s\}}}{Z_{s-}}dA^\tau_s\ .\label{optimality.condition.first-one}
\end{align}
\end{proof}
Proposition~\ref{prop.integrals.semimartingales} can be extended to $W$ and $\tilde N$, 
as shown in the following corollary. 
\begin{Corollary}\label{Q.local.martingale}
   If $\pi$ is a local maximum for the problem \eqref{optimization.problem} with the information flow $\bG$ and Assumption~\ref{assumption.unif.integral} holds, 
   then the following processes 
    $$ W_t\quad \text{and}\quad \int_0^{t}\int_{\bR_0}z\ \tilde N(ds,dz)\ ,\ 0\leq t \leq T $$
    are $(\QQ,\bG)$-semimartingales.
\end{Corollary}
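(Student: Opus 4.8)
The plan is to derive the two claimed semimartingale properties from Proposition~\ref{prop.integrals.semimartingales}, which already gives that the forward integrals $Y_t:=\int_0^t\sigma_s\,d^-W_s$ and $J_t:=\int_0^t\int_{\bR_0}\frac{\theta_s(z)}{1+\pi_s\theta_s(z)}\,\tilde N(d^-s,dz)$ are $(\QQ,\bG)$-semimartingales; one then ``inverts'' the coefficients $\sigma_s$ and $z\mapsto\theta_s(z)/(1+\pi_s\theta_s(z))$ that sit inside those integrals and uses that the semimartingale class is stable under stochastic integration of predictable locally bounded integrands.

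For $W$ I would argue as follows. The coefficient $\sigma$ is càglàd and $\bG$-adapted, hence $\bG$-predictable, and it is non-vanishing (tacitly, as witnessed by the division by $\sigma^{2}$ in \eqref{gen.opt.condition.log.after} and \eqref{max.log.before}); after stopping at $T_{n}:=\inf\{t:\abs{\sigma_{t}}\le 1/n\}$ the process $1/\sigma$ is $\bG$-predictable and bounded. Since $Y$ is a $(\QQ,\bG)$-semimartingale, the forward integral $\int_{0}^{t}\sigma_{s}^{-1}\,d^{-}Y_{s}$ exists, coincides with the It\^o integral $\int_{0}^{t}\sigma_{s}^{-1}\,dY_{s}$ (reduction of forward to It\^o integration against semimartingales, the analogue of Lemma~\ref{forward.brownian.semimart}), and, by associativity of the forward integral, equals $\int_{0}^{t}\sigma_{s}^{-1}\sigma_{s}\,d^{-}W_{s}=W_{t}$. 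Being an It\^o integral of a $\bG$-predictable bounded process against a $(\QQ,\bG)$-semimartingale, $W^{T_{n}}$ is a $(\QQ,\bG)$-semimartingale, and letting $n\to\infty$ gives the claim for $W$.

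For $\int_{0}^{\cdot}\int_{\bR_{0}}z\,\tilde N(ds,dz)$ I would work with jump measures. By \eqref{hyp3} the map $u\mapsto u/(1+\pi_{s}u)$ is strictly monotone on the admissible range, with inverse $v\mapsto v/(1-\pi_{s}v)$, so the predictable field $(s,z)\mapsto\theta_{s}(z)/(1+\pi_{s}\theta_{s}(z))$ is invertible in $z$ wherever $z\mapsto\theta_{s}(z)$ is injective (a non-degeneracy tacit in the model, in the spirit of the standing requirement $\kappa\neq0$). Hence the integer-valued jump measure of $J$ is the image of $N$ under that field; since $J$ is a $(\QQ,\bG)$-semimartingale this jump measure has a $(\QQ,\bG)$-compensator, and pulling it back through the inverse field produces a $(\QQ,\bG)$-compensator $\widehat\nu(ds,dz)$ of $N$. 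Then
\begin{equation*}
\int_{0}^{t}\int_{\bR_{0}}z\,\tilde N(ds,dz)=\int_{0}^{t}\int_{\bR_{0}}z\,\bigl(N(ds,dz)-\widehat\nu(ds,dz)\bigr)+\int_{0}^{t}\int_{\bR_{0}}z\,\bigl(\widehat\nu(ds,dz)-\nu(dz)\,ds\bigr),
\end{equation*}
where the first term is a $(\QQ,\bG)$-local martingale (a compensated stochastic integral against $N$) and the second is a predictable finite-variation process, so the sum is a $(\QQ,\bG)$-semimartingale.

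I expect the main obstacle to be exactly the passage from the weighted forward integrals of Proposition~\ref{prop.integrals.semimartingales} to the unweighted processes: one has to check that $1/\sigma$, respectively the $z$-moments against $\widehat\nu$, are integrable in the pertinent semimartingale sense, which is where the localization by $T_{n}$ and the tacit non-degeneracy of $\sigma$ and of the field $z\mapsto\theta_{s}(z)$ come in. Once these integrability points are settled, no further work is needed, since all the remaining operations preserve the semimartingale property.
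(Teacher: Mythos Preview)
Your argument is essentially the paper's own: for $W$ the paper also integrates $1/\sigma_s$ against the $(\QQ,\bG)$-local martingale $\widehat M$ of \eqref{aux.comp.W} to isolate $W$ up to finite-variation terms, exactly as you do (your localization at $T_n$ makes explicit the tacit local boundedness of $1/\sigma$). For the Poisson part the paper merely says ``the same argument works'', and your jump-measure inversion is precisely one way to make that sentence rigorous; you are right that it needs the field $z\mapsto\theta_s(z)$ to be injective, a non-degeneracy the paper uses implicitly but never states.
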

\begin{proof}
The reasoning is analogous to Theorem 15 of \cite{DiNunnoOksendalProske2006}.
We consider the $(\QQ,\bG)$-local martingale $\widehat M$ defined in \eqref{aux.comp.W}, so the result of a It\^o integral is still a local martingale.
In particular,
$$\int_0^t\frac{1}{\sigma_s} d\widehat M_s =W_t  + \int_0^t \alpha_s^{(1)}ds + \int_0^t \alpha_s^{(2)} \frac{d\langle J_s,F_s^\pi\rangle}{F_{s-}^\pi} + \int_0^{t\wedge\tau} \alpha_s^{(3)} \frac{dA^\tau_s}{Z_{s-}} \ .$$
The same argument works with the compensated Poisson random measure.
\end{proof}
\begin{Remark}\label{rk.comp.cont}
    Using Proposition 4.4 of \cite{diTella2021}, 
    we know that the compensated Poisson measure is quasi-left continuous on $\{\tau>s\}$, so its compensator is continuous before default. 
    From this fact we conclude that 
    \begin{align*}
        0 =&  \bOne_{\{\tau > s\}}  \alpha_s^{(2)} \frac{\Delta\langle J_s,F_s^\pi\rangle}{F_{s-}^\pi} 
        + \bOne_{\{\tau > s\}} \alpha_s^{(3)} \frac{\Delta A^\tau_s}{Z_{s-}} \\
        0 =& \bOne_{\{\tau > s\}}   \gamma_s^{(2)} \frac{\Delta\langle J_s,F_s^\pi\rangle}{F_{s-}^\pi} 
        + \bOne_{\{\tau > s\}} \gamma_s^{(3)} \frac{\Delta A^\tau_s}{Z_{s-}} \ .
    \end{align*}
\end{Remark}
Using Theorem 1.8 of \cite{JacodShiryaev03} and the fact that $\cF_t\subset\cG_t$, we know that there exists a unique predictable compensator $\nu^{\bG}(dt,dz)$ for the Poisson random measure $N(dt,dz)$. 
However, Corollary~\ref{Q.local.martingale} is stronger because we claim that this compensator is of finite variation. In the following, to short the notation, we will use $\nu^{\bG}(dt,dz)$ in order to refer it.
In the next corollary we rewrite the above processes under the measure $\PP$.
\begin{Corollary}\label{W.N.decomposition}
    If $\pi$ is a local maximum for the problem \eqref{optimization.problem} with the information flow $\bG$ and Assumption~\ref{assumption.unif.integral} holds, 
    then the following processes
    \begin{align*}
    \widehat W_t:= W_t-A^W_t \quad\text{and}\quad
    \int_0^t\int_{\bR_0}z\tilde N(ds,dz) - \int_0^t\frac{1}{Z^\pi_s}\langle \int_0^s\int_{\bR_0}z\tilde N(du,dz), Z^{\pi}_{s-}\rangle\ ,\quad 0\leq t \leq T\ ,
    \end{align*}
    are a $(\PP,\bG)$-local martingale and semimartingale, respectively,
    where $A^W$ is a $\bG$-continuous predictable process. 
\end{Corollary}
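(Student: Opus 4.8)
Here is the plan.

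The plan is to transport the $(\QQ,\bG)$-semimartingale property established in Corollary~\ref{Q.local.martingale} back to the original measure $\PP$ by means of a Girsanov--Meyer change of measure. First I would record that $\PP$ and $\QQ$ are equivalent on $\cG_T$: by Assumption~\ref{assumption.unif.integral}(2) the density $F^{\pi}_T$ in \eqref{Q.measure} satisfies $0<F^{\pi}_T<+\infty$ $\PP$-a.s. and $\EE[F^{\pi}_T]=1$, so the density process $Z^{\pi}_t=\EE[d\PP/d\QQ\,|\,\cG_t]=1/F^{\pi}_t$ is a strictly positive $(\QQ,\bG)$-martingale with $Z^{\pi}_{s-}>0$. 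Since the class of semimartingales is invariant under an equivalent change of probability measure (e.g. \cite{Protter2005}, Chapter~III), Corollary~\ref{Q.local.martingale} yields at once that $W$ and $Y_t:=\int_0^t\int_{\bR_0}z\,\tilde N(ds,dz)$ are $(\PP,\bG)$-semimartingales.

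For the Brownian part, $W$ is a \emph{continuous} $(\PP,\bG)$-semimartingale, hence special, so by uniqueness of the canonical decomposition of a continuous special semimartingale it can be written $W=\widehat W+A^W$ with $\widehat W$ a continuous $(\PP,\bG)$-local martingale null at $0$ and $A^W$ a continuous $\bG$-predictable process of finite variation null at $0$; this is precisely the first assertion, and continuity of $A^W$ is guaranteed by the abstract statement rather than by any computation. An explicit expression for $A^W$ is then obtained by feeding the $(\QQ,\bG)$-decomposition of $W$ read off from the proof of Corollary~\ref{Q.local.martingale} into the Girsanov--Meyer theorem (Theorem~37 of \cite{Protter2005}, as in Remark~\ref{P.decomposition.martingale}) with density process $Z^{\pi}$: one finds $A^W_t=B^{\QQ}_t+\int_0^t \frac{1}{Z^{\pi}_{s-}}\,d\langle W,Z^{\pi}\rangle_s$, where $B^{\QQ}$ is the $\QQ$-drift of $W$; the correction is continuous because $\Delta W\equiv 0$ forces $[W,Z^{\pi}]$ to have no jumps, so it coincides with its compensator $\langle W,Z^{\pi}\rangle$.

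For the jump part I would argue along the same lines, while keeping track of the fact that the $\QQ$-drift of $Y$ is no longer continuous. The process $Y$ is a $(\PP,\bF)$-local martingale and $Z^{\pi}$ is a $(\QQ,\bG)$-martingale; a standard localization renders both locally square integrable, so the predictable covariation $\langle Y,Z^{\pi}\rangle$ is a well-defined $\bG$-predictable process of finite variation. Writing $Y=L+(Y-L)$ with $L$ the $(\QQ,\bG)$-local martingale part of $Y$ and $Y-L$ its $\bG$-predictable finite-variation $\QQ$-drift, both furnished by the proof of Corollary~\ref{Q.local.martingale} applied to the compensated Poisson measure, Girsanov--Meyer shows that $L-\int_0^{\cdot}\frac{1}{Z^{\pi}_{s-}}\,d\langle L,Z^{\pi}\rangle$ is a $(\PP,\bG)$-local martingale; since $\langle Y,Z^{\pi}\rangle-\langle L,Z^{\pi}\rangle=\langle Y-L,Z^{\pi}\rangle$ is again of finite variation, the process $Y_t-\int_0^t\frac{1}{Z^{\pi}_{s-}}\,d\langle Y,Z^{\pi}\rangle_s$ differs from that local martingale by a $\bG$-predictable finite-variation process and is therefore a $(\PP,\bG)$-semimartingale. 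The default-induced jumps carried by $\langle J,F^{\pi}\rangle$ and $A^{\tau}$ in the $\QQ$-drift of $Y$ are exactly why here one only obtains a semimartingale and not a local martingale as for $\widehat W$, and $1/Z^{\pi}_{s-}$ is the predictable version of the $1/Z^{\pi}_s$ appearing in the statement.

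The step that requires genuine care is the Girsanov--Meyer bookkeeping in the jump case: one must check that the optional bracket $[Y,Z^{\pi}]$ has locally integrable variation so that $\langle Y,Z^{\pi}\rangle$ exists, and that the relevant integrands are integrable against the corresponding finite-variation and local-martingale parts. This is where $\int_{\bR_0}(1\wedge z^2)\,\nu(dz)<+\infty$, the integrability encoded in Assumption~\ref{assumption.unif.integral}, and the explicit form $Z^{\pi}=1/F^{\pi}$ with $F^{\pi},F^{\pi}_{-}>0$ enter, after the usual localization by stopping times; the Brownian part needs nothing beyond the continuity observation already made.
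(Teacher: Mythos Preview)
Your proposal is correct and follows exactly the route the paper takes: the paper's own proof is a one-line appeal to Corollary~\ref{Q.local.martingale} together with the Girsanov theorem for local martingales (Theorem~35 and its corollaries in \cite{Protter2005}), and you have simply fleshed out the details of that change-of-measure argument, including the continuity of $A^W$ from the continuity of $W$ and the bracket bookkeeping in the jump part.
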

\begin{proof}
It follows by Corollary~\ref{Q.local.martingale} and the Girsanov theorem for local-martingales, see Theorem~35 on page~132 and its corollaries in \cite{Protter2005} for details.
\end{proof}
Once we have proved that, 
under Assumption~\ref{assumption.unif.integral} and the existence of a local maximum of the optimization problem,
the enlargement of filtration $\bG\supset\bF$ preserves the semimartingale property for $W$ and $\tilde N$ with respect to the measure $\PP$, 
then we can apply Lemmas~\ref{forward.brownian.semimart} and~\ref{forward.poisson.semimart} in order to conclude that the involved forward integrals reduce to the classical It\^o integrals for semimartingale processes.
We summarize the result in the next theorem in which the risky asset $S$ is driven by a $(\PP,\bG)$-local martingale and a \hbox{$(\PP,\bG)$-semimartingale.}
\begin{Theorem}\label{theo.semimarting.W.N}
    Suppose $\pi$ is a local maximum for the utility problem \eqref{optimization.problem} with the information flow $\bG$, then, the asset's dynamics satisfy the following SDEs under $(\PP,\bG)$.
\begin{subequations}
\begin{align}
    \frac{dD_t}{D_t} =&\rho_tdt\label{def.assets1.semimart}\\
    \frac{dS_t}{S_{t-}} =&\mu_tdt +\sigma_tdA^W_t + \int_{\bR_0}\theta_t(z) \left(\nu^{\bG}-\nu\right)(dt,dz)+\sigma_td\widehat W_t + \int_{\bR_0}\theta_t(z)\widehat N(dt,dz) +\kappa_tdH_t \label{def.assets2.semimart}
\end{align}
\end{subequations}
where the integrals are well-defined in the classical It\^o sense because $\widehat W$ and $\widehat N:=N-\nu^{\bG}$ are 
%$(\PP,\bG)$-local martingale and 
$(\PP,\bG)$-semimartigales. %, respectively.
Moreover, we rewrite the dynamics of $X^{\pi}$ as follows
\begin{align}\label{G.dynamic.X}
    \frac{dX^{\pi}_t}{X^{\pi}_{t-}} =& (1-\pi_t)\rho_tdt + \pi_t\left( \mu_tdt  + \sigma_tdA^W_t + \int_{\bR_0}\theta_t(z) \left(\nu^{\bG}-\nu\right)(dt,dz)  \right)\notag\\
    &+\pi_t\left( \sigma_td\widehat W_t +  \int_{\bR_0}\theta_t(z)\widehat N(dt,dz) +\kappa_tdH_t\right)\ .
%    =& \left[\rho_t + \pi_t(\mu(t)-\rho_t)\right]dt +  \pi_t\sigma_td^-W_t + \int_{\bR_0} \pi_t\theta_t(z)\tilde N(d^-t,dz)\\
%    &+  \pi_t\kappa_tdH_t \ .
\end{align}
\end{Theorem}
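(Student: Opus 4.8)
The plan is to deduce the statement directly from the $(\PP,\bG)$-semimartingale decompositions established in Section~\ref{sec:suff.aproach}, together with the fact that a forward integral against a semimartingale integrator, with a c\`agl\`ad adapted integrand, is just the ordinary It\^o integral; no genuinely new probabilistic argument is required, the work is in lining up hypotheses and rearranging terms.

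First I would recall that, since $\pi$ is a local maximum and Assumption~\ref{assumption.unif.integral} is in force, Corollary~\ref{Q.local.martingale} makes $W$ and $\int_0^\cdot\int_{\bR_0}z\,\tilde N(ds,dz)$ into $(\QQ,\bG)$-semimartingales, and Corollary~\ref{W.N.decomposition} re-expresses this under $\PP$: one has $W_t=\widehat W_t+A^W_t$ with $\widehat W$ a $(\PP,\bG)$-local martingale and $A^W$ a $\bG$-continuous predictable process of finite variation, while $N$ admits a $\bG$-predictable compensator $\nu^{\bG}(dt,dz)$ that, by Corollary~\ref{Q.local.martingale}, is of finite variation; hence $\widehat N:=N-\nu^{\bG}$ is a $(\PP,\bG)$-local martingale measure and $\tilde N(dt,dz)=\widehat N(dt,dz)+(\nu^{\bG}-\nu)(dt,dz)$.

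Next I would invoke Lemmas~\ref{forward.brownian.semimart} and~\ref{forward.poisson.semimart}: the coefficients $\sigma$ and $\theta$ are c\`agl\`ad and $\bG$-adapted and satisfy~\eqref{market.coef.integr}, while by the previous step $W$ and $N$ are $(\PP,\bG)$-semimartingales; therefore the forward integrals in~\eqref{def.assets2} coincide with the It\^o integrals $\int_0^t\sigma_s\,dW_s$ and $\int_0^t\int_{\bR_0}\theta_s(z)\,\tilde N(ds,dz)$ taken in the $\bG$-semimartingale sense, and the same reduction applies to the forward integrals occurring in~\eqref{def.X} (whose integrands $\pi\sigma$, $\pi\theta$ are again c\`agl\`ad and $\bG$-adapted). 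Substituting $dW_s=d\widehat W_s+dA^W_s$ and $\tilde N(ds,dz)=\widehat N(ds,dz)+(\nu^{\bG}-\nu)(ds,dz)$ into~\eqref{def.assets2}, while leaving~\eqref{def.assets1} and the finite-variation term $\kappa_t\,dH_t$ unchanged, yields~\eqref{def.assets1.semimart} and~\eqref{def.assets2.semimart}; feeding the resulting It\^o-form dynamics of $S$ into~\eqref{def.X} and expanding then produces~\eqref{G.dynamic.X}.

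The hard part will be to make sure the hypotheses of Lemmas~\ref{forward.brownian.semimart} and~\ref{forward.poisson.semimart} are actually met — namely that the $(\PP,\bG)$-semimartingale property coming from Corollary~\ref{W.N.decomposition} is of exactly the type those lemmas require (in particular that the predictable parts $A^W$ and $\nu^{\bG}$ are of finite variation, not merely continuous or predictable), and that the c\`agl\`ad $\bG$-adaptedness of $\sigma$ and $\theta$ places the integrands in the class for which the forward integral and the It\^o integral agree; once this identification is in place, the remainder is a routine rearrangement of the stochastic differential.
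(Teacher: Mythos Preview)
Your proposal is correct and follows essentially the same approach as the paper: the paper's argument (given in the paragraph immediately preceding the theorem) is precisely to invoke Corollary~\ref{W.N.decomposition} for the $(\PP,\bG)$-semimartingale decompositions of $W$ and $\tilde N$, then apply Lemmas~\ref{forward.brownian.semimart} and~\ref{forward.poisson.semimart} to identify the forward integrals with classical It\^o integrals, and finally substitute $dW_t=d\widehat W_t+dA^W_t$ and $\tilde N(dt,dz)=\widehat N(dt,dz)+(\nu^{\bG}-\nu)(dt,dz)$ into~\eqref{def.assets2} and~\eqref{def.X}. Your added care about verifying the finite-variation hypotheses on $A^W$ and $\nu^{\bG}$ is a welcome clarification that the paper leaves implicit.
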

\begin{Remark}
    Note that for $U(x) = \ln(x)$ the measure $\QQ$ coincides with~$\PP$, 
    then a similar version of Theorem~\ref{theo.semimarting.W.N} can be achieved only by assuming the existence of a local maximum for the logarithmic utility problem as it was motivated in Subsections \ref{subsec:after.default.log} and \ref{subsec:before.default.log}.
\end{Remark}
In particular, the following statement solves the optimal portfolio problem by using the compensators instead of the Malliavin derivatives. 
We decompose the compensators as the absolutely continuous, the singular and the purely discontinuous parts as follows,
\begin{align}
    dA^\tau_s &= \lambda_sds +  dA^{\tau,\perp}_s +\Delta A^\tau_s\label{A.tau.decomp}\\
    d\langle J_s,F^{\pi}_s\rangle &= d\langle J_s,F^{\pi}_s\rangle^{ac}+d\langle J_s,F^{\pi}_s\rangle^\perp +\Delta\langle J_s,F^{\pi}_s\rangle \label{bracket.decomp}\ .
\end{align}
%As the jumps of $N$ are totally inaccessible, we conclude that $\nu^\bG$ is continuous. 
%while the continuity of $A^\tau$ and $A$ has been argued in Theorem~\ref{theo.opt.log.bef} and Corollary~\ref{W.N.decomposition}.
%In the last decompositions
%we indicate the possibility of jumps.
\begin{Theorem}
    Suppose $\pi$ is a local maximum for the optimization problem~\eqref{optimization.problem}, then it satisfies the following equations,
    \begin{align}
   0=& \left( \sigma_s\alpha_s^{(2)} +  \gamma_s^{(2)}  - \frac{\kappa_s}{1+\pi_s\kappa_s}\right)\frac{d\langle J_s,F_s^\pi\rangle^{ac}}{F^\pi_{s-}} + \left(\sigma_s \alpha_s^{(3)} +  \gamma_s^{(3)} - \frac{\kappa_s}{1+\pi_s\kappa_s}\right) \frac{\bOne_{\{\tau > s\}}}{Z_{s-}}\lambda_s\notag \\
    &-  \left(\mu_s-\rho_s-\pi_s\sigma^2_s - \int_{\bR_0}\frac{\pi_s\theta^2_s(z)}{1+\pi_s\theta_s(z)}\nu(dz)\right) +\sigma_s\alpha_s^{(1)} + \gamma_s^{(1)}  \label{gen.opt.pi.ac}\\
    0=& \left( \sigma_s\alpha_s^{(2)} +  \gamma_s^{(2)} - \frac{\kappa_s}{1+\pi_s\kappa_s} \right)\frac{d\langle J_s,F_s^\pi\rangle^\perp}{F_{s-}^\pi}  
    + \left(\sigma_s\alpha_s^{(3)} +  \gamma_s^{(3)} - \frac{\kappa_s}{1+\pi_s\kappa_s}\right) \frac{\bOne_{\{\tau > s\}}}{Z_{s-}} dA^{\tau,\perp}_s .\label{gen.opt.pi.sing}\\
    0=&  \frac{\Delta\langle J_s,F_s^\pi\rangle}{F_{s-}^\pi} + \frac{\bOne_{\{\tau > s\}}}{Z_{s-}}\Delta A^{\tau}_s\ .\label{gen.opt.pi.jump}
%    0=&  \textcolor{red}{\left( \sigma_s\alpha_s^{(2)} +  \gamma_s^{(2)} - \frac{\kappa_s}{1+\pi_s\kappa_s} \right)\frac{\Delta\langle J_s,F_s^\pi\rangle}{F_{s-}^\pi}}\notag \\
%    &+ \textcolor{red}{\left(\sigma_s \alpha_s^{(3)} +  \gamma_s^{(3)} - \frac{\kappa_s}{1+\pi_s\kappa_s}\right) \frac{\bOne_{\{\tau > s\}}}{Z_{s-}} \Delta A^{\tau}_s }.\label{gen.opt.pi.jump}
%    0 =& \mu_s-\rho_s-\pi_s\sigma^2_s\,- \int_{\bR_0} \frac{\pi_s\theta^2_s(z)} {1+\pi_s\theta_s(z)} \nu(dz) - \frac{\gamma_s}{Z^\pi_{s}}+ \sigma_s\widehat\alpha_s \notag\\ 
%    & + \int_{\bR_0}\frac{\theta_s(z)}{1+\pi_s\theta_s(z)}\nu_s^\bG(dz) + \frac{\bOne_{\{\tau > s\}}}{Z_{s-}}\frac{\kappa_s} {1+\pi_s\kappa_s}\lambda_s   \ .\label{gen.opt.pi.ac}\\
%    0 =& -\frac{1}{Z^{\pi}_s}d\langle M^{\pi}_s,Z^{\pi}_s\rangle^\perp + \sigma_sdA^\perp_s  +\int_{\bR_0}\frac{\theta_s(z)}{1+\pi_s\theta_s(z)}\nu^{\bG,\perp}(dt,dz) \notag\\ 
%    &  +\frac{\bOne_{\{\tau > s\}}}{Z_{s-}}\frac{\kappa_s} {1+\pi_s\kappa_s}dA^{\tau,\perp}_s \ .\label{gen.opt.pi.sing}\\
%    0 =& \Delta\langle M^{\pi}_s,Z^{\pi}_s\rangle \ . \label{gen.opt.pi.jump}
\end{align}
    Moreover, if \eqref{gen.opt.pi.ac}, \eqref{gen.opt.pi.sing} and \eqref{gen.opt.pi.jump}
    hold true,
    where $\alpha^{(i)},\gamma^{(i)}$ satisfy \eqref{aux.comp},
    and  
    $$xU''(x) + U'(x) \leq 0\ ,\quad\forall x >0 $$ 
    then $\pi$ is a local maximum for the optimization problem \eqref{optimization.problem}.
\end{Theorem}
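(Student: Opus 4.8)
The plan is to read off the three identities \eqref{gen.opt.pi.ac}--\eqref{gen.opt.pi.jump} by taking the Lebesgue decomposition of the finite-variation relation \eqref{optimality.condition.first-one} already obtained in the proof of Proposition~\ref{prop.integrals.semimartingales}, and then to reverse this argument — supplemented by a pathwise concavity estimate — for the converse.

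For the forward implication I would invoke Theorem~\ref{thm.suff.condition} and Proposition~\ref{prop.integrals.semimartingales}: since $\pi$ is a local maximum, $M^\pi$ is a $(\QQ,\bG)$-martingale, the two forward integrals admit the semimartingale decompositions \eqref{aux.comp}, and the finite-variation parts satisfy \eqref{optimality.condition.first-one}. Rewriting \eqref{optimality.condition.first-one} with the default term brought to the other side, it states that a certain finite-variation process — a combination of integrals against $ds$, against $d\langle J,F^\pi\rangle/F^\pi_{s-}$ and against $\bOne_{\{s\le\tau\}}\,dA^\tau_s/Z_{s-}$, with coefficients $\sigma_s\alpha^{(i)}_s+\gamma^{(i)}_s$ corrected by $\kappa_s/(1+\pi_s\kappa_s)$ and the drift term — vanishes identically. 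Because $\langle J,F^\pi\rangle$ and $A^\tau$ have finite variation, this process has a unique decomposition into an absolutely continuous part, a singular continuous part and a purely discontinuous part, each of which must be zero. Substituting \eqref{A.tau.decomp} and \eqref{bracket.decomp}, the $ds$-component is \eqref{gen.opt.pi.ac} and the Lebesgue-singular continuous component is \eqref{gen.opt.pi.sing}. For the jumps, since $A^\tau$ is stopped at $\tau$ and $J$ is constant after $\tau$, no jump occurs on $\{\tau\le s\}$, while on $\{\tau>s\}$ Remark~\ref{rk.comp.cont} cancels the contributions carrying $\alpha^{(2)},\alpha^{(3)}$ and $\gamma^{(2)},\gamma^{(3)}$, so what is left is $0=-\frac{\kappa_s}{1+\pi_s\kappa_s}\bigl(\frac{\Delta\langle J_s,F^\pi_s\rangle}{F^\pi_{s-}}+\frac{\bOne_{\{\tau>s\}}}{Z_{s-}}\Delta A^\tau_s\bigr)$; dividing by $\kappa_s/(1+\pi_s\kappa_s)\neq0$ — admissibility \eqref{hyp3} and $\kappa_s\neq0$ — produces \eqref{gen.opt.pi.jump}.

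For the converse, assuming \eqref{gen.opt.pi.ac}--\eqref{gen.opt.pi.jump} with $\alpha^{(i)},\gamma^{(i)}$ as in \eqref{aux.comp}, I would integrate the first against $ds$, the second against $d\langle J,F^\pi\rangle^\perp$, the third against $dA^{\tau,\perp}$, sum the jumps, and add the three resulting identities; this rebuilds exactly \eqref{optimality.condition.first-one}, so the finite-variation part of $M^\pi$ under $\QQ$ vanishes, $M^\pi$ is a $(\QQ,\bG)$-local martingale, and a true martingale by Assumption~\ref{assumption.unif.integral}(3). Choosing $\beta_s=\xi_t\bOne_{\{t<s\le t+h\}}$ with $\xi_t$ bounded and $\cG_t$-measurable gives $\Psi_T(0,\beta,\pi)=\xi_t(M^\pi_{t+h}-M^\pi_t)$, hence $g'(0)=\EE[U'(X^\pi_T)X^\pi_T\Psi_T(0,\beta,\pi)]=0$, and by linearity and approximation this holds for every bounded $\beta\in\cA(\bG)$, where $g(y)=\EE[U(X^{\pi+y\beta}_T)]$ as in Lemma~\ref{lemma.derivatives}. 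It then remains to show $g$ is concave. Writing $\phi(y):=\ln X^{\pi+y\beta}_T$ and $V(\ell):=U(e^\ell)$, so that $g(y)=\EE[V(\phi(y))]$, the hypothesis $xU''(x)+U'(x)\le0$ is precisely $V''(\ell)=e^\ell\bigl(e^\ell U''(e^\ell)+U'(e^\ell)\bigr)\le0$, while $V'(\ell)=e^\ell U'(e^\ell)>0$, so $V$ is concave and increasing. Differentiating the explicit formula \eqref{solution.X} twice in $y$ — legitimate pathwise since all denominators are bounded away from $0$ by \eqref{hyp3} — and combining $\nu(dz)\,ds+\tilde N(d^-s,dz)=N(d^-s,dz)$, one obtains
\[
\phi''(y)=-\int_0^T\beta^2_s\sigma^2_s\,ds
-\int_0^T\!\int_{\bR_0}\Bigl(\tfrac{\beta_s\theta_s(z)}{1+(\pi_s+y\beta_s)\theta_s(z)}\Bigr)^{2} N(d^-s,dz)
-\int_0^T\Bigl(\tfrac{\beta_s\kappa_s}{1+(\pi_s+y\beta_s)\kappa_s}\Bigr)^{2} dH_s\le 0,
\]
since $N\ge0$; thus $V\circ\phi$ is concave for every $\omega$, $g$ is concave, and, together with $g'(0)=0$, this yields $g(y)\le g(0)$ for $|y|<\delta$, i.e.\ $\pi$ is a local maximum.

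I expect the step presenting the main obstacle to be the purely discontinuous part of the forward direction: one must check carefully that the jump equation really collapses to \eqref{gen.opt.pi.jump}, i.e.\ that Remark~\ref{rk.comp.cont} disposes of all the $\alpha^{(2)},\alpha^{(3)},\gamma^{(2)},\gamma^{(3)}$ jump contributions on $\{\tau>s\}$ and that neither the stopped process $A^\tau$ nor $\langle J,F^\pi\rangle$ introduces jumps on $\{\tau\le s\}$. By contrast, the concavity estimate for $\phi$ is routine once the $\nu\,ds$-plus-$\tilde N$ term is recognized as an integral against the nonnegative measure $N$, and the rest is an application of the results already assembled in the section.
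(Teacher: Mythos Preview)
Your proposal is correct and follows essentially the same route as the paper: the forward implication is exactly the paper's argument (take \eqref{optimality.condition.first-one}, decompose via \eqref{A.tau.decomp}--\eqref{bracket.decomp}, and use Remark~\ref{rk.comp.cont} to reduce the jump equation), and the converse also proceeds by recovering the first-order condition and then checking concavity. The only cosmetic difference is in the packaging of the second-order step: the paper invokes Lemma~\ref{lemma.derivatives} directly to display $g''(0)$ and splits it into a piece controlled by $xU''(x)+U'(x)\le 0$ and a piece controlled by $\tfrac{d}{dy}\Psi_T<0$, whereas you factor through $V(\ell)=U(e^\ell)$ and $\phi(y)=\ln X^{\pi+y\beta}_T$ to reach the same two ingredients ($V''\le 0$ is the utility condition, $\phi''\le 0$ is your $N\ge 0$ observation); your explicit remark that $\nu(dz)\,ds+\tilde N(d^-s,dz)=N(d^-s,dz)$ is what makes the pathwise sign of $\tfrac{d}{dy}\Psi_T$ transparent, a point the paper leaves implicit.
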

\begin{proof}
It follows by~\eqref{optimality.condition.first-one} and by splitting in absolutely continuous and singular parts via~\eqref{A.tau.decomp} and~\eqref{bracket.decomp}.
The condition~\eqref{gen.opt.pi.jump} is simplified using Remark~\ref{rk.comp.cont}.

%By Theorem \ref{thm.suff.condition} and its corollaries, if $\pi$ is a local maximum, the following process
%\begin{align*}
%    Y_t :=& M_t^\pi-\int_0^t\frac{1}{Z^{\pi}_s}d\langle M^{\pi}_s,Z^{\pi}_s\rangle 
%    -\int_0^t\sigma_sd\widehat W_s 
%    - \int_{\bR_0}\frac{\pi_s\theta^2_s(z)}{1+\pi_s\theta_s(z)}\widehat N(dt,dz)\\
%    &- \int_0^t \frac{\kappa_s} {1+\pi_s\kappa_s}d\left(H_s - \int_0^{s\wedge\tau}\frac{dA^{\tau}_u}{Z_{u-}}\right)\ ,\quad 0\leq t\leq T\ ,
%\end{align*}
%is a $(\bG,\PP)$-local martingale. 
%Using the definition of $M^\pi$, we can rewrite it as 
%\begin{align*}
%    Y_t =& \int_0^t \left( \mu_s-\rho_s-\pi_s\sigma^2_s\,- \int_{\bR_0} \frac{\pi_s\theta^2_s(z)} {1+\pi_s\theta_s(z)} \nu(dz)\right) ds -\int_0^t\frac{1}{Z^{\pi}_s}d\langle M^{\pi}_s,Z^{\pi}_s\rangle \\ &+\int_0^t \sigma_sdA_s
%    +\int_0^{t}\int_{\bR_0}\frac{\theta_s(z)}{1+\pi_s\theta_s(z)}\nu^\bG(dt,dz)+\int_0^t \frac{\bOne_{\{\tau > s\}}}{Z_{s-}}\frac{\kappa_s} {1+\pi_s\kappa_s}dA^{\tau}_s \ ,
%\end{align*}
%and we find that $Y$ is a local martingale and a finite variation process, then we conclude that $Y_t = 0$.
%In particular, we equal to zero each part of the decomposition, i.e.,
%the absolutely continuous, the singular and the purely discontinuous parts. 
%We get the conditions the conditions \eqref{gen.opt.pi.ac} and \eqref{gen.opt.pi.sing} and the continuity of $\langle M^{\pi}_s,Z^{\pi}_s\rangle$.
Moreover, by assuming the conditions \eqref{gen.opt.pi.ac}, \eqref{gen.opt.pi.sing} and \eqref{gen.opt.pi.jump}, 
we recover the first order condition for optimality and we need to evaluate the second derivative. 
Note that $ \frac{d}{dy}\Psi_T(y, \beta, \pi)$ is computed as the right-hand side of the following equation
\begin{align*}
  0 >&  -\int_{0}^{T} \beta^{2}_s \sigma^{2}_s d s-\int_{0}^{T} \int_{\mathbb{R}_{0}} \frac{\beta^{2}_s \theta^2_s( z)}{(1+(\pi_s+y \beta_s) \theta_s(z))^2} \nu(dz)ds \\
&-\int_{0}^{T} \int_{\mathbb{R}_{0}} \frac{\beta^{2}_s \theta^2_s( z)}{(1+(\pi_s+y \beta_s) \theta_s(z))^2} \tilde N\left(d s, d z\right)-\int_{0}^{T} \frac{\beta^{2}_s \kappa^2_s}{(1+(\pi_s+y \beta_s) \kappa_s)^2} d H_s\ .
\end{align*}
Therefore, by Lemma~\ref{lemma.derivatives}
\begin{align*}
    g^{\prime \prime}(0)=&\EE\left[U^{\prime \prime}\left(X^{\pi}_T\right) (X_T^{\pi})^2\Psi^2_T(0,\beta,\pi)\right]\\
    &+\EE\left[U^{\prime}\left(X^{\pi}_T\right) X^{\pi}_T\left(\Psi^2_T(0,\beta,\pi) + \frac{d}{dy}\Psi_T(0,\beta,\pi)\right)\right]< 0\ ,
\end{align*}
for all bounded $\beta \in \mathcal{A}(\bG)$ if
$x U^{\prime \prime}(x)+U^{\prime}(x)\leq 0,$ and $x>0$\ .
\end{proof}
Note that the utility function $U$ appears in \eqref{gen.opt.pi.ac},~\eqref{gen.opt.pi.sing} and~\eqref{gen.opt.pi.jump} 
by the processes~$\alpha^{(i)},\gamma^{(i)}$, 
because they are induced by the change of measure from $\QQ$, which is anticipating by the definition given in~\eqref{Q.measure}.
In the next result we show how \eqref{gen.opt.pi.ac},~\eqref{gen.opt.pi.sing} 
and~\eqref{gen.opt.pi.jump} are simplified in case of $U(x)=\ln x$ recovering the intensity hypothesis as in Theorem~\ref{theo.opt.log.bef}.
\begin{Corollary}
        There exists a local maximum portfolio $\pi$ for the optimization problem~\eqref{optimization.problem}, with $U(x)=\ln x$, 
        if and only if
        it satisfies the condition
    \begin{align}%\label{gen.opt.pi.log.ac}
   0 =&  \left(\sigma_s \alpha_s^{(3)} +  \gamma_s^{(3)} - \frac{\kappa_s}{1+\pi_s\kappa_s}\right) \frac{\bOne_{\{\tau > s\}}}{Z_{s-}}\lambda_s\notag \\
    &-\left(  \mu_s-\rho_s-\pi_s\sigma^2_s - \int_{\bR_0}\frac{\pi_s\theta^2_s(z)}{1+\pi_s\theta_s(z)}\nu(dz) \right)+\sigma_s\alpha_s^{(1)} + \gamma_s^{(1)}\ .  \label{gen.opt.pi.ac.log}
\end{align}
    and the intensity hypothesis \eqref{approach.intensity} holds true on $\tau$, with intensity $\lambda$.
\end{Corollary}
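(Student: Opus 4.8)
The plan is to obtain the corollary by specialising the theorem just proved to $U(x)=\ln x$, for which the auxiliary change of measure degenerates. The first step is to record that $U(x)=\ln x$ gives $xU'(x)\equiv 1$, so the density in~\eqref{Q.measure} is $F^{\pi}_T\equiv 1$; hence $\QQ=\PP$ and $F^{\pi}_t=\EE[F^{\pi}_T\,|\,\cG_t]\equiv 1$ for every $t$. Since the predictable bracket of any process against a constant vanishes, $\langle J,F^{\pi}\rangle\equiv 0$, so every component in~\eqref{bracket.decomp} --- $d\langle J_s,F^{\pi}_s\rangle^{ac}$, $d\langle J_s,F^{\pi}_s\rangle^{\perp}$ and $\Delta\langle J_s,F^{\pi}_s\rangle$ --- is identically zero, and likewise the terms carrying $d\langle J_s,F^{\pi}_s\rangle/F^{\pi}_{s-}$ drop out of~\eqref{aux.comp}. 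This holds regardless of optimality of $\pi$, since it depends only on $U=\ln$.

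For the sufficiency direction I would assume~\eqref{gen.opt.pi.ac.log} and the intensity hypothesis~\eqref{approach.intensity}. The latter is precisely the statement that $dA^{\tau,\perp}_s=0$ and $\Delta A^{\tau}_s=0$ in~\eqref{A.tau.decomp}; together with $\langle J,F^{\pi}\rangle\equiv 0$ this makes~\eqref{gen.opt.pi.sing} and~\eqref{gen.opt.pi.jump} hold trivially (both sides null), while~\eqref{gen.opt.pi.ac} collapses to~\eqref{gen.opt.pi.ac.log} once the $\alpha^{(2)}_s,\gamma^{(2)}_s$ terms are removed. As $xU''(x)+U'(x)=-1/x+1/x=0\le 0$ for all $x>0$, the converse half of the preceding theorem then yields that $\pi$ is a local maximum.

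For the necessity direction I would take $\pi$ to be a local maximum, so that~\eqref{gen.opt.pi.ac}, \eqref{gen.opt.pi.sing} and~\eqref{gen.opt.pi.jump} hold; with $\langle J,F^{\pi}\rangle\equiv 0$ they reduce, respectively, to~\eqref{gen.opt.pi.ac.log}, to the identity $\bigl(\sigma_s\alpha^{(3)}_s+\gamma^{(3)}_s-\tfrac{\kappa_s}{1+\pi_s\kappa_s}\bigr)\tfrac{\bOne_{\{\tau>s\}}}{Z_{s-}}\,dA^{\tau,\perp}_s=0$, and to $\tfrac{\bOne_{\{\tau>s\}}}{Z_{s-}}\,\Delta A^{\tau}_s=0$. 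Since $Z_{s-}>0$ on $[[0,\tau[[$, the last relation gives $\bOne_{\{\tau>s\}}\Delta A^{\tau}_s=0$. For the singular part I would reproduce the argument of Theorem~\ref{theo.opt.log.bef}: rewriting the forward integrals $\int_0^{\cdot}\sigma_s\,d^{-}W_s$ and $\int_0^{\cdot}\int_{\bR_0}\tfrac{\theta_s(z)}{1+\pi_s\theta_s(z)}\,\tilde N(d^{-}s,dz)$ through Theorems~\ref{theo.forw.skor.mall.BM} and~\ref{theo.forw.sko.mall.pois}, and using that $W$ and $\tilde N$ are $\bF$-martingales while the integrands are $\bF$-adapted on $[[0,\tau[[$, one sees that these forward integrals acquire in $\bG$ only an absolutely continuous drift, namely the Malliavin-trace terms; equivalently $\sigma_s\alpha^{(3)}_s=\gamma^{(3)}_s=0$ off the $ds$-component of $A^{\tau}$. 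Because $\kappa_s\neq 0$ and $1+\pi_s\kappa_s>\epsilon^{\pi}>0$ by admissibility, the displayed singular identity then forces $\bOne_{\{\tau>s\}}\,dA^{\tau,\perp}_s=0$; combined with $\bOne_{\{\tau>s\}}\Delta A^{\tau}_s=0$ and Lemma~\ref{H.decomposition}, this shows $A_t=\int_0^{t\wedge\tau}\lambda_s/Z_{s-}\,ds$, i.e. the intensity hypothesis~\eqref{approach.intensity} holds.

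I expect the delicate step to be the last part of the necessity argument: showing that the $\bG$-drifts of the forward integrals driven by $W$ and $\tilde N$ have no singular (nor jump) component before default, so that the coefficient of $dA^{\tau,\perp}_s$ in~\eqref{gen.opt.pi.sing} reduces to the non-vanishing $-\kappa_s/(1+\pi_s\kappa_s)$. This is exactly the place where the Hida--Malliavin distribution calculus of Subsection~\ref{subsec:white.noise} and the computations behind Theorem~\ref{theo.opt.log.bef} are used; everything else is bookkeeping once $\QQ=\PP$ has been observed.
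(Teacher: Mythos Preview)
Your reduction via $\QQ=\PP$ and $\langle J,F^\pi\rangle\equiv 0$ is correct, and with it the sufficiency direction goes through exactly as you describe. The issue is in the necessity direction, at precisely the point you flag as delicate.

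You want to infer from \eqref{gen.opt.pi.sing} that $dA^{\tau,\perp}=0$, and to do so you claim that the $\bG$-drifts of the forward integrals $\int_0^\cdot\sigma_s\,d^-W_s$ and $\int_0^\cdot\int_{\bR_0}\frac{\theta_s(z)}{1+\pi_s\theta_s(z)}\tilde N(d^-s,dz)$ are absolutely continuous, i.e.\ $\alpha^{(3)}$ and $\gamma^{(3)}$ vanish on the singular support of $A^\tau$. But Theorems~\ref{theo.forw.skor.mall.BM} and~\ref{theo.forw.sko.mall.pois} do not say this: they decompose a forward integral as a Skorohod integral plus a Malliavin-trace $ds$-integral, and the Skorohod part is \emph{not} a $(\PP,\bG)$-martingale, so that decomposition does not identify the $\bG$-compensator. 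Indeed, in Example~\ref{ex.argmax} the $\bG$-drift of $W$ carries a genuinely singular piece $dM_s$; the only thing ruling this out here is the very intensity hypothesis you are trying to establish, so the argument as written is circular.

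The paper closes this gap differently: it projects onto $\bF$. Since $\QQ=\PP$, the $(\PP,\bG)$-martingale $M^\pi$ of \eqref{M.martingale} yields an $(\PP,\bF)$-martingale $M^{\pi,\bF}_t:=\EE[M^\pi_t\,|\,\cF_t]$. One then builds the finite-variation process
\[
Y_t=\int_0^t\EE\Bigl[\mu_s-\rho_s-\pi_s\sigma_s^2-\!\int_{\bR_0}\!\tfrac{\pi_s\theta_s^2(z)}{1+\pi_s\theta_s(z)}\nu(dz)\,\Big|\,\cF_t\Bigr]ds+\int_0^t\EE[D_{s^+}\sigma_s\,|\,\cF_t]\,ds+\cdots+\int_0^t\EE\Bigl[\tfrac{\kappa_s}{1+\pi_s\kappa_s}\,\Big|\,\cF_t\Bigr]dA^\tau_s,
\]
where Theorems~\ref{theo.forw.skor.mall.BM} and~\ref{theo.forw.sko.mall.pois} are used only to show $\EE[Y_{t+h}-Y_t\,|\,\cF_t]=\EE[M^{\pi,\bF}_{t+h}-M^{\pi,\bF}_t\,|\,\cF_t]=0$, because the Skorohod contributions have zero \emph{expectation}. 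Thus $Y$ is a finite-variation $\bF$-martingale, hence identically zero; isolating its singular and pure-jump parts (which come \emph{only} from $dA^\tau$, since all Malliavin-trace contributions are $ds$-integrals) forces $\int_0^t\tfrac{\kappa_s}{1+\pi_s\kappa_s}\,dA^{\tau,\perp}_s=0$ and $\sum_{s\le t}\tfrac{\kappa_s}{1+\pi_s\kappa_s}\Delta A^\tau_s=0$, and $\kappa_s\neq 0$ gives the intensity hypothesis. With $A^{\tau,\perp}=\Delta A^\tau=0$ in hand, \eqref{gen.opt.pi.ac}--\eqref{gen.opt.pi.jump} collapse to~\eqref{gen.opt.pi.ac.log}. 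The key difference from your sketch is that the Malliavin machinery is applied at the level of \emph{expectations} (i.e.\ after $\bF$-projection), where the Skorohod term legitimately disappears, rather than at the level of the $\bG$-semimartingale decomposition, where it does not.
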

\begin{proof}
Under logarithmic utility, the process 
$M^{\pi,\bF}_t := \EE[M^\pi_t\vert\cF_t]$, with $0\leq t \leq T$, is a $(\PP,\bF)$-martingale. 
%Under logarithmic utility, the process $M^\pi$ is a $(\PP,\bG)$-martingale. 
We define
\begin{align*}
    Y_t =& \int_0^{t} \EE\left[\mu_s-\rho_s-\pi_s\sigma^2_s - \int_{\bR_0}\frac{\pi_s\theta^2_s(z)}{1+\pi_s\theta_s(z)}\nu(dz) \vert \cF_t\right] ds
    +\int_0^{t} \EE\left[D_{s^+}\sigma_s\vert\cF_t\right] ds \notag  \\
    &
    +\int_0^{t}\int_{\bR_0} \EE\left[ D_{s^+,z}\frac{\theta_s(z)}{1+\pi_s\theta_s(z)}\vert\cF_t\right] \nu(dz)ds
    + \int_0^{t} \EE\left[\frac{\kappa_s}{1+\pi_s\kappa_s}\vert\cF_t\right] dA^\tau_s 
    \ ,\  0\leq t \leq T\ ,
\end{align*}
which is also a $(\PP,\bF)$-martingale, because
$$ \EE[Y_{t+h}-Y_t\vert\cF_t] = \EE[M^{\pi,\bF}_{t+h}-M^{\pi,\bF}_t\vert\cF_t] = 0$$
where we have used Theorems \ref{theo.forw.skor.mall.BM} and \ref{theo.forw.sko.mall.pois}.
$Y$ is a finite variation process, so we conclude that it is a null process.
From the decomposition in its absolutely continuous, singular and pure jump parts we conclude that
$$0 = \int_0^t  \frac{\kappa_s}{1+\pi_s\kappa_s} dA^{\tau,\perp}_s +  \sum_{0\leq s \leq t}  \frac{\kappa_s}{1+\pi_s\kappa_s} \Delta A^{\tau}_s \ ,\quad 0\leq t \leq T\ ,$$
and then $A^{\tau,\perp}=\Delta A=0$ because $\kappa_s\neq 0$.
We finally simplify the conditions 
\eqref{gen.opt.pi.ac},~\eqref{gen.opt.pi.sing} 
and~\eqref{gen.opt.pi.jump}
into \eqref{gen.opt.pi.ac.log}.
\begin{comment}
\textcolor{red}{
 The conditions \eqref{gen.opt.pi.ac}, \eqref{gen.opt.pi.sing} and \eqref{gen.opt.pi.jump} are simplified because $\QQ = \PP$ and $d\langle J_s,F^\pi_s\rangle = 0$. 
 Then \eqref{gen.opt.pi.sing} and \eqref{gen.opt.pi.jump} are written as
 \begin{align*}
     0 &=\left(\sigma_s \alpha^3_s +  \gamma^3_s - \frac{\kappa_s}{1+\pi_s\kappa_s}\right) \frac{\bOne_{\{\tau > s\}}}{Z_{s-}} dA^{\tau,\perp}_s\\
     0 &=\left(\sigma_s \alpha^3_s +  \gamma^3_s - \frac{\kappa_s}{1+\pi_s\kappa_s}\right) \frac{\bOne_{\{\tau > s\}}}{Z_{s-}} \Delta A^{\tau}_s
 \end{align*}
    If we assume that
   $$ \sigma_s \alpha^3_s +  \gamma^3_s - \frac{\kappa_s}{1+\pi_s\kappa_s} = 0\ ,\text{ on } \{\omega:\Delta A^{\tau}_s>0\}\cup\{\omega:dA^{\tau,\perp}_s\neq 0\}$$
   then the solution of $\pi$ is explicit as 
   $$ \pi_s = \frac{1}{\sigma_s\alpha_s^3+\gamma_s^3}-\frac{1}{\kappa_s} $$
   which is a contradiction with the equation \eqref{gen.opt.pi.ac.log}.
   Then we conclude that $dA^{\tau,\perp}_s = \Delta A^{\tau}_s = 0$ and the random time $\tau$ satisfies the intensity approach.}
\end{comment}
\end{proof}
\begin{Remark}
We can achieve a more explicit result under the assumption of Hunt processes.
Indeed, if the risky asset $S$ is a Hunt process, the agent plays with the price filtration~$\bS$ and~$\tau$ is a totally inaccessible $\bS$-stopping time.
By applying \cite{Protter10} we conclude that the default process $H$ has an absolutely continuous compensator $\lambda$ with respect to the Lebesgue measure.
In particular, the necessary and sufficient condition for a local maximum strategy is 
$$     0= \mu_s-\rho_s-\pi_s\sigma^2_s -\int_{\bR_0}\frac{\pi_s\theta^2_s(z)}{1+\pi_s\theta_s(z)}\nu(dz) +\frac{\kappa_s}{1+\pi_s\kappa_s} \frac{\bOne_{ \{\tau > s\}} }{Z_{s-}} \lambda_s\ . $$
\end{Remark}
%\subsection{Examples}
In the following example we consider the same random time as in Example~\ref{ex.aksamit.half-final}.
In this case, we apply the sufficient approach developed in Section~\ref{sec:suff.aproach} and we show how this is connected with the computations in terms of Malliavin derivative of Section~\ref{sec:opt.problem}.
\begin{Example}[Example \ref{ex.aksamit.half-final} revisited]\label{ex.aksamit.rev}
We consider the random time 
$\tau = \sup_t\{ W_t = W_T/2 \}$.
In Theorem 4.13 of \cite{aksamit:tel-01016672}, 
the compensator of $W$ in $\bG$ is computed as the right-hand side of the following equation,
\begin{align*}
    \alpha_s^{(1)} + \alpha_s^{(3)}\frac{dA^\tau_s}{Z_s} =& -\bOne_{\{\tau>s\}}\sqrt{\frac{2}{\pi}}\frac{sgn(W_s)}{Z_s}\frac{W_s^2}{\sqrt{(T-s)^3}}\exp\left(-\frac{W_s^2}{2(T-s)}\right)\\
    &+\bOne_{\{\tau\leq s\}}\left(\frac{W_T-W_s}{T-s} - \frac{1}{\varphi_s}\frac{W_T}{(T-s)}\right)\ ,
\end{align*}
where $\varphi_s=1-\exp\left(\frac{2W_sW_T-W_T^2}{2(T-s)}\right)$.
By assuming that there is no Poisson jumps in the market, i.e., $\theta=0$, 
the optimality condition under logarithmic utility is given by
\begin{equation}\label{eq.opt.aksamit}
    0 =  \left(\sigma_s \alpha_s^{(3)} - \frac{\kappa_s}{1+\pi_s\kappa_s}\right) \frac{\bOne_{\{\tau > s\}}}{Z_{s-}}dA^\tau_s
-(  \mu_s-\rho_s)+\pi_s\sigma^2_s+\sigma_s\alpha_s^{(1)}\ .
\end{equation}
Note that, the condition before default is the same that we got in \eqref{eq.opt.condition.log.bef.aksamit}, so both approaches match.
The optimality condition after default is reduced to 
$$    0 = 
-(  \mu_s-\rho_s)+\pi_s\sigma^2_s+\sigma_s\left(\frac{W_T-W_s}{T-s} - \frac{1}{\varphi_s}\frac{W_T}{(T-s)}\right)\ .$$
\end{Example}
\begin{Remark}\label{Remark.Honest}
    Under the assumption of the existence of a local maximum, we derive the existence of the compensators for $W$ and $\tilde N$ and an explicit condition for this local maximum.
    In \cite{Imkeller02} it is mentioned that whenever $\tau$ is an Honest time, the process 
    $$ \widehat W_t := W_t - \int_0^{\tau\wedge t} \frac{\frac{d}{ds}\langle \widehat A_s,W_s \rangle}{Z_{s-}} ds + \int^T_{\tau\wedge t} \frac{\frac{d}{ds}\langle \widehat A_s,W_s \rangle}{1-Z_{s-}} ds \ ,$$
    is a $(\PP,\bG)$-Brownian motion, 
    where $\widehat A$ is the unique $\bG$-predictable 
    and increasing process such that $Z_{t}+\widehat A_t$ is a $(\PP,\bG)$-martingale. 
    A similar result is obtained in \cite{ElKarouiJeanblancJiao2009} for those random times which satisfy the density hypothesis.
\end{Remark}
\section{Conclusion}\label{sec:conclusion}
In this paper, we extend the solution of the classical optimal portfolio problem to the case when default happens and it is not a stopping time in the natural filtration of the structural components of the processes modeling the risky asset. 
In case the default happens, the price of the risky asset will jump,  but the agent is still able to invest in it.
In this work we succeed in identifying the intensity hypothesis~\eqref{approach.intensity} as the essential one for the existence of a local maximum strategy under logarithmic utility 
and enlarged filtration $\bG$.
In particular, we drop the density assumption so-called, generally known as the Jacod's hypothesis.
The logarithmic utility problem is solved by splitting it into two subproblems, namely before and after default occurrence.
We also deal with a general utility, and for that we assume the existence of a local maximum and
we deduce the semimartingale decomposition of the Brownian motion and the compensated Poisson random measure in the information flow $\bG$ and the conditions that such local maximum must satisfy.
We include many examples to show how to apply the methodology in specific cases. The most relevant one is the analysis of the time at which the Brownian motion reaches its maximum, see Example~\ref{ex.argmax}. At the best of our knowledge this is the first time this random time has been used for default occurrence.
In our arguments we apply the interplay between the forward integral, the Skorohod integral, and the Malliavin derivative, by developing the interrelationships among these operators in the white noise framework. This extension is needed to overcome the limitations of the domain of the operators in the classical setup and this result has an independent mathematical value. 
We have used this to compute the expressions of the semimartingale decomposition of the noises in the enlarged filtration.
The last Example~\ref{ex.aksamit.rev} shows the agreement of the different techniques we have developed in this work. 
\else
\input{content.tex}
\fi\fi

\section*{Funding}
The research leading to these results is within the project STORM (Stochastic for Time-Space Risk Models) funded by the Research Council of Norway (RCN), Project number 274410. 
The first and the third research was partially supported by the Spanish \emph{Ministerio de Economía y Competitividad} grant
%s MTM2017-85618-P (via FEDER funds) and 
PID2020-116694GB-I00.
The first author acknowledges financial support by an FPU Grant (FPU18/01101) of Spanish \emph{Ministerio de Universidades} and the \emph{``Programa Propio de Investigación para la Movilidad de Investigadores"} programme for funding a 3-month research stay at the University of Oslo.

%% BIBLIOGRAPHY
%\printbibliography
\bibliography{ref.bib} 

\appendix
\addcontentsline{toc}{chapter}{Appendix}
\section[Anticipating calculus in the white noise framework]{Anticipating calculus in the white noise framework: Brownian motion}\label{App.Brownian}
\begin{Lemma}\label{forward.brownian.semimart}
Let $\bG$ be a given filtration such that $\bF^{W}\subset\bG$ holds true. 
Suppose that~$W$ is a $\bG$-semimartingale. 
Let $\varphi$ be a $\bG$-predictable process and the integral
$ \int_{0}^{T} \varphi_t d W_t $
exists as a classical It\^o integral.
Then $\varphi$ is forward integrable with respect to $W$ and
$$\int_{0}^{T} \varphi_t d^{-} W_t=\int_{0}^{T} \varphi_t d W_t\ .$$
\end{Lemma}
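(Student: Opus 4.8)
The plan is to reduce the forward‑integral regularization to a classical \Ito integral by a stochastic Fubini argument, and then to pass to the limit via dominated convergence. First I would extend $W$ to $[0,T+1]$ by setting $W_s:=W_T$ for $s>T$, so that $W_{t+\ep}-W_t=\int_0^T\bOne_{\{t<s\le t+\ep\}}\,dW_s$ for every $t\in[0,T]$, and for $\ep>0$ introduce the mollified integrand $\varphi^{\ep}_s:=\ep^{-1}\int_{(s-\ep)^+}^{s}\varphi_u\,du$, $0\le s\le T$. Since $\varphi$ is $\bG$‑predictable, hence $\bG$‑adapted, $\varphi^{\ep}_s$ is $\cG_s$‑measurable, and it is continuous in $s$; therefore $\varphi^{\ep}$ is $\bG$‑predictable and is a legitimate \Ito integrand. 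Applying the stochastic Fubini theorem — whose integrability hypotheses hold for $\ep>0$ fixed, since the kernel $\ep^{-1}\bOne_{\{t<s\le t+\ep\}}$ is bounded and $\int_0^T\varphi_t\,dW_t$ is assumed to exist — gives
\[
\frac1\ep\int_0^T\varphi_t\bigl(W_{t+\ep}-W_t\bigr)\,dt=\int_0^T\varphi^{\ep}_s\,dW_s\ .
\]

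It then remains to show $\int_0^T\varphi^{\ep}_s\,dW_s\to\int_0^T\varphi_s\,dW_s$ as $\ep\to0$. I would split $W=\widehat W+A^W$ into its $\bG$‑local martingale and finite variation parts; the structural point is that quadratic variation does not depend on the filtration, so $\langle\widehat W\rangle_t=\langle W\rangle_t=t$ and the martingale part is controlled by the Lebesgue measure. Since the forward integral is defined for c\`agl\`ad integrands (Definition~\ref{def.integral.forward.Brownian}), I take $\varphi$ c\`agl\`ad; then left‑continuity forces $\varphi^{\ep}_s\to\varphi_{s^-}=\varphi_s$ \emph{for every} $s\in(0,T]$, and $|\varphi^{\ep}_s|\le\varphi^{*}_T:=\sup_{0\le u\le T}|\varphi_u|<\infty$ a.s.\ (a c\`agl\`ad path on a compact interval is bounded). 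Dominated convergence against $ds$ gives $\int_0^T|\varphi^{\ep}_s-\varphi_s|^2\,ds\to0$ a.s., whence $\int_0^T\varphi^{\ep}_s\,d\widehat W_s\to\int_0^T\varphi_s\,d\widehat W_s$ in probability by the dominated convergence theorem for stochastic integrals (localizing at $\tau_n:=\inf\{t:\varphi^{*}_t>n\}$ if one prefers a hands‑on argument); dominated convergence against the finite measure $d|A^W|_s$ gives $\int_0^T\varphi^{\ep}_s\,dA^W_s\to\int_0^T\varphi_s\,dA^W_s$ a.s. Adding the two limits yields the claim: $\varphi$ is (classically) forward integrable and $\int_0^T\varphi_t\,d^-W_t=\int_0^T\varphi_t\,dW_t$; and when $\varphi$ is in addition bounded all objects lie in $L^2(\PP)$ and the convergence holds in $L^2(\PP)$, hence in $(\cS)^*$, since $L^2(\PP)=(\cS)_0$ embeds continuously into $(\cS)^*$.

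The main obstacle is the finite variation term: if $A^W$ is singular with respect to the Lebesgue measure, $ds$‑a.e.\ convergence of $\varphi^{\ep}$ would not suffice, and one genuinely needs the pointwise‑everywhere convergence $\varphi^{\ep}_s\to\varphi_{s^-}$ coming from left‑continuity of $\varphi$; for a merely predictable (non‑c\`agl\`ad) integrand one would first approximate by c\`agl\`ad processes. The remaining points — extending $W$ past $T$, checking predictability of $\varphi^{\ep}$, and verifying the integrability conditions for the stochastic Fubini theorem and the localization — are routine. An alternative to the Fubini step is a two‑parameter approximation: establish the identity first for simple c\`agl\`ad integrands, where it is an elementary finite‑sum computation, and then pass to the limit uniformly in $\ep$; I would keep the Fubini route as the primary one, being more direct.
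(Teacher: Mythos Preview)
The paper does not give its own proof but simply refers to Lemma~8.9 of Di~Nunno--{\O}ksendal--Proske~(2009); your argument---stochastic Fubini to rewrite the regularization as $\int_0^T\varphi^{\ep}_s\,dW_s$, followed by dominated convergence on the local martingale and finite variation parts separately, using that $d\langle \widehat W\rangle_s=ds$ and that left-continuity gives $\varphi^{\ep}_s\to\varphi_s$ pointwise---is precisely the standard proof given in that reference, and is correct.
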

\begin{proof}
We refer to the proof of Lemma 8.9 in \cite{DiNunnoOksendalProske2009}.
\end{proof}
\begin{proof}
[$\mathbf{Proof\, of\, Lemma\ \ref{lem.wick.ord.mall}}$]
\hypertarget{proof.lem.wick.ord.mall}
We consider the representation of $F$ and $W_t-W_s$ with respect to the orthogonal basis $\{H_\alpha\}$
$$ F = \sum_{\alpha\in\cJ}a_{\alpha}H_{\alpha}\ ,\quad W_t-W_s = \sum_{k=1}^\infty \left( \int_s^t e_k(y)dy \right) H_{\varepsilon(k)}\ . $$
%where the second one can be consulted in Example 5.3 of \cite{DiNunnoOksendalProske2009}.
By definition of Wick product we get,
\begin{align*}
    F \diamond (W_t-W_s) =& \sum_{\alpha,k} a_\alpha \left( \int_s^t e_k(y) dy \right) H_{\alpha+\varepsilon(k)} \\
    =& \sum_{\alpha,k}a_\alpha \left( \int_s^t e_k(y) dy \right) h_{\alpha_k + 1}(\theta_k) \prod_{j\neq k} h_{\alpha_j}(\theta_j) \\
    =& \sum_{\alpha,k}a_\alpha \left( \int_s^t e_k(y) dy \right) \left( \theta_k h_{\alpha_k}(\theta_k) - \alpha_k h_{\alpha_k-1}(\theta_k) \right) \prod_{j\neq k} h_{\alpha_j}(\theta_j) \\
    =& \sum_{\alpha,k}a_\alpha \left( \int_s^t e_k(y) dy \right) \theta_k  H_{\alpha} - \sum_{\alpha,k}a_\alpha \alpha_k \left( \int_s^t e_k(y) dy \right) h_{\alpha_k-1}(\theta_k)\prod_{j\neq k} h_{\alpha_j}(\theta_j)\\
    =& \sum_{\alpha,k}a_\alpha \left( \int_s^t e_k(y) dy \right) H_{\varepsilon(k)}  H_{\alpha} - \sum_{\alpha,k}a_\alpha \alpha_k \left( \int_s^t e_k(y) dy \right) H_{\alpha-\varepsilon(k)}\\
    =& F\cdot (W_t- W_s) -\int_s^t D_y F dy\ ,
\end{align*}
where we have used Definition~\ref{def.malliavin.S*}, 
equation \eqref{def.chaos.H}, the fact
\hbox{$H_{\varepsilon(k)} = h_1(\theta_h) = \theta_k$} and the following recurrence relationship of the Hermite polynomials,
$$ h_{n+1}(x) = xh_n(x) - nh_{n-1}(x)\ , $$
with $n=\alpha_k$ and $x=\theta_k$.
\end{proof}
\begin{proof}[$\mathbf{Proof\,of\,Theorem\,\ref{theo.forw.skor.mall.BM}}$]
\hypertarget{proof.theo.forw.skor.mall.BM}
We follow the lines of Theorem 8.18 in \cite{DiNunnoOksendalProske2009}, 
by considering the convergence in the space $(\cS)^*$ via Lemma \ref{lem.wick.ord.mall}, and Theorem~3.7 in \cite{BiaginiOksendalForward} and allowing $D_{t^+}Y_t\in(\cS)^*$ for any $t$.
\begin{align*}
    \int_0^T Y_t d^-W_t &= \lim_{\epsilon\to 0}\int_0^T  Y_t\frac{W_{t+\epsilon}-W_t}{\epsilon} dt\\ 
    &= \lim_{\epsilon\to 0} \frac{1}{\epsilon} \int_0^T \left( Y_t \diamond (W_{t+\epsilon}-W_t) + \int_t^{t+\epsilon} D_s Y_t ds \right) dt \\
    &= \lim_{\epsilon\to 0} \frac{1}{\epsilon} \int_0^T \left( \int_t^{t+\epsilon} Y_t \diamond \bW_s ds + \int_t^{t+\epsilon} D_s Y_t ds \right) dt \\
    &= \lim_{\epsilon\to 0} \frac{1}{\epsilon} \left( \int_0^T \left( \int_{s-\epsilon}^{s}  Y_t dt\right)   \diamond \bW_s ds + \int_0^T\int_{s-\epsilon}^s D_{s} Y_t dt  ds \right)\\
    &= \int_0^T Y_s \diamond \bW_s ds + \int_0^T D_{u^+} Y_u du\ ,
\end{align*}
where in the third equality we applied Example~2.5.13 of \cite{oksendal2010book} in order to conclude
$$ Y_t \diamond (W_{t+\epsilon} - W_t) = \int_t^{t+\epsilon} Y_t \diamond\bW_s ds\ . $$
For a reference of the Fubini Theorem in $(\cS)^*$, see for example Theorem A.9 in~\cite{JUNG20143846}. 
In the fourth equality, 
as $Y\in L^2(dt\times\PP,\bG)$
we applied the following convergence in $L^2(\PP)\subset (\cS)^*$,
$$Y_s^\epsilon:= \frac{1}{\epsilon} \int_{s-\epsilon}^s Y_t dt \to Y_s \,\text{ if }\,\epsilon\to 0 \ ,$$
 see the proof of Theorem 8.18 in \cite{DiNunnoOksendalProske2009}.
In the fifth equality, we first applied Theorem 5.24 of the mentioned reference in order to get
$$ \int_0^T Y_s^\epsilon\diamond\bW_s ds \to \int_0^T Y_s\diamond\bW_s ds \,\text{ if }\,\epsilon\to 0 $$
in $(\cS)^*$.
Then, we use Lemma 6.7 in \cite{DiNunnoOksendalProske2009} to conclude that
$$ \int_0^T 
\left\|\displaystyle D_t \left( \frac{1}{\epsilon}\int_{u-\epsilon}^u Y_s ds \right) \right\|_{-\hat q}^2 dt
%\norm{D_t \left( \frac{1}{\epsilon}\int_{u-\epsilon}^u Y_s ds \right)}_{-\hat q}^2 dt 
\leq 
\left\|\displaystyle  \frac{1}{\epsilon}\int_{u-\epsilon}^u Y_s ds \right\|_{-q}^2
%\norm{\left( \frac{1}{\epsilon}\int_{u-\epsilon}^u Y_s ds \right)}_{-q}^2
\leq \norm{Y_u}^2_{L^2(\PP)}\ ,$$
$\hat q \geq 2q + \frac{1}{\ln 2},\ q\in\bZ^+$ and
$u\in[\epsilon,T]$
and the last convergence is stated in 
$(\cS)_{-\hat q}\subset(\cS)^*$.
\end{proof}
%%%%%%%%%%%%%%%%%%%%%%%%%%%%%%%%%%%%%%%%%%%%%%%%%%%%%%%%%%%%
\section[Anticipating calculus in the white noise framework]{Anticipating calculus in the white noise framework: Poisson random measure}\label{App.Poisson}
Let $f\in L^2((dt\times\nu)^n)$, we introduce the following functional
$$ J^N_n(f) := \int_0^T\int_{\bR_0} \ldots \int_0^{t_2}\int_{\bR_0} f(t_1,z_1,\ldots,t_n,z_n) \tilde N(dt_1,dz_1)\ldots \tilde N(dt_n,dz_n)\ .$$
In the case $f$ is symmetric, we define $ I^N_n(f) := n!J^N_n(f).$ 
We consider the symmetrization $sym(f)$ or $\widehat f$ of the deterministic function $f$ as
$$sym(f):= \widehat f(t_1,z_1,\ldots,t_n,z_n) := \frac{1}{n!}\sum_{\sigma\in S_n} f(t_{\sigma(1)},z_{\sigma(1)},\ldots,t_{\sigma(n)},z_{\sigma(n)})\ , $$
and we extend the definition $ I^N_n(f):= n!J^N_n(\widehat{f} )$ for any $f\in L^2((dt\times\nu)^n)$.
We consider $\{l_m\}_m$ the orthogonalization of the family $\{1,z,z^2,...\}$ and let's define
\begin{align}
    p_{j+1}(z) &:=zl_{j}(z) \left(\int_{\bR_0}l_{j}^2(u) u^2\nu(du) \right)^{-1/2} \ ,\quad (z,j)\in \bR_0\times\bZ^+ \label{polinom.p}\\
    \delta_{z(i,j)}(t,z) &:= e_i(t) p_j(z) \ ,\quad (t,z)\in [0,T]\times\bR_0\ ,\quad (i,j)\in\bZ^+\times\bZ^+\ ,  
\end{align}
where $z(i,j):= j + (i+j-2)(i+j-1)/2$ is a bijective map and $\{e_i\}_i$ are the Hermite functions. 
For any $\alpha\in\cJ$ satisfying $\alpha = (\alpha_1,\ldots,\alpha_j)$ with
\hbox{$\abs{\alpha}:=\alpha_1+\ldots+\alpha_j = m$}, we define
$$ \delta^{\widehat\otimes\alpha}((t_1,z_1),\ldots,(t_m,z_m)) := \delta_1^{\widehat\otimes\alpha_1}\widehat\otimes\ldots\widehat\otimes\delta_j^{\widehat\otimes\alpha_j}((t_1,z_1),\ldots,(t_m,z_m))\ , $$
where $\delta^{\widehat\otimes\alpha}$ denotes the symmetrized tensor product. 
Finally we define
\begin{equation}\label{eq.chaos.pois}
    K_{\alpha} := I^N_{\abs{\alpha}} \left( \delta^{\widehat\otimes\alpha}((t_1,z_1),\ldots,(t_m,z_m)) \right)\ .
\end{equation}
The family $\{K_{\alpha}\}_{\alpha\in\cJ}$ is an orthogonal basis for the space $L^2(\PP)$ in the sense of the following theorem.
\begin{Theorem}
Let X be a $\cF_T^N$-measurable random variable in $L^2(\PP)$,
then there exists a unique sequence $\{a_\alpha\}\subset\bR$ such that
$$ X = \sum_{\alpha\in\cJ} a_{\alpha}K_{\alpha}\ . $$
and the norm can be computed as 
$\norm{X}^2_{L^2(\PP)} = \sum\alpha!a_{\alpha}^2$.
\end{Theorem}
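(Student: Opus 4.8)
The plan is to obtain this expansion from the classical Wiener--It\^o chaos decomposition for the compensated Poisson random measure, and then to re-expand the resulting symmetric kernels in the orthonormal system $\{\delta_{z(i,j)}\}$ constructed above. In this way $K_\alpha$ plays, for the Poisson noise, exactly the role that $H_\alpha$ plays for the Brownian noise in the theorem of Subsection~\ref{subsub.Brownian}, and the argument runs in parallel.

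First I would recall the It\^o chaos representation: every $\cF_T^N$-measurable $X\in L^2(\PP)$ can be written uniquely as $X=\sum_{n\ge0}I^N_n(f_n)$ with symmetric kernels $f_n\in L^2((dt\times\nu)^n)$, and the associated isometry reads $\EE[I^N_n(f)\,I^N_m(g)]=\delta_{nm}\,n!\,\langle\widehat f,\widehat g\rangle_{L^2((dt\times\nu)^n)}$; in particular $\norm{X}^2_{L^2(\PP)}=\sum_{n\ge0}n!\,\norm{f_n}^2$. This is standard, see e.g. the references used for the Brownian analogue. Second I would use that, by construction, $\{\delta_{z(i,j)}\}_{(i,j)\in\bZ^+\times\bZ^+}$ is an orthonormal basis of $L^2([0,T]\times\bR_0,\,dt\times\nu)$: orthonormality because $\{e_i\}$ is orthonormal in $L^2([0,T],dt)$ and, by the normalisation chosen in~\eqref{polinom.p}, $\{p_j\}$ is orthonormal in $L^2(\bR_0,\nu)$; completeness because both factor systems are complete and $z(i,j)$ is a bijection. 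Consequently the symmetrised tensor products $\{\delta^{\widehat\otimes\alpha}:\alpha\in\cJ,\ \abs{\alpha}=n\}$ form an orthogonal basis of the symmetric subspace of $L^2((dt\times\nu)^n)$, and the usual count of coinciding permutations in the symmetrisation yields the norm identity $\norm{\delta^{\widehat\otimes\alpha}}^2_{L^2((dt\times\nu)^{\abs\alpha})}=\alpha!/\abs\alpha!$.

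Then I would expand each kernel as $f_n=\sum_{\abs\alpha=n}a_\alpha\,\delta^{\widehat\otimes\alpha}$, apply the linearity and $L^2$-continuity of $I^N_n$ to get $X=\sum_{\alpha\in\cJ}a_\alpha K_\alpha$ with $K_\alpha=I^N_{\abs\alpha}(\delta^{\widehat\otimes\alpha})$ as in~\eqref{eq.chaos.pois}, and combine the two previous facts to compute norms: distinct chaoses are orthogonal, and within a fixed order $\EE[K_\alpha K_\beta]=\delta_{\alpha\beta}\,\abs\alpha!\cdot\norm{\delta^{\widehat\otimes\alpha}}^2=\delta_{\alpha\beta}\,\alpha!$, whence $\norm{X}^2_{L^2(\PP)}=\sum_n n!\sum_{\abs\alpha=n}a_\alpha^2\,\frac{\alpha!}{n!}=\sum_{\alpha\in\cJ}\alpha!\,a_\alpha^2$; this identity also shows that the partial sums are Cauchy, so the series converges in $L^2(\PP)$. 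Uniqueness of $\{a_\alpha\}$ is inherited from the uniqueness of the chaos decomposition together with the uniqueness of the basis expansion of each $f_n$.

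The two places that require genuine work, rather than bookkeeping, are: (i) the completeness of the polynomial system $\{p_j\}$ in $L^2(\bR_0,\nu)$ --- this has to be read off from~\eqref{polinom.p} together with the standing hypotheses on $\nu$ ($\sigma$-finiteness and $\int_{\bR_0}(1\wedge z^2)\nu(dz)<\infty$), and one must be careful about the moment behaviour of $\nu$ near infinity so that the Gram--Schmidt constants in~\eqref{polinom.p} are finite and the polynomial span is dense; and (ii) the norm identity $\norm{\delta^{\widehat\otimes\alpha}}^2=\alpha!/\abs\alpha!$, which is the permutation count already mentioned. I would settle (i) by the Gram--Schmidt/density argument and (ii) by the counting argument, quoting \cite{DiNunnoOksendalProske2009} and \cite{oksendal2010book} for the chaos isometry itself; apart from these, everything is the Poisson transcription of the Brownian argument.
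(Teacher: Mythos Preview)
The paper does not supply its own proof of this theorem; it is stated as a standard result, parallel to the Brownian case where Theorems~2.2.3 and~2.2.4 of \cite{oksendal2010book} are invoked, and the Poisson analogue is treated in Chapter~13 of \cite{DiNunnoOksendalProske2009}. Your argument---Wiener--It\^o chaos decomposition followed by expansion of each symmetric kernel $f_n$ in the orthogonal family $\{\delta^{\widehat\otimes\alpha}:\abs\alpha=n\}$, together with the permutation count $\norm{\delta^{\widehat\otimes\alpha}}^2=\alpha!/\abs\alpha!$---is precisely the standard proof in those references, so there is nothing to compare.

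Your flagged point (i) is a genuine and well-spotted caveat: the paper's standing hypothesis $\int_{\bR_0}(1\wedge z^2)\,\nu(dz)<\infty$ does \emph{not} by itself ensure that all moments of $\nu$ are finite, nor that polynomials are dense in $L^2(\bR_0,\nu)$; the construction~\eqref{polinom.p} tacitly requires additional moment conditions on $\nu$, which the cited monographs either assume explicitly or sidestep by working with an abstract orthonormal basis of $L^2(\bR_0,\nu)$ in place of polynomials. This is a gap in the paper's setup rather than in your proof strategy.
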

%We introduce the notation  $$ \left(2\bN\right)^{\alpha} = \prod_{j=1}^{m}(2j)^{\alpha_j}\ ,\quad \alpha = (\alpha_1,...,\alpha_m)\in\cJ\ . $$
\begin{Definition}
\begin{itemize}
    \item Let $f = \sum_{\alpha\in\cJ}a_{\alpha}K_{\alpha}\in L^2(\PP)$ be a random variable, 
    we say that $f$ belongs to the Hida test function Hilbert space $(\cS)_{k}$, for $k\in\bR$, if 
    $$ \norm{f}^2_{k} := \sum_{\alpha\in\cJ} \alpha !\, a_{\alpha}^2 \prod_{j=1}^{m}(2j)^{k\alpha_j} < \infty\ . $$
    We define the Hida test function space
    $ (\cS) = \bigcap_{k\in\bR}(\cS)_{k} $
    equipped with the projective topology.
    \item Let $F = \sum_{\alpha\in\cJ}a_{\alpha}K_{\alpha}$ be a formal sum, 
    we say that $F$ belongs to the Hida distribution Hilbert space $(\cS)_{-q}$, for $q\in\bR$, if 
    $$ \norm{F}^2_{-q} := \sum_{\alpha\in\cJ} \frac{\alpha !\, a_{\alpha}^2}{\prod_{j=1}^{m}(2j)^{q\alpha_j}} < \infty\ . $$
    We define the Hida distribution space
    $ (\cS)^* = \bigcup_{q\in\bR}(\cS)_{-q} $
    equipped with the inductive topology, i.e, convergence is studied with $\norm{\cdot}_q$ for some $q\in\bR$.
\end{itemize}
\end{Definition}
The generalized expectation and the generalized conditional expectation are defined analogously to the Brownian case.
\begin{Definition}\label{def.white.noise.pois}
The white noise process $\tilde{\bN}(t,z)$ of $\tilde N(dt, dz)$ is defined by the expansion
\begin{equation}
    \tilde{\bN}(t,z):= \sum_{i,j\in\bZ^+} e_i(t)p_j(z) K_{\varepsilon(i,j)}\ ,
\end{equation}
where $\varepsilon(i,j):=\varepsilon(z(i,j))$.
\end{Definition}
As before, it can be proved that $\tilde{\bN}(t,z)$ is well-defined as an object in $(\cS)^*$ and it satisfies
$$ \tilde{\bN}(t,z) = \frac{\tilde N(dt, dz)}{dt\times\nu(dz)}\ . $$
\begin{Definition}\label{def.wick.pois}
Let $F = \sum_{\alpha\in\cJ}a_\alpha K_\alpha,$  $G = \sum_{\beta\in\cJ}b_\beta K_\beta\in (\cS)^*$, then we define the Wick product as
$$ F\diamond G := \sum_{\alpha,\beta\in\cJ}a_\alpha b_\beta K_{\alpha+\beta}\ . $$
\end{Definition}
\begin{Lemma}\label{forward.poisson.semimart}
Let $\bG$ be a given filtration such that $\bF^{N}\subset\bG$. 
Suppose that 
$\int_{0}^{t} \int_{\mathbb{R}_{0}} z \tilde{N}(d s, d z)$, with $0\leq t\leq T$, 
is a semimartingale with respect to $\bG$.
Let $\theta=\theta_t(z)$, with $(t,z)\in [0,T]\times\mathbb{R}_{0}$, be a $\bG$-predictable process and the integral 
$\int_{0}^{T} \int_{\mathbb{R}_{0}} \theta_t(z) \tilde N(d t, d z)$ 
exists as a classical Itô integral.
Then $\theta$ is forward integrable with respect to $\tilde{N}$ and
$$
\int_{0}^{T} \int_{\mathbb{R}_{0}} \theta_t(z) \tilde{N}\left(d^{-} t, d z\right)=\int_{0}^{T} \int_{\mathbb{R}_{0}} \theta_t(z) \tilde{N}(d t, d z)\ .
$$
\end{Lemma}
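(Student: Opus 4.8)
The plan is to mirror the argument of Lemma~\ref{forward.brownian.semimart} (see the proof of Lemma~8.9 in \cite{DiNunnoOksendalProske2009}), adapted to the truncation-based definition of the forward integral with respect to $\tilde N$ given in Definition~\ref{def.integral.forward.Poisson}. Fix the increasing sequence of compact sets $\{U_m\}$ with $\nu(U_m)<\infty$ and $U_m\uparrow\bR_0$ that appears in that definition. The random measure $N$ always admits a $\bG$-predictable compensator $\nu^{\bG}$ (Theorem~1.8 in \cite{JacodShiryaev03}), and the hypothesis that $\int_0^{\cdot}\int_{\bR_0}z\,\tilde N(ds,dz)$ is a $\bG$-semimartingale ensures the $\bG$-canonical decomposition $\tilde N(dt,dz)=(N-\nu^{\bG})(dt,dz)+(\nu^{\bG}-\nu)(dt,dz)$, with $N-\nu^{\bG}$ a $(\PP,\bG)$-local martingale measure and $\nu^{\bG}-\nu$ of finite variation; this is exactly the structure that gives meaning to the classical $\bG$-Itô integral in the statement.

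First I would analyse a single truncated term. For fixed $m$, since $\nu(U_m)<\infty$ the restriction of $N$ to $[0,T]\times U_m$ has only finitely many atoms $\PP$-a.s., so $\int_0^T\int_{U_m}\theta_t(z)\,N(dt,dz)$ is a finite pathwise sum over the jump times of $N$ in $U_m$ and $\int_0^T\int_{U_m}\theta_t(z)\,\nu(dz)\,dt$ is an ordinary Lebesgue integral. Splitting the $\bG$-Itô integral $\int_0^T\int_{U_m}\theta_t(z)\,\tilde N(dt,dz)$ along the canonical decomposition, the stochastic integral against $N-\nu^{\bG}$ is, for a finite-activity jump measure, the pathwise difference $\int_0^T\int_{U_m}\theta_t(z)\,N(dt,dz)-\int_0^T\int_{U_m}\theta_t(z)\,\nu^{\bG}(dt,dz)$, and its second term cancels the $\nu^{\bG}$-contribution coming from the $(\nu^{\bG}-\nu)$-part; hence $\int_0^T\int_{U_m}\theta_t(z)\,\tilde N(dt,dz)=\int_0^T\int_{U_m}\theta_t(z)\,N(dt,dz)-\int_0^T\int_{U_m}\theta_t(z)\,\nu(dz)\,dt$, which is precisely the $m$-th approximating term of Definition~\ref{def.integral.forward.Poisson}. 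So for every $m$ that term coincides with the corresponding classical truncated integral.

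Then I would let $m\to\infty$. By hypothesis the classical Itô integral $\int_0^T\int_{\bR_0}\theta_t(z)\,\tilde N(dt,dz)$ exists, and by the construction of the stochastic integral with respect to a compensated measure with $\sigma$-finite $\nu$ — the Itô isometry for the $N-\nu^{\bG}$-part together with dominated convergence for the $(\nu^{\bG}-\nu)$-part, using the integrability of $\theta$ implied by the existence of the full integral — the truncated integrals $\int_0^T\int_{U_m}\theta_t(z)\,\tilde N(dt,dz)$ converge to it in $L^2(\PP)$. Since $L^2(\PP)\subset(\cS)^*$ continuously, the same convergence holds in $(\cS)^*$, so the limit in Definition~\ref{def.integral.forward.Poisson} exists in $(\cS)^*$: this says that $\theta$ is forward integrable with respect to $\tilde N$, and its value is the classical Itô integral, which is the asserted identity.

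The conceptual point to be careful about is the $m$-wise identification in the second step, namely that the $\bG$-compensator $\nu^{\bG}$ — which differs from $\nu$ whenever $\bG$ is a genuine enlargement — drops out, so that the truncated classical integral reproduces the approximating term of the forward integral rather than a $\nu^{\bG}$-dependent variant. The only remaining technical care is to carry out the $m\to\infty$ limit in $L^2(\PP)$ (and not merely in probability), so that it transfers to the topology of $(\cS)^*$. Neither point is deep: because Definition~\ref{def.integral.forward.Poisson} already localises the jump sizes, the statement is in fact more elementary than its Brownian counterpart, where a genuine Russo--Vallois-type limit has to be matched with the Itô integral by a stochastic Fubini argument.
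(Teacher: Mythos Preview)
The paper does not actually supply a proof for this lemma: it is stated in Appendix~\ref{App.Poisson} and immediately followed by other material, with no proof environment at all. By analogy with Lemma~\ref{forward.brownian.semimart}, whose proof is simply a pointer to Lemma~8.9 in \cite{DiNunnoOksendalProske2009}, the intended justification is presumably the Poisson counterpart in the same monograph (the forward integral for $\tilde N$ is developed in Chapter~15 there). So there is no ``paper's own proof'' to compare against beyond that implicit reference.

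Your argument is a correct and self-contained unpacking of that reference. The identification at fixed $m$ is the heart of the matter and you handle it properly: on $U_m$ the random measure has finite activity, the $\bG$-It\^o integral against $\tilde N=(N-\nu^{\bG})+(\nu^{\bG}-\nu)$ reduces to the pathwise sum over jumps minus $\int\theta\,\nu(dz)\,dt$, and this is exactly the $m$-th approximant in Definition~\ref{def.integral.forward.Poisson}. Your observation that the $\nu^{\bG}$-contributions cancel is the key point and is right. The only place to tighten is the passage to the limit: you invoke $L^2(\PP)$-convergence of the truncated integrals, which requires $\theta\in L^2(dt\times\nu\times\PP)$ rather than merely ``the classical It\^o integral exists''. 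In the paper's setting this integrability is assumed throughout (see~\eqref{market.coef.integr} and~\eqref{hyp1}), so the step is justified in context, but as stated the lemma's hypothesis is slightly weaker than what your argument uses; a line making this explicit, or weakening to convergence in probability and then noting that the limit is an honest random variable in $L^2(\PP)\subset(\cS)^*$, would close the gap cleanly.
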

Before giving the proof of Lemma \ref{lem.wick.ord.mall.pois}, we introduce some notions of the \emph{contraction kernel operation} that it will appear in the next statement. 
See the beginning of the Section 3 of \cite{Tudor2007} for more details.
Let \hbox{$f\in L^2((dt\times\nu)^n)$} and \hbox{$g\in L^2((dt\times\nu))$}, then we define
\begin{align*}
    \left(f\ast_1^0g\right)\left(t_1,z_1,\ldots,t_m,z_m\right) &:=  f\left(t_1,z_1,\ldots,t_m,z_m\right)g(t_1,z_1)\\
    \left(f\ast_1^1g\right)\left(t_1,z_1,\ldots,t_m,z_m\right) &:=  \int_0^T\int_{\bR_0} f\left(t_1,z_1,\ldots,t_m,z_m\right)g(t_1,z_1) \nu(dz_1)dt_1\ .
\end{align*}
Sometimes, if the variables $\left(t_1,z_1,\ldots,t_m,z_m\right)$ are clear we will omit them. 
In the next lemma, we will compute the symmetrization of some kernel operations that we will need.
\begin{Lemma}\label{A.lemm.kernel}
Let $\alpha=(\alpha_1,\ldots,\alpha_n)\in\cJ$ with $\abs{\alpha} = m$ and we consider the functions
$f = \delta^{\widehat\otimes\alpha}$ and 
$g = \bOne_{\{(u,z)\in\Lambda\}}$, with $\Lambda:=[s,t]\times U_m$.
Then, 
\begin{align*}
    sym\left(f\ast_1^0g\right) &= \sum_{i,j} \frac{\alpha_{z(i,j)}}{\abs{\alpha}}\left( \delta^{\widehat\otimes(\alpha-\varepsilon(i,j))}\widehat\otimes\, e_ip_j \bOne_{\Lambda}\right) \\
    sym\left(f\ast_1^1g\right) &= \sum_{i,j} \frac{\alpha_{z(i,j)}}{\abs{\alpha}}\int_\Lambda e_i(t_1)p_j(z_1) \nu(z_1)dt_1\, I_{\abs{\alpha}-1} \left( \delta^{\widehat\otimes(\alpha-\varepsilon(i,j))}\right)
\end{align*}
\end{Lemma}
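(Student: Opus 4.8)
The plan is to reduce both identities to the defining representation of $\delta^{\widehat\otimes\alpha}$ as a symmetrized tensor product of the basis functions $\delta_{z(i,j)}=e_ip_j$. Fixing an enumeration $\phi_1,\dots,\phi_m$ of its factors, so that exactly $\alpha_{z(i,j)}$ of the $\phi_l$ equal $\delta_{z(i,j)}$ and $m=\abs\alpha$, I would start from
\[ \delta^{\widehat\otimes\alpha}\big((t_1,z_1),\dots,(t_m,z_m)\big)=\frac{1}{m!}\sum_{\sigma\in S_m}\prod_{l=1}^m\phi_l\big((t_{\sigma(l)},z_{\sigma(l)})\big)\ . \]
Both kernel operations then become sums over permutations that can be regrouped according to which factor carries the first variable.

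For $sym(f\ast_1^0g)$, I would first use that $f=\delta^{\widehat\otimes\alpha}$ is symmetric, so $f\ast_1^0g$ equals $f$ times the one‑variable factor $g$ evaluated at the first argument, and hence its symmetrization is $\frac1m\,\delta^{\widehat\otimes\alpha}\sum_{k=1}^m g((t_k,z_k))$: only the factor $g$ is permuted, and each of the $m$ variables lands in its argument for $(m-1)!$ of the $m!$ permutations. Substituting the permutation‑sum representation and, in each term, isolating the single factor $\phi_{l_0}$ to which $g$ attaches, then summing over the $m$ positions $l_0$ and grouping them by the value of $\phi_{l_0}$, produces $\sum_{i,j}\alpha_{z(i,j)}\,\delta^{\widehat\otimes(\alpha-\varepsilon(i,j))}\widehat\otimes\big(g\,e_ip_j\big)$; with $g=\bOne_{\Lambda}$ this is the first identity after dividing by $\abs\alpha$.

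For $sym(f\ast_1^1g)$, I would note that the contraction is already symmetric in the remaining $m-1$ variables, so the outer symmetrization is vacuous, and then integrate the permutation‑sum representation of $\delta^{\widehat\otimes\alpha}$ over the first variable. In every term that variable sits in exactly one factor $\phi_{\sigma^{-1}(1)}$, so the integral against $g=\bOne_{\Lambda}$ yields the scalar $\int_\Lambda e_i(t_1)p_j(z_1)\,\nu(dz_1)\,dt_1$ whenever that factor is $\delta_{z(i,j)}$, multiplied by the symmetrized tensor product of the remaining factors; the same counting of positions and grouping by factor type as before gives the second identity (to which one then applies $I_{\abs\alpha-1}$).

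The calculations are routine once this set‑up is fixed; the only delicate point is bookkeeping the combinatorial constants — checking that the $(m-1)!$ permutations fixing a chosen position, together with the $\alpha_{z(i,j)}$ positions sharing the same basis function and the $\frac1{m!}$ in the definition of $\widehat\otimes$, combine to exactly $\frac{\alpha_{z(i,j)}}{\abs\alpha}$. I would sanity‑check this on the trivial case $\alpha=\varepsilon(i,j)$ (so $m=1$), where both sides collapse to tautologies, and against the conventions for $\widehat\otimes$ and $\delta^{\widehat\otimes\alpha}$ fixed earlier in the appendix.
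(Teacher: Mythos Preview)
Your proposal is correct and follows essentially the same route as the paper: both arguments exploit the symmetry of $\delta^{\widehat\otimes\alpha}$ to reduce the symmetrization of $f\ast_1^0 g$ to $\frac{1}{\abs\alpha}\sum_k \delta^{\widehat\otimes\alpha}\bOne_{\{(t_k,z_k)\in\Lambda\}}$, then regroup by which basis factor $\delta_{z(i,j)}$ sits at the distinguished position (with multiplicity $\alpha_{z(i,j)}$), and for $f\ast_1^1 g$ both note the contraction is already symmetric and integrate out the first variable using the same expansion. Your explicit permutation bookkeeping and the sanity check on $\alpha=\varepsilon(i,j)$ are a welcome addition, but the underlying argument is the paper's.
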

\begin{proof}
\begin{align*}
    sym\left(f\ast_1^0g\right) &= sym\left( \delta^{\widehat\otimes\alpha} \left(t_1,z_1,\ldots,t_m,z_m\right) \bOne_{\{(t_1,z_1)\in\Lambda\}}\right)\\
    &= \sum_{k=1}^m \frac{1}{\abs{\alpha}} \delta^{\widehat\otimes\alpha} \left(t_1,z_1,\ldots,t_m,z_m\right) \bOne_{\{(t_k,z_k)\in\Lambda\}}\\
    &= \sum_{i,j}\frac{\alpha_{z(i,j)}}{\abs{\alpha}} \left(\delta^{\widehat\otimes\alpha-\varepsilon(i,j)}\left(t_2,z_2,\ldots,t_m,z_m\right)\widehat\otimes\left( e_i(t_1)p_j(z_1)\bOne_{\{(t_1,z_1)\in\Lambda\}}\right)\right)\ ,
\end{align*}
where the first step follows by the definition of the symmetrization and the fact that $\delta^{\widehat\otimes\alpha}$ is indeed a symmetric function.
The second step follows by the definition of $\alpha$ and the fact that each $\delta_{z(i,j)}$ appears $\alpha_{z(i,j)}$ times.
Although we fix $(t_1,z_1)$ in the last expression, by the symmetric property, 
we can write any pair $(t_k,z_k)$.
\begin{align*}
    f\ast_1^1g &=  \int_0^T\int_{\bR_0} \delta^{\widehat\otimes\alpha}\left(t_1,z_1,\ldots,t_m,z_m\right)\bOne_{ \{(t_1,z_1)\in\Lambda \}} \,\nu(dz_1)dt_1\\
     &=  \int_{\Lambda} \delta^{\widehat\otimes\alpha}\left(t_1,z_1,\ldots,t_m,z_m\right) \,\nu(dz_1)dt_1\\
     &= \sum_{i,j}\frac{\alpha_{z(i,j)}}{\abs{\alpha}} \int_{\Lambda}\left( \delta^{\widehat\otimes\alpha-\varepsilon(i,j)}\left(t_2,z_2,\ldots,t_m,z_m\right)
     \otimes e_i(t_1)p_j(z_1) \right)\nu(dz_1)dt_1\\
     &= \sum_{i,j}\frac{\alpha_{z(i,j)}}{\abs{\alpha}} \int_{\Lambda}e_i(t_1) p_j(z_1) \nu(dz_1) dt_1\ I_{\abs{\alpha}-1}\left(\delta^{\widehat\otimes\alpha-\varepsilon(i,j)}\right)\ ,
\end{align*}
where we have developed the symmetrization in the variables $(t_1,z_1)$ in order to compute the integral in the set $\Lambda$.
Note that $f\ast_1^1g = sym(f\ast_1^1g)$ in this particular case.
\end{proof}
\begin{Lemma}\label{lem.wick.ord.mall.pois}
Let $F\in L^2(\PP)$ be a $\cF_T^N$-measurable random variable such that $D_{t,z} F$ is~($\nu\times dt$)-integrable in $(\cS)^*$
and let $U_m\subset\bR_0$ be a compact set, then
\begin{align}
     F \cdot (\tilde N(t,U_m)-\tilde N(s,U_m)) =& \int_s^t \int_{U_m} \left(F+D_{u,z} F\right)\diamond \tilde\bN(u,z)\,\nu(dz)du \notag\\
     &+\int_s^t\int_{U_m} D_{u,z}F\,\nu(dz) du\ .
\end{align}
\end{Lemma}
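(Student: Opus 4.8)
The plan is to mimic the Brownian proof of Lemma~\ref{lem.wick.ord.mall} given in Appendix~\ref{App.Brownian}, but replacing the algebra of Hermite polynomials (which is not available in the Poisson setting) with the \emph{contraction kernel} machinery prepared in Lemmas~\ref{A.lemm.kernel}. First I would expand $F$ in the chaos basis associated to $\tilde N$, writing $F = \sum_{\alpha\in\cJ}a_{\alpha}K_{\alpha}$ with $K_{\alpha}=I^N_{\abs{\alpha}}(\delta^{\widehat\otimes\alpha})$, and write the increment as $\tilde N(t,U_m)-\tilde N(s,U_m)=I^N_1(g)$ with $g=\bOne_{\{(u,z)\in\Lambda\}}$ and $\Lambda=[s,t]\times U_m$. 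The key computational input is the product formula for multiple integrals with respect to $\tilde N$: for $f\in L^2((dt\times\nu)^n)$ and $g\in L^2(dt\times\nu)$ one has
$$I^N_n(f)\cdot I^N_1(g) = I^N_{n+1}(f\,\widehat\otimes\, g) + n\,I^N_{n-1}\bigl(sym(f\ast^1_1 g)\bigr) + n\,I^N_n\bigl(sym(f\ast^0_1 g)\bigr)\,,$$
which is exactly the formula treated in Section~3 of \cite{Tudor2007}. This is the Poisson analogue of the recurrence $h_{n+1}(x)=xh_n(x)-nh_{n-1}(x)$ used in the Brownian case; the extra middle term (the "$\ast^1_1$" contraction) reflects the non-Gaussian structure and is precisely where the Malliavin derivative and the Wick correction will come from.

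Next I would apply this product formula termwise with $n=\abs{\alpha}$, $f=\delta^{\widehat\otimes\alpha}$, and substitute the symmetrizations computed in Lemma~\ref{A.lemm.kernel}. Summing over $\alpha$, the leading term $\sum_\alpha a_\alpha I^N_{\abs{\alpha}+1}(\delta^{\widehat\otimes\alpha}\widehat\otimes g)$ should be identified with the Skorohod-type object $\int_s^t\int_{U_m} F\diamond\tilde\bN(u,z)\,\nu(dz)du$ by unwinding Definitions~\ref{def.white.noise.pois}, \ref{def.wick.pois} and \ref{def.sko.pois} — concretely, $F\diamond\tilde\bN(u,z)=\sum_{\alpha,i,j}a_\alpha e_i(u)p_j(z)K_{\alpha+\varepsilon(i,j)}$, and integrating the $e_ip_j$ against $\bOne_\Lambda$ reproduces $\delta^{\widehat\otimes\alpha}\widehat\otimes g$. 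The "$\ast^0_1$" term, by the first identity of Lemma~\ref{A.lemm.kernel}, becomes $\sum_{\alpha,i,j}a_\alpha\alpha_{z(i,j)}I^N_{\abs{\alpha}}\bigl(\delta^{\widehat\otimes(\alpha-\varepsilon(i,j))}\widehat\otimes e_ip_j\bOne_\Lambda\bigr)$, which is $\int_s^t\int_{U_m} (D_{u,z}F)\diamond\tilde\bN(u,z)\,\nu(dz)du$ after recognizing $D_{u,z}F=\sum_{\alpha,i,j}a_\alpha\alpha_{z(i,j)}e_i(u)p_j(z)K_{\alpha-\varepsilon(i,j)}$ from Definition~\ref{def.Mallavin.pois}. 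Finally the "$\ast^1_1$" term, by the second identity of Lemma~\ref{A.lemm.kernel}, collapses the integration variable and yields $\sum_{\alpha,i,j}a_\alpha\alpha_{z(i,j)}\bigl(\int_\Lambda e_i(u)p_j(z)\nu(dz)du\bigr)K_{\alpha-\varepsilon(i,j)}$, which is exactly $\int_s^t\int_{U_m}D_{u,z}F\,\nu(dz)du$. Collecting the three pieces gives $F\cdot(\tilde N(t,U_m)-\tilde N(s,U_m)) = \int_s^t\int_{U_m}F\diamond\tilde\bN\,\nu(dz)du + \int_s^t\int_{U_m}(D_{u,z}F)\diamond\tilde\bN\,\nu(dz)du + \int_s^t\int_{U_m}D_{u,z}F\,\nu(dz)du$, which is the claimed identity after merging the first two Wick integrals into $\int_s^t\int_{U_m}(F+D_{u,z}F)\diamond\tilde\bN\,\nu(dz)du$.

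The main obstacle I anticipate is not the algebra but the \emph{convergence and well-definedness in $(\cS)^*$}: the termwise manipulations above are formal sums, and one must check that each of the three resulting series converges in some $(\cS)_{-q}$, which is where the hypotheses "$D_{t,z}F$ is $(\nu\times dt)$-integrable in $(\cS)^*$" and $F\in L^2(\PP)$ enter. I would handle this by the same norm-estimate strategy as in the Brownian appendix — bounding $\norm{\cdot}_{-q}$ of the finite-$m$ truncations uniformly and invoking the inclusion $(\cS)\subset L^2(\PP)\subset(\cS)^*$ together with an approximation lemma adapted from Lemma~6.7 of \cite{DiNunnoOksendalProske2009} — and exploiting the compactness of $U_m$ (so that $\nu(U_m)<\infty$ and all the $e_ip_j\bOne_\Lambda$ are genuine $L^2(dt\times\nu)$ functions). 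A secondary technical point is justifying the interchange of the sum over $\alpha$ with the contraction operations; this is routine once the norms are controlled, but it should be stated explicitly rather than glossed over.
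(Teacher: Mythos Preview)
Your proposal is correct and follows essentially the same route as the paper's proof: chaos expansion of $F$ and of the increment, application of the product formula from \cite{Tudor2007}, and identification of the three resulting terms via Lemma~\ref{A.lemm.kernel} as the Wick integral of $F$, the Wick integral of $D_{u,z}F$, and the ordinary integral of $D_{u,z}F$ respectively. The only cosmetic difference is that for the $\ast_1^0$ term the paper inserts an intermediate conditional-expectation splitting $K_{\alpha-\varepsilon(i,j)} = (K_{\alpha-\varepsilon(i,j)} - K_{\alpha-\varepsilon(i,j)}^{(u)}) + K_{\alpha-\varepsilon(i,j)}^{(u)}$ to pass from the iterated-integral representation to the Wick form, whereas you identify it directly from the definitions; your route is slightly cleaner and the paper does not address the $(\cS)^*$-convergence points you raise.
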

\begin{proof} %[Proof of Lemma \ref{lem.wick.ord.mall.pois}]
We consider the representation of $F$ with respect to the orthogonal basis $\{K_\alpha\}$ and $\tilde N(t,U_m)-\tilde N(s,U_m)$ as an one-dimensional integral:
$$ F = \sum_{\alpha\in\cJ}a_{\alpha}K_{\alpha} =  \sum_{\alpha\in\cJ}a_{\alpha}I^N_{\abs{\alpha}}(\delta^{\widehat\otimes\alpha}) \ ,\quad 
\tilde N(t,U_m)-\tilde N(s,U_m) =I^N_1\left(\bOne_{\{(u,z)\in\Lambda\}}\right)\ , $$
%where the second one can be consulted in Example 5.3 of \cite{DiNunnoOksendalProske2009}.
where we short the notation by defining $\Lambda:=[s,t]\times U_m$.
We consider also the conditional expectation of the chaos $K_\alpha$ as
$ K_{\alpha}^{(t)} := \EE[K_{\alpha}|\cF_t^N] .$
We compute the following usual product as they are $L^2(\PP)$-random variables:
\begin{align*}
F\cdot (\tilde N(t,U_m)-\tilde N(s,U_m)) &= \sum_{\alpha} a_\alpha K_{\alpha} I^N_1(\bOne_{\{(u,z)\in\Lambda\}})\ .
\end{align*}
We apply the following product formula appearing in Theorem 3.1 of~\cite{Tudor2007} for the Poisson case:
\begin{align*}
  K_{\alpha} I^N_1(\bOne_{\{(u,z)\in\Lambda\}}) =&
I^N_{\abs{\alpha}+1}\left(\delta^{\widehat\otimes\alpha}\widehat\otimes\bOne_{\{(u,z)\in\Lambda\}}\right) +
\abs{\alpha}I^N_{\abs{\alpha}}\left(sym\left(\delta^{\widehat\otimes\alpha}\ast_1^0\bOne_{\{(u,z)\in\Lambda\}}\right)\right)\\ 
&+ \abs{\alpha} I^N_{\abs{\alpha}-1}\left(sym\left(\delta^{\widehat\otimes\alpha}\ast_1^1 \bOne_{\{(u,z)\in\Lambda\}}\right)\right) \ .
\end{align*}
By the definition of Wick product it follows
$$ \sum_{\alpha}a_{\alpha} I^N_{\abs{\alpha}+1}\left(\delta^{\widehat\otimes\alpha}\widehat\otimes\bOne_{\{(u,z)\in\Lambda\}}\right) = F\diamond (\tilde N(t,U_m) - \tilde N(s,U_m)) = \int_\Lambda F\diamond \tilde \bN(u,z) \,\nu(dz)du\ , $$
where we have used the Wick product according to Remark~3.12 of~\cite{DiNunnoOksendalProske2004}.
For the second and the third terms we apply Lemma~\ref{A.lemm.kernel}.
In particular,
\begin{align*}
    &\sum_{\alpha}a_{\alpha}\abs{\alpha}I^N_{\abs{\alpha}}\left(\delta^{\widehat\otimes\alpha} \ast_1^0 \bOne_{\{(u,z)\in\Lambda\}}\right) = \sum_{\alpha,i,j} a_{\alpha} \alpha_{z(i,j)} I^N_{\abs{\alpha}}\left( \delta^{\widehat\otimes(\alpha-\varepsilon(i,j))} \widehat\otimes  \left(e_ip_j\bOne_{\{\cdot\in\Lambda\}}\right)\right) \\
    =& \sum_{\alpha,i,j}a_{\alpha}\alpha_{z(i,j)}\int_{\Lambda} e_i(u)p_j(z) \left(K_{\alpha-\varepsilon(i,j)}- K_{\alpha-\varepsilon(i,j)}^{(u)} +  K_{\alpha-\varepsilon(i,j)}^{(u)}\right) \tilde N(du,dz)\\
    =& \sum_{\alpha,i,j}a_{\alpha}\alpha_{z(i,j)}\int_{\Lambda} e_i(u)p_j(z) \left(K_{\alpha-\varepsilon(i,j)}- K_{\alpha-\varepsilon(i,j)}^{(u)} \right) \diamond \tilde\bN(u,z)\,\nu(dz)du \\
    &+ \sum_{\alpha,i,j}a_{\alpha}\alpha_{z(i,j)}\int_{\Lambda} e_i(u)p_j(z) K_{\alpha-\varepsilon(i,j)}^{(u)}  \diamond \tilde\bN(u,z)\,\nu(dz)du \\
    =&\int_{\Lambda}  \left(D_{u,z}F-\EE[D_{u,z}F|\cF^N_u] \right) \diamond \tilde\bN(u,z)\,\nu(dz)du + \int_{\Lambda} \EE[D_{u,z}F|\cF^N_u]  \diamond \tilde\bN(u,z)\,\nu(dz)du\\
    =& \int_{\Lambda} D_{u,z}F \diamond \tilde\bN(u,z)\,\nu(dz)du
    %&= \sum_{\alpha,i,j}a_{\alpha}\alpha_{z(i,j)}\int_{\Lambda} e_i(u)p_j(z) I_{\abs{\alpha}-1}\left( \delta^{\widehat\otimes(\alpha-\varepsilon(i,j))}(\cdot,u,z)\right) \tilde N(du,dz)\\
    %&= \int_{\Lambda}\EE[D_{u,z}F|\cF^N_u] \diamond \tilde \bN(u,z)\,\nu(dz)du
    \end{align*}
where we have split the term $K_{\alpha-\varepsilon(i,j)}$ in order to get the Skorohod integral.
For the third one, we get
\begin{align*}
    \sum_{\alpha}a_{\alpha} \abs{\alpha} I_{\abs{\alpha}-1}\left(\delta^{\widehat\otimes\alpha} \ast_1^1 \bOne_{\{(u,z)\in\Lambda\}}\right) &= \sum_{\alpha,i,j} a_{\alpha} \alpha_{z(i,j)} \int_\Lambda e_i(u)p_j(z)\nu(dz)du\, I_{\abs{\alpha}-1}\left(  \delta^{\widehat\otimes(\alpha-\varepsilon(i,j))} \right)\\
   &=\int_\Lambda D_{u,z} F\, \nu(dz)du\ .
\end{align*}
\end{proof}
\end{document}